\newcounter{magicrownumbers}
\newcommand\rownumber{\stepcounter{magicrownumbers}\arabic{magicrownumbers}}
\newcommand\del{\partial}
\newcommand\bi{\begin{itemize}}
\newcommand\ei{\end{itemize}}
\newcommand\bea{\begin{eqnarray}}
\newcommand\eea{\end{eqnarray}}
\newcommand\be{\begin{equation}}
\newcommand\ee{\end{equation}}
\newcommand\ZZ{\hbox{Z\kern-.4emZ}}
\newcommand\sZZ{\hbox{\sevenfont Z\kern-.4emZ}}
\newcommand{\eref}[1]{Eq.\,(\ref{#1})}
\newcommand{\Comment}[1]{{}}
\newcommand{\mfg}{{\mathfrak g}}
\newcommand{\mfh}{{\mathfrak h}}
\def\IB{\relax{\rm I\kern-.18em B}}
\def\IC{{\mathbb C}}
\def\ID{\relax{\rm I\kern-.18em D}}
\def\IE{\relax{\rm I\kern-.18em E}}
\def\IF{\relax{\rm I\kern-.18em F}}
\def\II{\relax{\rm I\kern-.18em I}}
\def\IZ{\mathbb{Z}}
\def\Id{\relax{1\kern-.32em 1}}
\def\IG{\relax\hbox{$\inbar\kern-.3em{\rm G}$}}
\def\IR{\relax{\rm I\kern-.18em R}}
\NewDocumentCommand{\CFT}{m}{${^{(#1)}}$CFT}
\NewDocumentCommand{\A}{o o}{%
  \IfNoValueTF{#1}
    {\mathsf{A}}
    {\IfNoValueTF{#2}
       {\mathsf{A}_{#1}}
       {\mathsf{A}_{#1}^{#2}}
    }
}
\NewDocumentCommand{\B}{o o}{%
  \IfNoValueTF{#1}
    {\mathsf{B}}
    {\IfNoValueTF{#2}
       {\mathsf{B}_{#1}}
       {\mathsf{B}_{#1}^{#2}}
    }
}
\NewDocumentCommand{\C}{o o}{%
  \IfNoValueTF{#1}
    {\mathsf{C}}
    {\IfNoValueTF{#2}
       {\mathsf{C}_{#1}}
       {\mathsf{C}_{#1}^{#2}}
    }
}
\NewDocumentCommand{\D}{o o}{%
  \IfNoValueTF{#1}
    {\mathsf{D}}
    {\IfNoValueTF{#2}
       {\mathsf{D}_{#1}}
       {\mathsf{D}_{#1}^{#2}}
    }
}
\NewDocumentCommand{\E}{o o}{%
  \IfNoValueTF{#1}
    {\mathsf{E}}
    {\IfNoValueTF{#2}
       {\mathsf{E}_{#1}}
       {\mathsf{E}_{#1}^{#2}}
    }
}
\NewDocumentCommand{\F}{o o}{%
  \IfNoValueTF{#1}
    {\mathsf{F}}
    {\IfNoValueTF{#2}
       {\mathsf{F}_{#1}}
       {\mathsf{F}_{#1}^{#2}}
    }
}
\NewDocumentCommand{\G}{o o}{%
  \IfNoValueTF{#1}
    {\mathsf{G}}
    {\IfNoValueTF{#2}
       {\mathsf{G}_{#1}}
       {\mathsf{G}_{#1}^{#2}}
    }
}
\def\SL{\textsl{SL}}
\def\X{X}
\NewDocumentCommand{\SC}{o o}{
    \IfNoValueTF{#1}
        {\mathbf{S}}
        {\IfNoValueTF{#2}
            {\mathbf{S}(#1)}
            {\mathbf{S}(#1)\big/#2}
        }
    }
\NewDocumentCommand{\EC}{o o}{
    \IfNoValueTF{#1}
        {\mathbf{E}}
        {\IfNoValueTF{#2}
            {\mathbf{E}(#1)}
            {\mathbf{E}(#1)\big/#2}
        }
    }
\newtheorem{theorem}{Theorem}
\newtheorem{lemma}[theorem]{Lemma}
\newtheorem{proposition}[theorem]{Proposition}
\newtheorem{definition}[theorem]{Definition}
\newtheorem{example}[theorem]{Example}
\renewcommand{\Im}{{\rm Im\,}}
\title{\boldmath Classification of Unitary RCFTs with Two Primaries and Central Charge Less Than $25$}
\author[a]{Sunil Mukhi}
\author[b,c]{and Brandon C. Rayhaun}
\affiliation[a]{Indian Institute of Science Education and Research,\\ Homi Bhabha Rd, Pashan, Pune 411 008, India}
\affiliation[b]{Stanford Institute for Theoretical Physics, Stanford University\\ 382 Via Pueblo, Stanford, CA 94305, USA}
\affiliation[c]{Yang Institute for Theoretical Physics, Stony Brook University\\ 100 Nicolls Rd, Stony Brook, NY 11794, USA}
\emailAdd{sunil.mukhi@gmail.com}
\emailAdd{brandonrayhaun@gmail.com}
\abstract{We classify all two-dimensional, unitary, rational conformal field theories with two primaries, central charge $c<25$, and arbitrary Wronskian index. In mathematical parlance, we classify all strongly regular vertex operator algebras (VOAs) with central charge $c<25$ and exactly two simple modules.  We find that any such theory is either one of the Mathur--Mukhi--Sen (MMS) theories $\A[1,1]$, $\G[2,1]$, $\F[4,1]$, or $\E[7,1]$, or it is a coset of a chiral algebra with one primary operator (also known as a holomorphic VOA) by such an MMS theory. By leveraging existing results on the classification of holomorphic VOAs, we are able to explicitly enumerate all of the aforementioned cosets and compute their characters. This leads to 123 theories, most of which are new. We emphasize that our work is a bona fide classification of RCFTs, not just of characters. Our techniques are general, and we argue that they offer a promising strategy for classifying chiral algebras with low central charge beyond two primaries.}
\preprint{}
\keywords{Conformal and W Symmetry, Field Theories in Lower Dimensions, Integrable Field Theories}
\begin{document}

\maketitle

\section{Introduction}

Conformal field theories in two dimensions --- and the associated mathematical structures of chiral algebras, or vertex operator algebras (VOAs) --- are ubiquitous in physics and mathematics.  Motivations for their study are numerous: they universally describe statistical systems near second order phase transitions, they arise on the world-sheets of string theories, and they encode information about protected sectors in higher-dimensional quantum field theories, to name a few. The present work takes the perspective of the conformal bootstrap, and is concerned with the classification of a special class of such theories: the unitary, rational conformal field theories (RCFTs). 

The classification of RCFTs in its full glory is of course a very infinite problem; one is forced to introduce further qualifiers in order to make progress. It is customary to organize the classification in terms of the central charge $c$, typically starting with low values, and making one's way up. This program is met with striking success  when $0\leq c < 1$, where it is known that every compact unitary CFT (i.e.\ every CFT with a discrete spectrum) is a minimal model \cite{belavin1984infinite,kac1978highest,feigin1984verma,friedan1984conformal,cappelli1987modular,cappelli1987ade}, all of which are rational. At $c=1$, it is believed (but to our knowledge, unproven) that every unitary, compact conformal field theory is\footnote{Both the compact boson and its $\mathbb{Z}_2$ orbifold are irrational CFTs when $R^2$ is an irrational number; every other known $c=1$ theory is rational.} \footnote{Throughout this paper, an ``orbifold of a CFT by a finite group $G$'' refers to a theory obtained by restricting to $G$-invariant states and then adding in twisted-sector states. In other contexts, one might define the orbifold of a chiral algebra/VOA to simply be the $G$-invariant subalgebra.}
\begin{enumerate}[label=\arabic*)]
    \item a compact free boson (parametrized by a radius $R$, up to T-duality),
    \item a $\mathbb{Z}_2$-orbifold thereof, or
    \item one of three isolated theories obtained by orbifolding the compact boson at its self-dual radius by an exceptional finite subgroup \cite{mckay37graphs} of the diagonal of $\textsl{SU}(2)\times \textsl{SU}(2)$  \cite{dijkgraaf1988c,Ginsparg:1987eb,bardakci1988string,ginsparg1988applied}.
\end{enumerate} 
Beyond $c=1$, it is not known even qualitatively what the space of rational conformal field theories looks like.\footnote{There are of course many classes of theories known (lattice VOAs, affine Kac--Moody algebras, W-algebras, etc.), several tools for manipulating them (orbifolds, cosets, tensor products, extensions, etc.), and even lore which claims that applying these tools to the known classes of theories is sufficient to reach all of theory space. Still, we appear to be far from an explicit picture.} 

Another parameter which one can introduce to regain control is the number of primaries\footnote{Throughout this paper, we will use the word ``primary'' as a short-hand in place of ``multiplet of primary operators,'' so that the number of primaries $p$ is the same as the number of irreducible representations of the maximal chiral algebra, which may be larger than Virasoro. We hasten to add that this is a slightly different quantity than the number of (linearly-independent) characters in the theory, which is less than or equal to the number of primaries, because several irreducible representations of the chiral algebra may have the same character, as happens for example with complex conjugate pairs of primaries.\label{primchar}} in the theory, which we call $p$. In particular, one can fix $p$ to be some small value and begin to classify theories with $p$ primaries (which we henceforth denote \CFT{p}s), again from low central charge moving up. Conformal field theories with a finite number of primaries are called rational in the physics literature, and their chiral algebras form a structure which mathematicians call a strongly regular vertex operator algebra, with $p$ the number of simple modules. 

The simplest case is $p=1$, where the sole primary is the identity. In physics, such theories are often called meromorphic CFTs (though see Footnote \ref{footnote:meromorphic} for our slightly differing usage of the term), and in math they are known as holomorphic VOAs. It is known on general grounds that \CFT{1}s can only occur when the central charge is a multiple of 8. The unique chiral algebra at $c=8$ is $\E[8,1]$, the current algebra (or affine VOA) based on $\E[8]$ at level 1. At $c=16$, there are two possibilities: $\E[8,1]\otimes \E[8,1]$ and an extension of $\D[16,1]$ which we will write as $\mathsf{D}_{16,1}^+$. The classification at $c=24$ was first put forward in classic work of Schellekens \cite{schellekens1993meromorphicc}, where he proposed that there are 71 such theories. In particular, in addition to the monster CFT \cite{frenkel1984natural,frenkel1989vertex} and the Leech lattice VOA, there are 69 chiral algebras, each of which can be obtained as a conformal extension of the current algebra generated by its space of dimension one operators, which form a semi-simple Lie algebra. These theories have since been elucidated and the classification put on firmer mathematical footing in a number of papers \cite{van2020construction,van2020dimension,van2021schellekens,hohn2020systematic,moller2019dimension,betsumiya2022automorphism,lam201971,Moller:2021clp,lam2019inertia,Hohn:2017dsm}, with the notable exception of proving that the monster CFT is the unique $c=24$ theory without any continuous symmetry, which remains an open problem.\footnote{Fortunately, none of our results rely on this conjecture being true.} Beyond $c=24$, the landscape of \CFT{1}s becomes unwieldy quite quickly. For example, every positive-definite, even, unimodular lattice of rank $c$ defines a \CFT{1} with central charge $c$, and so there are at least as many theories as there are lattices. The number of such lattices of rank 32, for instance, has been bounded below using (a refinement \cite{king2003mass} of) the Smith--Minkowski--Siegel mass formula, which shows that there are more than a billion. It therefore seems unlikely that we will be witnessing anything resembling a complete classification of \CFT{1}s at $c=32$ or higher any time soon. 

Instead, it is fruitful to ask how far one can get in the next simplest case, $p=2$. Thus, there is one primary besides the identity, whose conformal dimension we  denote by $h$. This program was initiated in \cite{mathur1988classification} and taken further in \cite{mathur1989reconstruction,naculich1989differential,hampapura20162d,gaberdiel2016cosets,tener2017classification,harvey2018hecke,chandra2019towards,chandra2019curiosities,harvey2020galois,grady2020classification}.\footnote{Actually, the cited references often operated in the more general setting of two-character theories, as opposed to two-primary theories. The difference between these two notions has been explained in Footnote \ref{primchar}.} For reasons which will be reviewed below, the classification of \CFT{p}s is often organized by a non-negative integer $\ell$ known as the Wronskian index, rather than by the central charge $c$. It is defined roughly as the number of zeros of the Wronskian determinant of the characters of the \CFT{p} (see \S\ref{subsec:MLDE} for more details), and in the special case that $p=2$, it satisfies the relation
\begin{align}
    \ell = \frac{c}{2}-6h+1.
\end{align}
With two primaries, it is known \cite{naculich1989differential} that $\ell$ is even. If a theory has $\ell=0,$ $2,$ or $4$, then it is said to be \emph{extremal} \cite{tener2017classification} in the sense that its non-identity primary has as large a conformal dimension as is permitted by modularity given its central charge. The (finitely many) physically consistent extremal characters have been completely classified in the above references, and, as our results will confirm,  nearly all \CFT{2}s with these characters have been successfully identified, save for two. It is natural to ask whether one can say anything about the space of \CFT{2}s with $\ell\geq 6$, i.e.\ about \CFT{2}s which are not extremal. 

Leaving the safety of low Wronskian index presents qualitatively new challenges. For example, when $\ell\geq 6$, there can be infinitely many vector-valued modular forms with positive coefficients at a fixed central charge  \cite{harvey2018hecke,chandra2019towards}, and most of them will not correspond to physical theories. The situation is rather analogous to the one encountered in the classification of meromorphic conformal field theories with $c=24$ \cite{schellekens1993meromorphicc}. In that setting, modularity alone fixes the torus partition function only up to a free parameter $\mathcal{N}$, which can be thought of as the number of dimension-one currents,
\begin{align}
    Z(\tau) = j(\tau)-744+\mathcal{N} = q^{-1}+\mathcal{N} + 196884q + \cdots
\end{align}
where $j(\tau)$ is the Klein $j$-invariant. Additional insight was needed to determine which values of $\mathcal{N}$ support conformal field theories, and which do not. Moreover, in many cases it was found that there are multiple theories for a given value of $\mathcal{N}$.

In the present work, we initiate the perilous excursion into the territory of non-extremal \CFT{2}s with $\ell\geq 6$, and therefore confront similar challenges to the ones described in the previous paragraph in the context of two-primary theories. To define a classification question whose answer yields a finite number of theories, we revert back to organizing \CFT{2}s by their central charge instead of by their Wronskian index; in particular, we will see that our methods allow us to obtain a complete classification of two-primary theories with $c<25$, most of which will turn out to have $\ell\geq 6$.\footnote{In fact, inspection of Appendix \ref{app:theories} reveals that the only value of $\ell\geq 6$ which is realized for theories with $c<25$ is $\ell=8$.}

Our approach is to relate the classification of \CFT{p}s with $p>1$ to the classification of \CFT{1}s which, as we have reviewed above, has been completed up to central charge $24$. Specifically, we make heavy use of the following idea \cite{hohn1996selbstduale,Moore:1989vd,schellekens1990simple,creutzig2022gluing,booker2012commutative}, which we state for simplicity in the context of $p=2$ theories, though a version holds when $p>2$ as well (see \S\ref{subsec:coset}).\footnote{The statement of the idea in general involves appealing to modular tensor categories, however $p=2$ theories are simpler in that their modular tensor categories are completely characterized by their associated modular representation, so we can avoid category theory and state things more simply in terms of modular representations. This simplification is not limited to $p=2$, but we restrict to this case anyway because it is the focus of this work.} Let $\mathcal{V}$ be the chiral algebra of a \CFT{2} with central charge $c$, whose characters transform under a modular representation $\varrho:\SL_2(\IZ)\to \textsl{GL}_2(\IC)$, i.e.\ whose modular data is $\mathcal{S}=\varrho(S)$ and $\mathcal{T}=\varrho(T)$, where $S=(\begin{smallmatrix}0&-1\\ 1&0\end{smallmatrix})$ and $T=(\begin{smallmatrix} 1 & 1 \\ 0 & 1\end{smallmatrix})$. Instead of repeatedly saying these words, we will simply say that $\mathcal{V}$ belongs to the genus $(c,\varrho)$, which we sometimes abbreviate to $\mathcal{V}\in(c,\varrho)$. Now, consider a representation of the form 
\begin{align}
    \tilde{\varrho}= \omega^n\varrho^\ast
\end{align} 
for some $n\in \{0,1,2\}$, where $\omega:\SL_2(\IZ)\to\mathbb{C}^\ast$ is defined by the assignments $\omega(S)=1$ and $\omega(T) = e^{-2\pi i \frac{n}{3}}$;  we refer to $\tilde{\varrho}$ as a \emph{conjugate twist} of $\varrho$. Then, we have the following claim, which characterizes all theories belonging to the genus $(\tilde{c},\tilde{\varrho})$.\\

\noindent \textbf{The gluing principle.} \emph{Any} theory $\widetilde{\mathcal{V}}$ in the genus $(\tilde{c},\tilde{\varrho})$ can be glued to $\mathcal{V}$, which belongs to the genus $(c,\varrho)$, to produce a $p=1$ chiral algebra $\mathcal{A}$ in the genus $(C,\omega^n)$, with 
\begin{align}
   \mathcal{A} = ( \mathcal{V}\otimes \widetilde{\mathcal{V}})\oplus( \mathcal{V}_1 \otimes \widetilde{\mathcal{V}}_1), \ \ \ \ \  C=c+\tilde{c}, \ \ \ \ \ \omega^n = \varrho^T\tilde{\varrho},
\end{align}
where $\mathcal{V}_1$ and $\widetilde{\mathcal{V}}_1$ are the non-vacuum modules of $\mathcal{V}$ and $\widetilde{\mathcal{V}}$ respectively.\footnote{A version of this statement holds for $p>2$ as well. However, it is not sufficient to just check that the modular data of $\mathcal{V}$ and $\widetilde{\mathcal{V}}$ are conjugate twists of one another, one must further check that their full modular tensor categories are dual \cite{frohlich2006correspondences}. If this is the case, then the tensor product $\mathcal{V}\otimes \widetilde{\mathcal{V}}$ can be extended to a chiral algebra $\mathcal{A}$ with one primary.} Conversely,  any  $\widetilde{\mathcal{V}}$ in the genus $(\tilde{c},\tilde\varrho)$ can be obtained as a coset \cite{goddard1985virasoro,goddard1986unitary,frenkel1992vertex} (see \cite{gaberdiel2016cosets} for directly relevant computations) of some $\mathcal{A}$ in $(C,\omega^n)$ by $\mathcal{V}$ in $(c,\varrho)$, i.e.
\begin{align}\label{eqn:coset}
\begin{split}
    \widetilde{\mathcal{V}} \cong \mathcal{A}\big/\mathcal{V}, \ \ \ \ \  \tilde{c} = C-c, \ \ \ \ \ \tilde{\varrho}=\omega^n\varrho^\ast.
\end{split}
\end{align}
Thus, knowledge of even a single theory $\mathcal{V}\in (c,\varrho)$, along with knowledge about how it embeds into every $\mathcal{A}\in (C,\omega^n)$, is sufficient to classify all theories  $\widetilde{\mathcal{V}}\in (\tilde{c},\tilde\varrho)$: one simply enumerates all inequivalent cosets of the form Eq.\ \eqref{eqn:coset}.\\

\noindent The power of this idea when $p$ is small derives from the fact that the physically consistent low-dimensional modular representations $\varrho$ have been classified \cite{rowell2009classification,ng2022reconstruction} (see also \cite{tuba2001representations,mason20082}). In light of this, an effective strategy for the classification of \CFT{p}s with $p$ small and fixed, and $c<24$, is as follows.
\begin{enumerate}[label=\arabic*)]
    \item Find a minimal set of $p$-dimensional seed representations $\{\varrho\}$ with the property that all other physically consistent $p$-dimensional representations can be obtained by repeatedly taking conjugate twists of the seeds. In the case that $p=2$, we will use $\varrho_{\A}$ and $\varrho_{\G}$ as the seeds, which correspond to the theories $\A[1,1]$ and $\G[2,1]$, respectively. (See Eq.\ \eqref{eqn:modularrepAG}.)
    \item Using independent methods, classify all theories which have the seeds as their modular data, in the range of central charge desired. Fortunately, it has already been established \cite{mason2018vertex} that $\A[1,1]$ and $\G[2,1]$ are the unique theories with $c<24$ and modular representations $\varrho_{\A}$ and $\varrho_{\G}$, respectively.
    \item Iteratively classify theories corresponding to the conjugate twists of the seeds using the gluing principle, supplemented with the classification of  \CFT{1}s with central charge $C\leq 24$.
\end{enumerate}
In the case of two-primary theories, it turns out that the classification of modular representations alone is enough to rule out the existence of \CFT{2}s with central charge in the range $24\leq c < 25$. This is why we are able to claim that we have classified theories slightly beyond the $c=24$ threshold attainable using the gluing principle. Our $p=2$ analysis is greatly simplified by the fact that the only cosets that the gluing principle requires us to take feature an affine Kac--Moody algebra in the denominator. Because this is so, the enumeration of the requisite cosets essentially reduces to a problem of computing equivalence classes of embeddings of certain simple Lie algebras, which we are able to carry out successfully.

The full list of theories appears in the tables of Appendix \ref{app:theories}. Mathematicians should understand this as a list of strongly regular VOAs with exactly two simple modules. In total, we find 123 \CFT{2}s with $c<25$. In each case, we are able to compute the basic data of the CFT: the central charge $c$, the conformal dimension $h$ of the non-identity primary, its degeneracy $d$, the Wronskian index $\ell$, the characters $\chi_i(\tau)$, and the Kac--Moody symmetry algebra.\footnote{One could attempt to complete this data with additional information such as primary correlation functions on the sphere and torus, following the methods of \cite{mathur1989differential,mathur1989reconstruction,mukhi2018universal}, though we will not do so here.} Every theory we construct has a non-zero number $\mathcal{N}$ of dimension one currents, but in contrast with the $c\leq 24$ \CFT{1}s with $\mathcal{N}\neq 0$, not every theory we obtain is pure Sugawara:\footnote{A chiral algebra with Kac--Moody subalgebra $(\mathfrak{g}_1)_{k_1}\otimes\cdots\otimes (\mathfrak{g}_n)_{k_n}$ is said to be pure Sugawara if its central charge $c$ is equal to $\sum_i \frac{k_i \mathrm{dim}(\mathfrak{g}_i)}{k_i+g_i}$ where $g_i$ is the dual Coxeter number. That is, $\mathcal{V}$ is pure Sugawara if it is a conformal extension of its Kac--Moody subalgebra.} in particular, we will see that some of them have a stress tensor which receives contributions from one or more $c<1$ minimal models. Thus, contrary to expectations, one could not have hoped to have achieved our results by imitating the techniques of \cite{schellekens1993meromorphicc}.  \\

\subsection{Organization}

The organization of this article is as follows.

In \S\ref{subsec:notation}, we provide a glossary of the various notations that we use throughout the paper.

In \S\ref{sec:background}, we review background material. We start in \S\ref{subsec:RCFTbasics} by giving a lightning quick summary of the basics of rational conformal field theory. In \S\ref{subsec:meromorphic}, we summarize some relevant facts about theories whose only primary operator is the identity. In \S\ref{subsec:coset}, we describe how to glue chiral algebras together, as well as how to take cosets. In \S\ref{subsec:MLDE}, we establish some character-theoretic methods.

We then move on to the classification in \S\ref{sec:classification}. We start by classifying physically admissible two-dimensional modular representations in \S\ref{subsec:modulardata}. We then classify physically admissible characters which transform under these modular representations in \S\ref{subsec:admissiblecharacters}. Finally, we describe how to enumerate the full list of two-primary theories with $c<25$ in \S\ref{subsec:enumeration}.

We conclude by explaining a few directions for future research in \S\ref{sec:future}.

Appendix \ref{app:theories} contains tables which list the theories one obtains using the arguments of \S\ref{sec:classification}. These constitute our principal results. Appendix \ref{app:reps} contains data related to two-dimensional representations of the modular group. Appendix \ref{app:liecurrentalgebras} describes technical facts about Lie algebras and current algebras. Finally, Appendix \ref{app:penumbral} explains the relationship between characters of two-primary theories and holomorphic/skew-holomorphic Jacobi forms, with an eye towards penumbral moonshine.

\subsection{Notation guide}\label{subsec:notation}

\begin{footnotesize}

\begin{list}{}{
	\itemsep -1pt
	\labelwidth 23ex
	\leftmargin 13ex	
	}

\item[$\mathds{1}$] The identity primary of an RCFT.
\item[$\Phi$] The non-identity primary of an RCFT with two primaries.

\item[$\mathcal{A}$] A chiral algebra, usually of a \CFT{1}.

\item[${^{(p)}}$CFT] A unitary RCFT whose maximal chiral algebra has $p$ irreducible representations. 

\item[$c$] The central charge of a chiral algebra.

\item[$\mathcal{C}$] The charge conjugation matrix, $\mathcal{C}=\mathcal{S}^2$.

\item[$\mathscr{C}$] A unitary modular tensor category.

\item[$\mathscr{C}^{(p)}$] A unitary modular tensor category with $p$ isomorphism classes of simple objects.

\item[$(c,\mathscr{C})$] The genus of chiral algebras with central charge $c$ and representation category given by $\mathscr{C}$.

\item[$(c,\varrho)$] The set chiral algebras with central charge $c$ whose modular data is given by the representation $\varrho:\SL_2(\IZ)\to\textsl{GL}_p(\IC)$. 

\item[$\mathsf{D}_{16,1}^+$] The unique extension of $\D[16,1]$ with one irreducible module.

\item[$\mathbb{H}$] The upper half-plane, $\mathbb{H}=\{\tau\in\mathbb{C}\mid\Im(\tau)>0\}$.

\item[$h$] The conformal dimension of the non-identity primary $\Phi$ of an RCFT, or the ``conformal dimension" of a vector-valued modular form which does not necessarily have a CFT interpretation. 

\item[$\hat h$] The conformal dimension modulo 1 of the non-identity primary $\Phi$, or the ``conformal dimension'' modulo 1 of a vector-valued modular form which does not necessarily have a CFT interpretation. 
	
\item[$h_c^{(\ell)}$] The ``conformal dimension'' of $X^{(\ell)}_{c,N}(\tau)$, determined by $X^{(\ell)}_{c,N,1}(\tau) = N q^{-\frac{c}{24}+h_c^{(\ell)}}+O(q^{-\frac{c}{24}+h_c^{(\ell)}+1}).$

\item[$\mathsf{L}_c$] A simple Virasoro VOA with central charge $c$.

\item[$\ell$] The Wronskian index of a vector-valued modular form $\chi$. 

\item[$M_{i\bar\jmath}$] The multiplicity matrix of an RCFT, see Eq.\ \eqref{eqn:multiplicitymatrix}.

\item[$m$] The conductor of a chiral algebra $\mathcal{V}$. 

\item[$\mathcal{N}$] The number of currents (i.e.\ dimension one operators) in a chiral algebra $\mathcal{V}$.
	
\item[$N^{(\ell)}_c$] The two values of $N$ which make the matrix $\mathcal{S}_{c,N}^{(\ell)}$ symmetric are $N=\pm N_c^{(\ell)}$. 

\item[$\mathscr{N}_{ij}^k$] The fusion coefficients of an RCFT, or more generally the numbers computed from a modular $\mathcal{S}$-matrix through the Verlinde formula, Eq.\ \eqref{eqn:verlindeformula}, regardless of whether $\mathcal{S}$ is realized in a physical RCFT.

\item[$p$] The number of irreducible representations of a chiral algebra $\mathcal{V}$. We mostly consider the cases $p=1,2$ in this work.

\item 
[$\mathbf{S}(\mathsf{X})$] The unique meromorphic $c=24$ conformal field theory with $\mathsf{X}$ as its current algebra (see \cite{schellekens1993meromorphicc}). ``S'' is for Schellekens.

\item[$\varrho$] A representation of $\SL_2(\IZ)$, typically of dimension 2. 

\item
[$\mathcal{S},\mathcal{T}$] Generic matrices which generate a representation of $\SL_2(\IZ)$ through the assignment $\varrho(S)=\mathcal{S}$ and $\varrho(T)=\mathcal{T}$.

\item[$\varrho_{\mathrm{U}}$, $\varrho_{\mathrm{V}}$, $\varrho_{\mathrm{W}}$] Given a two-dimensional representation $\varrho$, we define $\varrho_{\mathrm{U}} = U\varrho U^\dagger$, $\varrho_{\mathrm{V}}=V\varrho V^\dagger$, and $ \varrho_{\mathrm{W}} = W\varrho W^\dagger$, where $U,V,W$ are defined in Eq.\ \eqref{eqn:UVW}.

\item
[$\mathcal{S}_{\mathrm{U}},\mathcal{T}_{\mathrm{U}},\dots$] The values of $\varrho_{\mathrm{U}},\dots$ on the generators. 

\item[$\varrho_I$] A specially chosen representative of the $I$th equivalence class of two-dimensional $\SL_2(\IZ)$ representations with finite image, where $I=1,\dots, 54$. It is defined on the generators $S$ and $T$ of $\SL_2(\IZ)$ through Eq.\ \eqref{eqn:SMatrix} and Eq.\ \eqref{eqn:TMatrix} using the data from Table \ref{tab:oddweight} and Table \ref{tab:evenweight}.

\item
[$\mathcal{S}_I,\mathcal{T}_I$] The values of $\varrho_I$ on the generators. 

\item[$\varrho^{(\ell)}_{c,N}$] The representation defined by the matrices $\mathcal{S}^{(\ell)}_{c,N}$, $\mathcal{T}_{c,N}^{(\ell)}$. 

\item 
[$\mathcal{S}^{(\ell)}_{c,N}, \mathcal{T}^{(\ell)}_{c,N}$] The matrices defined in Eq.\ \eqref{eqn:S0}, Eq.\ \eqref{eqn:S2}, and Eq.\ \eqref{eqn:S4} which describe how the functions $X_{c,N}^{(\ell)}(\tau)$ transform under $\SL_2(\IZ)$.

\item
[$\mathcal{S}^{(\ell)}_{c,\pm},\mathcal{T}_{c,\pm}^{(\ell)}$] The matrices defined in Eq.\ \eqref{eqn:S0pm}, Eq.\ \eqref{eqn:S2pm}, and Eq.\ \eqref{eqn:S4pm} which describe how the functions $X_{c,\pm}^{(\ell)}(\tau)$ transform under $\SL_2(\IZ)$.

\item[$U,V,W$] The unitary matrices $U=\left(\begin{smallmatrix}-1 & 0\\0&1 \end{smallmatrix}\right)$, $V=\left(\begin{smallmatrix}0 & 1 \\ 1 & 0  \end{smallmatrix}\right)$, and $W=UV$.

\item[$\mathcal{V}$] A chiral algebra/VOA, typically strongly regular.

\item
[$\X^{(\ell)}_{c,N,i}(\tau)$] The two solutions $\X^{(\ell)}_{c,N,0}(\tau)$ and $\X^{(\ell)}_{c,N,1}(\tau)$ of an MLDE with Wronskian index $\ell$, defined in Eq.\ \eqref{eqn:Xl0}, Eq.\ \eqref{eqn:Xl2}, and Eq.\ \eqref{eqn:Xl4}.

\item
[$\X^{(\ell)}_{c,N}(\tau)$] The vector-valued function $\X^{(\ell)}_{c,N}(\tau) = \big(\X^{(\ell)}_{c,N,0}(\tau),\X^{(\ell)}_{c,N,1}(\tau)\big)^T.$

\item[$\X_{c,\pm,i}^{(\ell)}(\tau)$] The function $X_{c,N,i}^{(\ell)}(\tau)$ evaluated at $N=\pm N_c^{(\ell)}$, where $N_c^{(\ell)}$ is defined in Eq.\ \eqref{eqn:N0}, Eq.\ \eqref{eqn:N2}, and Eq.\ \eqref{eqn:N4}. 

\item[$\X_{c,\pm}^{(\ell)}(\tau)$]
The vector-valued function $\X_{c,\pm}^{(\ell)}(\tau) = \big( \X_{c,\pm,0}^{(\ell)}(\tau),\X_{c,\pm,1}^{(\ell)}(\tau)  \big)^T$.

\item[$\mathsf{X}_{r}^{(x)}$] A notation which is meant to emphasize that the simple Lie algebra $\mathsf{X}_r$ is embedded into some simple Lie algebra $\mathfrak{g}$ with an embedding index $x$. The simple Lie algebra $\mathfrak{g}$ is typically clear from context.
\item[$\mathsf{X}_{r,k}$] The current algebra of level $k$ based on the simple Lie algebra $\mathsf{X}_r$ of rank $r$ and type $\mathsf{X}\in\{\mathsf{A},\mathsf{B},\mathsf{C},\mathsf{D},\mathsf{E},\mathsf{F},\mathsf{G}\}$.

\item[$\omega$] The one-dimensional representation of $\SL_2(\IZ)$ defined as $\omega=\zeta^8$.

\item[$\zeta$] The one-dimensional representation of $\SL_2(\IZ)$ which assigns $\zeta(T) = e^{2\pi i/12}$ and $\zeta(S) = -i$. It generates the group of one-dimensional representations of $\SL_2(\IZ)$, which is cylic of order 12.

\end{list}

\end{footnotesize}

\clearpage

\section{Preliminaries and Background}\label{sec:background}

We begin in \S\ref{subsec:RCFTbasics} by summarizing some of the basic properties of RCFTs that we will use throughout the rest of the paper. We then move on in \S\ref{subsec:meromorphic} to summarizing the classification of \CFT{1}s with $c\leq 24$. Next, we briefly review the method of coset conformal field theory in \S\ref{subsec:coset}. We conclude in \S\ref{subsec:MLDE} by writing down formulae for vector-valued modular forms coming from modular linear differential equations, which we will use to construct the characters of the \CFT{2}s that we classify. 

\subsection{RCFT basics}\label{subsec:RCFTbasics}

In a 2d conformal field theory (CFT) $\mathcal{H}$, a distinguished role is played by the space of meromorphic (anti-meromorphic) operators $\mathcal{V}$ ($\overline{\mathcal{V}}$). It forms a consistent truncation of the full operator algebra which is known as the \emph{left-moving (right-moving) chiral algebra}. The mathematical axiomatization of a chiral algebra is known as a \emph{vertex operator algebra (VOA)}. We will use these two terms interchangeably. 

In a non-trivial 2d CFT, at least one of the left-moving or right-moving chiral algebras should be non-empty. Indeed, any non-trivial local quantum field theory is expected to have a conserved stress tensor, which is further traceless when the theory is conformal. These two conditions imply in complex coordinates that $T(z)\equiv T_{zz}(z,\bar{z})$ is meromorphic and $\overline{T}(\bar{z})\equiv T_{\bar{z}\bar{z}}(z,\bar{z})$ is anti-meromorphic, so these operators (if they are non-zero) as well as the Virasoro algebras they generate populate $\mathcal{V}$ and $\overline{\mathcal{V}}$, respectively. We can associate a central charge $c$ to $\mathcal{V}$ (respectively $\bar{c}$ to $\overline{\mathcal{V}}$) through the operator product expansion (OPE) of the stress tensors with themselves, 
\begin{align}
\begin{split}
    T(z)T(w) &\sim \frac{c/2}{(z-w)^4} + \frac{2T(w)}{(z-w)^2}+\frac{\partial T(w)}{z-w} \\
    \overline{T}(\bar z) \overline{T}(\bar{w}) &\sim \frac{\bar c/2}{(\bar z - \bar w)^4}+ \frac{2\overline T(\bar w)}{(\bar z-\bar w)^2}+\frac{\bar\partial \overline T(\bar w)}{\bar z-\bar w}
\end{split}
\end{align}
where $\sim$ indicates that we have dropped the regular terms of the OPE. The central charge is always non-negative, $c,\bar c\geq 0$, in a unitary theory.

In this work, we are interested in the situation that $\mathcal{V}$ forms what a mathematician would call a \emph{unitary, strongly regular VOA} (see e.g.\ \cite{Mason:2014gba,dong1997regularity,abe2004rationality} for the definition) and these qualifiers should always be assumed unless it is explicitly stated otherwise. This loosely coincides with $\mathcal{H}$ being what a physicist would call a \emph{unitary, rational conformal field theory (RCFT)}.\footnote{The adjective ``unitary'' here means that the state space $\mathcal{H}$ should be consistent with a positive-definite inner-product. See e.g.\ \cite{fuchs2002tft} for a rigorous formulation of RCFT.}  In particular, such a VOA $\mathcal{V}$ $(\overline{\mathcal{V}})$ has a finite number $p$ of irreducible representations, and on general grounds, the Hilbert space $\mathcal{H}$ can be decomposed into finitely many representations of $\mathcal{V}\otimes\overline{\mathcal{V}}$,
\begin{align}\label{eqn:multiplicitymatrix}
    \mathcal{H} = \bigoplus_{i,\bar{\jmath}=0}^{p-1} M_{i,\bar{\jmath}} \mathcal{V}_i\otimes \overline{\mathcal{V}}_{\bar{\jmath}}
\end{align}
where $i$ $(\bar{\jmath})$ is an index which runs over irreducible modules of $\mathcal{V}$ ($\overline{\mathcal{V}}$), and $M_{i,\bar\jmath}$ is a matrix of non-negative integer multiplicities, with $M_{0,\bar 0} = 1$.   In our conventions, $i=0$ ($\bar\jmath=\bar 0$) is \emph{always} the index of the vacuum representation, i.e.\ the module obtained by considering $\mathcal{V}$ $(\overline{\mathcal{V}})$ as a representation over itself.

Let us simplify the discussion by restricting attention to the left-moving chiral algebra; everything we say will apply (with the appropriate modifications) to the right-moving chiral algebra as well. Define $L_0$ as usual to be the zero-mode of the stress tensor through
\begin{align}
    T(z) = \sum_{n\in\mathbb{Z}} L_n z^{-n-2}.
\end{align}
It is known that the characters of a strongly regular VOA, 
\begin{align}\label{eqn:qexpansioncharacter}
    \chi_i(\tau) = \mathrm{Tr}_{\mathcal{V}_i} q^{L_0-\frac{c}{24}}=q^{-\frac{c}{24}+h_i}\sum_{n\geq 0} a_i(n)q^n,
\end{align}
form a weight-zero weakly-holomorphic vector-valued modular form \cite{zhu1996modular,miyamoto2004modular}. That is, the functions $\chi_i$ transform under modular transformations as
\begin{align}\label{eqn:charactertransform}
    \chi_i(-1/\tau) = \sum_j\mathcal{S}_{ij}\chi_j(\tau), \ \ \ \ \chi_i(\tau+1) = \mathcal{T}_{ii} \chi_i(\tau)
\end{align}
and $\chi_i(\tau)$ is holomorphic as a function of the upper half-plane, with the exception of possibly having singularities as $\tau$ approaches $i\infty$ or a rational number on the real line. The number $h_i$ is called the \emph{conformal dimension} of (the multiplet of primary operators in) $\mathcal{V}_i$, and $h_0 = 0$. Also, $\mathcal{S}$ and $\mathcal{T}$ are both $p\times p$ matrices subject to the following necessary (though not necessarily sufficient) physical consistency conditions in a unitary RCFT.\footnote{There are further physical consistency conditions one may consider imposing on modular data, such as constraints coming from the Frobenius--Schur indicators of primary fields \cite{bantay1997frobenius}. However, in the case of rank-2 modular data, which we study in \S\ref{subsec:modulardata}, we find that the conditions we incorporate into our definition of ``admissible'' precisely weed out all unphysical modular representations.}

\begin{enumerate}[label=\arabic*)]
    \item Both $\mathcal{S}$ and $\mathcal{T}$ are unitary matrices. Further, $\mathcal{T}$ is diagonal, $\mathcal{S}$ is symmetric, and $\mathcal{S}_{0i}>0$ \cite{dijkgraaf1988modular}.
    \item They generate a $p$-dimensional representation $\varrho:\SL_2(\IZ)\to \textsl{GL}_p(\IC)$ of the modular group through the assignment $\varrho(S) = \mathcal{S}$ and $\varrho(T) = \mathcal{T}$, where $S = \left(\begin{smallmatrix} 0 & -1 \\ 1 & 0 \end{smallmatrix}\right)$ and $T = \left(\begin{smallmatrix} 1 & 1 \\ 0 & 1 \end{smallmatrix}\right)$.
    \item The matrix $\mathcal{S}^2 \equiv \mathcal{C}$ is a permutation matrix, known as the \emph{charge-conjugation} matrix.
    \item The kernel of the representation $\varrho$ contains the principal congruence subgroup 
    \begin{align}
        \Gamma(m) = \left\{ \left(\begin{smallmatrix} a & b \\ c & d \end{smallmatrix}\right) \in\SL_2(\IZ) \mid a, d\equiv 1~\mathrm{mod}~m; \ b,c\equiv 0 ~\mathrm{mod}~m \right\}
    \end{align}
    for some positive integer $m$ \cite{bantay2003kernel,dong2015congruence}. The smallest such $m$ is known as the \emph{conductor} of $\mathcal{V}$.
    \item The numbers
    \begin{align}\label{eqn:verlindeformula}
        \mathscr{N}_{ij}^k \equiv \sum_l \frac{\mathcal{S}_{il}\mathcal{S}_{jl}(\mathcal{S}^{-1})_{kl} }{\mathcal{S}_{0l}}
    \end{align}
    computed by the Verlinde formula \cite{verlinde1988fusion,huang2008vertex} are non-negative integers.
\end{enumerate}
\begin{definition}
A pair of matrices $(\mathcal{S},\mathcal{T})$, or a modular representation $\varrho:\SL_2(\IZ)\to\textsl{GL}_n(\IC)$, is said to be admissible if it satisfies the 5 conditions above.
\end{definition}
We remark that our definition of admissibility does \emph{not} require that there is a chiral algebra which realizes $(\mathcal{S},\mathcal{T})$ through the transformation properties of its characters, though it will turn out that all two-dimensional admissible representations are indeed realized by chiral algebras.  We also emphasize that admissibility is an extremely basis-dependent notion: in particular, conjugating an admissible modular representation $(\mathcal{S},\mathcal{T})$ by a unitary matrix generically spoils its admissibility.

It is also possible to study vector-valued modular forms more generally, regardless of whether or not they furnish the characters of an actual RCFT (see e.g.\ \cite{bantay2007vector,gannon2013theory,bantay2006conformal} for a beautiful general theory). The following definitions, which are variations on similar concepts appearing in \cite{chandra2019towards}, will be useful in the sequel.

\begin{definition}
A quasi-character is a vector-valued function $\chi$ with the following properties. 
\begin{enumerate}[label=\arabic*)]
    \item The function $\chi$ is holomorphic in the interior of the upper half-plane, though possibly with singularities at the cusps.
    \item It transforms according to Eq.\ \eqref{eqn:charactertransform}, with $(\mathcal{S},\mathcal{T})$ generating an admissible modular representation.
    \item It admits a $q$-expansion of the form Eq.\ \eqref{eqn:qexpansioncharacter} with $c\geq 0$, $h_i\geq 0$, and $h_0=0$. The numbers $c$ and $h_i$ are referred to as the central charge and conformal dimensions of $\chi$.
    \item The coefficients $a_i(n)$ in Eq.\ \eqref{eqn:qexpansioncharacter} are (not necessarily non-negative) integers and $a_0(0) = 1$. 
\end{enumerate}
If the coefficients $a_i(n)$ of a quasi-character are further non-negative then the quasi-character is said to be an admissible character. 
\end{definition}
\noindent\emph{Remark:}  Our definitions are stronger than their analogs appearing in \cite{chandra2019towards}. For example, we require that a quasi-character transforms under an admissible modular representation, and we also require that the central charge and conformal dimensions are non-negative, whereas op.\ cit.\ did not. \\

\noindent We emphasize that not every admissible character is realized by a unitary, strongly regular VOA, as we will see explicitly in the next subsection.

In this work, our goal is to classify RCFTs. In general, this is a slightly more involved problem than just classifying chiral algebras, because the data of an RCFT $\mathcal{H}$ involves the choice of a chiral algebra $\mathcal{V}$ \emph{and} a multiplicity matrix $M_{i,\bar\jmath}$ consistent with modular invariance.\footnote{Modular invariance is necessary but in general not sufficient \cite{cappelli1987ade,cappelli1987modular, davydov2016unphysical}.} In other words, in addition to classifying chiral algebras, one must also classify multiplicity matrices $M_{i,\bar\jmath}$ for which the torus partition function
\begin{align}
    Z(\tau,\bar\tau) = \sum_{i,\bar\jmath}M_{i\bar\jmath} \chi_i(\tau) \overline{\chi}_{\bar\jmath}(\bar\tau),
\end{align}
is invariant under modular transformations,
\begin{align}
    Z(\tau+1,\bar\tau +1) = Z(\tau,\bar\tau)=Z(-1/\tau,-1/\bar\tau).
\end{align}
However, in the case of $p=1$ and $p=2$ theories, it turns out that there is always a unique multiplicity matrix, $M_{i,\bar\jmath}=\delta_{i,\bar\jmath}$, consistent with modular invariance. Therefore, in this situation, classifying RCFTs is equivalent to classifying chiral algebras, and so we will often say that we are doing the former, even though at a technical level we are almost always working with the latter objects.

\subsection{Meromorphic theories}\label{subsec:meromorphic}

The classification of \CFT{1}s was once thought to be quite a distinct program from that of classifying \CFT{p}s with $p>1$. However in recent years it has emerged that the two are very closely linked \cite{gaberdiel2016cosets,tener2017classification,chandra2019curiosities,harvey2020galois,Das:2021uvd,bae2021conformal,Duan:2022ltz}. This linkage is important for our goal of classifying \CFT{2}s, and will be discussed below at length. In this subsection, we review some of what is known about \CFT{1}s, also known as holomorphic VOAs to mathematicians.

Such theories have a unique character $\chi(\tau)$ which transforms under a one-dimensional representation of $\SL_2(\IZ)$. It is known that the one-dimensional representations of $\SL_2(\IZ)$ form the cyclic group $\mathbb{Z}_{12}=\langle \zeta \rangle$ with generator 
\begin{align}\label{eqn:generatoronedimensional}
    \zeta(T) = e^{\pi i/6}, \ \ \ \ \zeta(S) = -i.
\end{align}
However admissibility in the one-dimensional case requires that $\mathcal{S}>0$, leaving only the representations $\omega\equiv \zeta^8$, $\omega^2$, and the trivial representation, which we analyze in turn.

Because $\omega(T) = e^{-2\pi i/3}$, it follows that \CFT{1}s whose single character transforms covariantly with respect to $\omega$ have central charge $c\equiv 8~\mathrm{mod}~24$. At $c=8$, the unique admissible character is 
\begin{align}
    \chi(\tau) = j(\tau)^{\frac13} = q^{-\frac13}(1+248q+\cdots)
\end{align}
with $j(\tau)$ the Klein $j$-invariant, and the unique chiral algebra with this character is $\E[8,1]$, the current algebra at level 1 based on the exceptional simple Lie algebra $\E[8]$. The uniqueness of this theory follows from the fact that the dimension 1 operators in a unitary RCFT form a reductive Lie algebra whose rank is less than or equal to the central charge \cite{goddard1989meromorphic,dong2004rational}: the unique reductive Lie algebra with dimension $248$ and rank less than or equal to $8$ is $\E[8]$, and it can have level at most 1 because any higher level leads to a central charge larger than $8$. 

\CFT{1}s with modular representation $\omega^2$ have central charge $c\equiv 16~\mathrm{mod}~24$. At $c=16$, the unique admissible character is 
\begin{align}
    \chi(\tau) = j(\tau)^{\frac23}=q^{-\frac23}(1+496q+\cdots),
\end{align}
however there are now two chiral algebras with this character, $\E[8,1]\otimes \E[8,1]$ and the unique extension of $\D[16,1]$ which we will label $\D[16,1]^+$. Again, uniqueness is proved by noting that $\E[8]$ and $\D[16]$ are the two unique reductive Lie algebras with dimension 496 and rank less than or equal to 16.

When the modular representation is the trivial one, which requires that $c\equiv 0~\mathrm{mod}~24$, the character of a \CFT{1} is modular invariant on its own, and one can contemplate forming a completely chiral CFT, which we will refer to as a \emph{meromorphic conformal field theory}.\footnote{Some authors refer to \emph{any} chiral algebra with one irreducible representation as a meromorphic conformal field theory, however we reserve the term for chiral algebras with one irreducible representation and $c\equiv 0 ~\mathrm{mod}~24$ because otherwise the character is only modular invariant up to a phase.\label{footnote:meromorphic}} At $c=24$, there is a one-parameter family of putative modular-invariant partition functions,
\begin{align}
    \chi(\tau)=j(\tau)-744+{\mathcal N} = q^{-1}+\mathcal{N} + 196884+\cdots
\end{align} 
where $\mathcal{N}$ is an integer, interpreted as the number of Noether currents in the theory, which is restricted by admissibility to satisfy $\mathcal{N}\ge 0$. However, among these infinitely many possibilities only a finite number correspond to an actual meromorphic CFT.\footnote{This exemplifies our earlier claim that not all admissible characters are realized by a physical theory. The present work furnishes further examples of $p=2$ admissible characters that do not correspond to any \CFT{2}. Moreover from the methods we use, it is clear that this will be a generic feature for all $p>2$ as well.} 

Indeed, in a classic paper, Schellekens \cite{schellekens1993meromorphicc} conjectured that there are just 70 distinct CFTs with $\mathcal{N}\neq0$ (see \cite{goddard1989meromorphic} for earlier related work), and that they correspond to a still smaller number of values of ${\mathcal N}$, because different CFTs can have global symmetry groups of the same dimension. The unique theory whose global symmetry algebra contains Abelian factors is the Leech lattice VOA with $\mathcal{N}=24$. Schellekens then proved that every other holomorphic VOA $\mathcal{A}$ with $\mathcal{N}\neq 0$ is pure Sugawara (in the sense that the Kac--Moody subalgebra $\mathcal{K}\subset \mathcal{A}$ of each theory has central charge equal to 24), and moreover produced a complete list of 69 possibilities for $\mathcal{K}$. His conjecture was then that no two holomorphic VOAs with $c=24$ share the same Kac--Moody subalgebra, which he supported by showing that any two meromorphic CFTs containing the same $\mathcal{K}$ are identical as $\mathcal{K}$-modules. 

The picture Schellekens painted for the $\mathcal{N}\neq 0$ theories has since been confirmed and clarified in a number of relatively recent papers \cite{van2020construction,van2020dimension,van2021schellekens,hohn2020systematic,moller2019dimension,betsumiya2022automorphism,lam201971,Moller:2021clp,lam2019inertia,Hohn:2017dsm}. The complete list of theories can be found in the table at the end of \cite{schellekens1993meromorphicc}. We will use the notation $\mathbf{S}(\mathsf{X})$ to denote the unique Schellekens theory with Kac--Moody subalgebra given by $\mathsf{X}$. The main gap in the classification of meromorphic CFTs with $c=24$ is the proof of the widely-believed conjecture that the monster CFT \cite{frenkel1984natural,frenkel1989vertex} is the unique such theory with $\mathcal{N}=0$.

As an aside, we comment that it is now understood that these 69 theories can be obtained in a uniform manner by orbifolding the Leech lattice VOA using a ``generalized deep hole construction'' \cite{moller2019dimension} which extends the correspondence between the 23 Niemeier lattices with non-vanishing root system \cite{niemeier1973definite} and the deep holes of the Leech lattice \cite{conway2013sphere}. 

Crucially for our purposes, the automorphism groups of the $c=24$ theories have been completely determined \cite{dong1999automorphism,SHIMAKURA:2020zbb,betsumiya2022automorphism,lamshimakura}. The following proposition summarizes the main property of these automorphism groups which is relevant for our calculations.\footnote{We are grateful to Ching Hung Lam and Hiroki Shimakura for helping us to distill this proposition from their work.}

\begin{proposition}\label{prop:outerauts}
Let $\mathcal{A}$ be any chiral algebra with $c=24$ and one simple module, except for $\mathbf{S}(\A[7,4]\A[1,1]^3)$ and  $\mathbf{S}(\D[6,5]\A[1,1]^2)$. The automorphism group of $\mathcal{A}$ acts transitively on the level 1 simple factors of the Kac--Moody subalgebra of $\mathcal{A}$.
\end{proposition}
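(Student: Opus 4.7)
The plan is to establish the proposition by a case analysis over the 71 entries of Schellekens' list of holomorphic VOAs at $c=24$, using the explicit determination of their automorphism groups in \cite{dong1999automorphism,SHIMAKURA:2020zbb,betsumiya2022automorphism,lamshimakura}. For each $\mathcal{A}$, one must check whether the subgroup of $\mathrm{Aut}(\mathcal{A})$ that preserves the Kac--Moody subalgebra $\mathcal{K}\subset \mathcal{A}$ acts transitively on the level 1 simple summands of $\mathcal{K}$ of each fixed Lie-algebra isomorphism type. This reduces the proposition to a finite mechanical check once the outer automorphism groups are known.

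The first step is triage. For each entry, one reads off the Kac--Moody subalgebra from Schellekens' table and identifies those $\mathsf{X}_r$ for which $\mathcal{K}$ contains two or more copies of $\mathsf{X}_{r,1}$; only these types contribute non-trivially to the claim. This immediately disposes of the large majority of the list, including the Leech lattice VOA (whose Kac--Moody subalgebra is abelian), the conjectured monster CFT (whose Kac--Moody subalgebra is trivial), and the many purely higher-level Sugawara entries.

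The second step handles the remaining non-trivial cases in two sub-groups. For the pure-level-$1$ Schellekens theories, which arise from Niemeier lattices with non-trivial root system, the automorphism group of $\mathcal{A}$ contains the automorphism group of the underlying Niemeier lattice; this acts as the full (Weyl-extended) symmetric group on the isomorphic simple summands of the root system, so transitivity is manifest. For theories with a mixed Kac--Moody structure containing both level 1 and higher-level simple factors, one consults the tables in \cite{betsumiya2022automorphism,lamshimakura}, which list generators of $\mathrm{Aut}(\mathcal{A})$ and their action on the simple ideals of $\mathcal{K}$. Going through this short list, one confirms that the outer automorphisms of $\mathcal{A}$ suffice to transitively permute the isomorphic level 1 simple factors in every case \emph{except} $\mathbf{S}(\A[7,4]\A[1,1]^3)$ and $\mathbf{S}(\D[6,5]\A[1,1]^2)$.

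The hard part is pinpointing why the two exceptional theories fail the conclusion and, conversely, why no analogous failure occurs elsewhere. In each exception, the higher-level simple factor ($\A[7,4]$ or $\D[6,5]$) is unique in the Kac--Moody subalgebra and hence must be fixed by any automorphism preserving $\mathcal{K}$; the residual action on the $\A[1,1]^n$ factors is constrained by the explicit form of the extension of $\mathcal{K}$ to $\mathcal{A}$, and turns out to be a proper subgroup of $S_n$. Verifying this subtlety, and verifying that no analogous obstruction arises among the other mixed-level Schellekens theories, is precisely the content of the detailed automorphism computations in \cite{lamshimakura}. Our proof is essentially an appeal to their classification: once their data is in hand, the proposition is immediate from inspection.
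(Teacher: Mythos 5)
Your proposal matches the paper's treatment: the paper does not prove this proposition internally but distills it from the case-by-case determination of the automorphism groups of the $c=24$ holomorphic VOAs in the cited works of Lam--Shimakura et al., exactly as you do, with the two exceptional theories identified by the same inspection. Your added observations (reading transitivity as transitivity on isomorphic level-1 factors, and handling the pure level-1 Niemeier cases via lattice automorphisms) are correct refinements of the same argument rather than a different route.
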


Before moving on we remark that, as we mentioned in the introduction, the number of theories quickly explodes when $c>24$. For example, the next non-trivial case is $c=32$, and we can obtain a lower-bound on the number of such theories by lower-bounding the number of even, unimodular lattices of dimension 32, each of which defines a distinct \CFT{1} with $c=32$. The latter bound can be achieved using the Smith--Minkowski--Siegel mass formula, which says that 
\begin{align}
    \sum_{\Lambda} \frac{1}{|\mathrm{Aut}(\Lambda)|} = \frac{|B_{c/2}|}{c}\prod_{1\leq j<c/2}\frac{|B_{2j}|}{4j}
\end{align}
where $B_n$ is a Bernoulli number, and the sum is over even, unimodular lattices $\Lambda$ of dimension $c$. In the case that $c=32$, the right-hand side evaluates to
\begin{align}
\begin{split}
     &\sum_{\Lambda} \frac{1}{|\mathrm{Aut}(\Lambda)|}=\\
     & \ \ \ \ \ \  \frac{4890529010450384254108570593011950899382291953107314413193123}{121325280941552041649762780685623131486814208000000000}
\end{split}
\end{align}
which is on the order of $4\times 10^7$. The number of lattices is at least twice as many, because every lattice has an automorphism group with at least two elements (the identity transformation and the  canonical $v\mapsto -v$ involution). Refinements of these ideas produce a stronger lower bound of a billion \cite{king2003mass}. Thus, $c=24$ is a sort of upper-critical central charge beyond which implementation of our program of reducing the classification of $p>1$ RCFTs to the classification of \CFT{1}s becomes impractical, at least without additional inputs.

\subsection{Genus, cosets, and gluing}\label{subsec:coset}

A very natural invariant of a chiral algebra $\mathcal{V}$ is its \emph{genus} $(c,\mathscr{C})$. Here, $c$ is the central charge, and $\mathscr{C}=\mathrm{Rep}(\mathcal{V})$ is the equivalence class of the modular tensor category (MTC) formed by the representations of $\mathcal{V}$ \cite{moore1989classical,huang2008rigidity}. The genus is named after an analogous invariant of lattices, and it is useful for classification purposes because it is conjectured that the number of chiral algebras in any genus is finite \cite{Hohn:2002dm}. Hence, classifying all chiral algebras in a fixed genus might be an attainable goal in simple cases.

We will sometimes write $\mathscr{C}^{(p)}$ to emphasize that $\mathscr{C}$ has $p$ simple objects, i.e.\ that $\mathcal{V}$ has $p$ irreducible representations. For example, in this paper, our goal is to classify all chiral algebras which belong to genera of the form $(c,\mathscr{C}^{(2)})$ with $c<25$. Every MTC $\mathscr{C}^{(p)}$ gives rise to a modular representation $\varrho:\SL_2(\IZ)\to \textsl{GL}_p(\mathbb{C})$ which is ambiguous up to multiplication by the character $\omega^n$ for some $n$ (though the choice of a chiral algebra, or even just a central charge $c$, fixes this ambiguity). In many cases, there is a unique MTC $\mathscr{C}^{(p)}$ with $\varrho$ as its modular representation (for example, this is true whenever $p=2$), and in these cases we are free to write $(c,\varrho)$ in place of $(c,\mathscr{C}^{(p)})$. To any MTC $\mathscr{C}$ with modular representation $\varrho$, it is possible to associate a dual MTC $\overline{\mathscr{C}}$ with modular representation $\varrho^\ast$ (see e.g.\ \S 6 of \cite{frohlich2006correspondences} for precise definitions).

Consider two chiral algebras $\mathcal{V}$ and $\widetilde{\mathcal{V}}$ in the genera $(c,\mathscr{C})$ and $(\tilde{c},\tilde{\mathscr{C}})$, respectively. In special circumstances to be described below, their tensor product can be extended to a larger chiral algebra,
\begin{align}
    \mathcal{V}\otimes\widetilde{\mathcal{V}}\subset \mathcal{A},
\end{align}
in which case one can decompose the extension into $\mathcal{V}\otimes\widetilde{\mathcal{V}}$-modules as
\begin{align}\label{eqn:gluing}
    \mathcal{A} \cong  \bigoplus_{i,j} d_{i,j} \mathcal{V}_i\otimes \widetilde{\mathcal{V}}_j,
\end{align}
where $\mathcal{V}_i$ and $\widetilde{\mathcal{V}}_j$ are the irreducible representations of $\mathcal{V}$ and $\widetilde{\mathcal{V}}$, respectively. We say that $\mathcal{A}$ is obtained by \emph{gluing} $\mathcal{V}$ to $\widetilde{\mathcal{V}}$.\footnote{This generalizes an analogous notion of gluing for lattices \cite{conway2013sphere} to the chiral algebra setting.} The characters of $\mathcal{A},\mathcal{V}$, and $\widetilde{\mathcal{V}}$ then enjoy a bilinear relation of the form

\begin{align}\label{eqn:bilinearrelation}
    \chi_{\mathcal{A}}(\tau) = \sum_{i,j} d_{i,j} \chi_i(\tau) \widetilde{\chi}_j(\tau).
\end{align}
Mathematically, an extension of the form Eq.\ \eqref{eqn:gluing} is possible when $\mathcal{A}$ is an algebra object of a special kind  \cite{Huang:2014ixa,fuchs2002tft,frohlich2006correspondences} in the Deligne tensor product $\mathscr{C}\boxtimes\tilde{\mathscr{C}}$ of the MTCs associated to $\mathcal{V}$ and $\widetilde{\mathcal{V}}$.

When $\mathcal{V}$ and $\widetilde{\mathcal{V}}$ are dual in the sense that $\tilde{\mathscr{C}}\cong\overline{\mathscr{C}}$, one can label the irreducible modules of $\mathcal{V}$ and $\widetilde{\mathcal{V}}$ in such a way that taking $d_{i,j}=\delta_{i,j}$ in Eq.\ \eqref{eqn:gluing} leads to a chiral algebra $\mathcal{A}$ with only one irreducible module, i.e.\ a chiral algebra in the genus $(C,\textsl{Vect}_{\mathbb{C}})$, where $C=c+\tilde{c}$ and $\textsl{Vect}_{\mathbb{C}}$ is the trivial MTC. For example, when $\mathcal{V}$ and $\widetilde{\mathcal{V}}$ are both $p=2$ theories, all we must check to ensure that $\mathcal{V}$ and $\widetilde{\mathcal{V}}$ are dual is that
\begin{align}\label{eqn:conjugatetwist}
    \tilde{\varrho} = \omega^n\varrho^\ast
\end{align}
where $\varrho$ and $\tilde{\varrho}$ are the modular representations associated to $\mathcal{V}$ and $\widetilde{\mathcal{V}}$, and $n\in\{0,1,2\}$. In this $p=2$ case, the gluing then takes the form 
\begin{align}
    \mathcal{A} \cong  (\mathcal{V}\otimes \widetilde{\mathcal{V}})\oplus (\mathcal{V}_1\otimes \widetilde{\mathcal{V}}_1).
\end{align}
When $\varrho$ and $\tilde{\varrho}$ are related by Eq.\ \eqref{eqn:conjugatetwist}, we will say that $\tilde{\varrho}$ is a \emph{conjugate twist} of $\varrho$. In general, Eq.\ \eqref{eqn:conjugatetwist} holding is not sufficient to determine that $\mathcal{V}\otimes \widetilde{\mathcal{V}}$ can be extended to a $p=1$ chiral algebra $\mathcal{A}$, though it is when $p=2$.

There is a general construction, known as the coset/commutant construction \cite{goddard1986unitary,frenkel1992vertex}, which will be useful for us because it provides a way to recover $\widetilde{\mathcal{V}}$ from knowledge of $\mathcal{A}$ and  $\mathcal{V}\subset\mathcal{A}$.

\begin{definition}
Let $\mathcal{A}$ be a chiral algebra with stress tensor 
\begin{align}
    T(z)=\sum_{n\in\mathbb{Z}} L_nz^{-n-2},
\end{align} 
and $\mathcal{V}\subset\mathcal{A}$ a subalgebra with stress tensor 
\begin{align}
    t(z) = \sum_{n\in\mathbb{Z}} l_n z^{-n-2}
\end{align} 
satisfying $(L_1t)(z)=0$. The coset of $\mathcal{A}$ by $\mathcal{V}$, or the commutant of $\mathcal{A}$ in $\mathcal{V}$, is defined to be the space 
\begin{align}
    \mathcal{A}\big/\mathcal{V} \equiv \textsl{Com}_{\mathcal{A}}(\mathcal{V}) \equiv \{\varphi\in\mathcal{V} \mid l_{-1}\varphi=0 \}.
\end{align}
It is a chiral algebra with stress tensor equal to $T(z)-t(z)$.
\end{definition}
Importantly for this work, if $\mathcal{V},\mathcal{V}'\subset \mathcal{A}$ are two subalgebras of $\mathcal{A}$ which are related by an automorphism $X:\mathcal{A}\to\mathcal{A}$ in the sense that $X(\mathcal{V}) = \mathcal{V}'$, then $X$ also induces an isomorphism of the two cosets $\mathcal{A}\big/\mathcal{V}$ and $\mathcal{A}\big/\mathcal{V}'$ in the obvious way. Thus, in order to enumerate inequivalent cosets of $\mathcal{A}$, a necessary step is to understand equivalence classes of subalgebras of $\mathcal{A}$.  

If one applies the coset construction to a chiral algebra $\mathcal{A}$ that was obtained by gluing $\mathcal{V}$ to $\widetilde{\mathcal{V}}$ as in Eq.\ \eqref{eqn:gluing}, then the coset $\mathcal{A}\big/\mathcal{V}$ will be an extension of $\widetilde{\mathcal{V}}$, namely 
\begin{align}
    \mathcal{A}\big/\mathcal{V} \cong \bigoplus_j M_{0,j} \widetilde{\mathcal{V}}_j.
\end{align}
In particular, if we are given that $\mathcal{A}$ was obtained by gluing two dual theories $\mathcal{V}$ and $\widetilde{\mathcal{V}}$ with a multiplicity matrix $d_{i,j}=\delta_{i,j}$, then we are guaranteed that $\mathcal{A}\big/\mathcal{V}=\widetilde{\mathcal{V}}$. When $\mathcal{A}\big/ \mathcal{V}=\widetilde{\mathcal{V}}$ and $\mathcal{A}\big/\widetilde{\mathcal{V}} = \mathcal{V}$, we say that $(\mathcal{V},\widetilde{\mathcal{V}})$ is a dual pair, or a coset pair, inside $\mathcal{A}$.

In this work, we will mostly specialize to the case that $\mathcal{A}$ is a chiral algebra with only a single primary operator, and $\mathcal{V}$ is an affine Kac--Moody algebra, in which case the condition $(L_1t)(z)=0$ is always satisfied. We note that  \cite{gaberdiel2016cosets} computed  several examples of coset pairs $(\mathcal{V},\widetilde{\mathcal{V}})$ of this kind, with $\mathcal{V}$ and $\widetilde{\mathcal{V}}$ having $p\geq 2$. For the 2 character cases (which included some examples with $p>2$), the characters were explicitly computed in terms of hypergeometric functions (following \cite{naculich1989differential,mathur1989reconstruction}) and the relation \eref{eqn:bilinearrelation} explicitly verified.

\subsection{Modular linear differential equations and quasi-characters}\label{subsec:MLDE}

Our main tool for computing spaces of vector-valued modular forms is homogeneous modular-invariant linear differential equations (MLDEs). The basic strategy is to write down a general MLDE of second-order,  solve it recursively by the Frobenius method, and explicitly determine the modular properties of the solution. 

MLDEs were first proposed as a viable method for classifying  RCFTs in \cite{mathur1988classification}, where the authors studied a simple class of MLDEs (those with Wronskian index $\ell=0$), demanded that the solutions have completely positive $q$-expansions, and then associated physical theories with the resulting characters.\footnote{We will compare our classification to the results of MMS in more detail in \S\ref{subsec:enumeration}.} Our present approach differs slightly in that we will use MLDEs with small Wronskian index ($\ell=0,2,4$) to generate a \emph{basis} of vector-valued modular forms transforming under a given modular representation $\varrho$, but we do not impose positivity on the individual members of that basis; instead, we only impose positivity on  \emph{linear combinations} of the basis elements. Our work constitutes a special application of the techniques developed in \cite{chandra2019towards}.

We begin by reviewing some generalities. The most general second-order MLDE takes the form
\be
\Big(D^2+\phi_{2}(\tau)D+\phi_{4}(\tau)\Big)\,\chi(\tau)=0
\ee
where $D$ is the covariant derivative which acts as
\be
D\equiv \frac{1}{2\pi i}\del_\tau -\frac{k}{12}E_2(\tau)
\ee
on modular forms of weight $k$. This equation generically has two linearly independent solutions of the form of \eref{eqn:qexpansioncharacter}. As we discuss below, these are potentially identified with characters of an RCFT. Therefore we take them to be holomorphic in the interior of the moduli curve $\SL_2(\IZ)\backslash\mathbb{H}$. The coefficient functions $\phi_r(\tau)$ are  modular of weight $r$ with possible poles in the interior of the upper half-plane. This can be seen by defining three Wronskian determinants from the solutions,
\begin{align}
\begin{split}
&\hspace{.9in} W_2(\tau)\equiv\left|\begin{matrix} \chi_0(\tau) &\chi_1(\tau) \\ \del_\tau \chi_0(\tau) &\del_\tau\chi_1(\tau)  \end{matrix}\right|, \\
&W_4(\tau)\equiv\left|\begin{matrix} \chi_0(\tau) &\chi_1(\tau) \\ \del^2_\tau \chi_0(\tau) &\del^2_\tau\chi_1(\tau)  \end{matrix}\right|,~
W_6(\tau)\equiv\left|\begin{matrix} \del_\tau\chi_0(\tau) &\del_\tau\chi_1(\tau) \\ \del^2_\tau \chi_0(\tau) &\del^2_\tau\chi_1(\tau)  \end{matrix}\right|.
\end{split}
\end{align}
These are weakly-holomorphic modular forms of weights 2,4,6 respectively with singular behaviour as $\tau\to i\infty$. It is easy to see that
\be
\begin{split}
\phi_2(\tau)=-\frac{W_4(\tau)}{W_2(\tau)},\quad   \phi_4(\tau)=\frac{W_6(\tau)}{W_2(\tau)}  .
\end{split}
\ee
Thus the maximum number of poles of $\phi_2,\phi_4$ is equal to the number of zeroes of $W_2(\tau)$ in $\SL_2(\mathbb{Z})\backslash\mathbb{H}$. This number is denoted by $\frac{\ell}{6}$, the fractional value being allowed due to the presence of orbifold singularities at the special points $\tau=e^{i\pi/3},i$, where a pole counts with weight $\frac13$ and $\frac12$ respectively. In recent times $\ell$ has come to be known as the Wronskian index of the MLDE. From its definition, we see that $\ell\ge 0$.

The Wronskian $W_2(\tau)$ has modular weight 2, leading behavior 
\begin{align}
    W_2(\tau)\sim q^{-\frac{c}{12}+h}+\cdots
\end{align} 
in its $q$-expansion, and $\frac\ell6$ zeroes.  Together these facts imply a relation between the Wronskian index $\ell$, the central charge $c$, and the single non-trivial conformal dimension $h$,
\be
-\frac{c}{12}+h=\frac{1-\ell}{6},
\ee
or, after re-arranging,
\be
\ell=\frac{c}{2}-6h+1.
\label{RRtwo}
\ee
From this relation, we obtain the bound
\be
\ell < \frac{c}{2}+1.
\label{ellbound}
\ee

Without specifying $\ell$, the number of independent meromorphic modular forms of weight 2 and 4 is unbounded and the MLDE is intractable. Thus, one works case-by-case for each allowed value of $\ell$. It will turn out that taking linear combinations of solutions to MLDEs with $\ell=0,2,4$ is sufficient to obtain all the vector-valued modular forms relevant for our classification, as we will see explicitly in \S\ref{subsec:admissiblecharacters}.

\subsubsection*{\boldmath Quasi-characters with Wronskian index $\ell=0$}
Ref.\ \cite{mathur1988classification} studied the most general MLDE with $\ell=0$, which takes the form
\be
\Big(D^2+\mu E_4 \Big)\chi=0
\label{mmseq}
\ee
where $\mu$ is a free parameter. The solution $X^{(0)}_{c,N}(\tau)$ to this equation can be expressed in terms of hypergeometric functions as
\begin{align}\label{eqn:Xl0}
\begin{split}
    X_{c,N,0}^{(0)}(\tau) &= j(\tau)^{\frac{c}{24}}{_2}F_1\left(-\frac{c}{24},\frac{1}{3}-\frac{c}{24}; \frac{5}{6}-\frac{c}{12};\frac{1728}{j(\tau)}    \right) \\
    X_{c,N,1}^{(0)}(\tau) &= N j(\tau)^{-\frac{1}{6}-\frac{c}{24}}{_2}F_1\left( \frac{1}{6}+\frac{c}{24},\frac{1}{2}+\frac{c}{24}; \frac{7}{6}+\frac{c}{12};\frac{1728}{j(\tau)}  \right)
\end{split}
\end{align}
where $\mu = -\frac{c(c+4)}{576}$. The quasi-character so-produced has central charge $c$, conformal dimension,
\begin{align}\label{eqn:h0}
    h^{(0)}_c=\frac{c+2}{12},
\end{align}
and modular $\mathcal{S}$-matrix
\begin{align}\label{eqn:S0}
\begin{split}
   \mathcal{S}^{(0)}_{c,N}= \left(
\begin{array}{cc}
 \frac12 \csc (\pi \frac{c+2}{12}) & \frac{1}{N}\frac{2^{c/2} \Gamma \left(\frac{4-c}{6}\right) \Gamma
   \left(-\frac{c+2}{12}\right)}{\sqrt{\pi }  \Gamma \left(-\frac{c}{4}\right)} \\
 N\frac{ \Gamma \left(\frac{c+2}{12}\right) \Gamma \left(\frac{c+8}{6}\right)}{2^{c/2}\sqrt{\pi }  c \Gamma \left(\frac{c}{4}\right)} & -\frac12 \csc (\pi \frac{c+2}{12}) \\
\end{array}
\right).
\end{split}
\end{align}
Demanding that this be symmetric (as required by admissibility) implies that $N=\pm N_c^{(0)}$, where
\begin{align}\label{eqn:N0}
    N_c^{(0)}=\left(\frac{2^c c \Gamma \left(\frac{4-c}{6}\right) \Gamma \left(-\frac{c+2}{12}\right) \Gamma \left(\frac{c}{4}\right)}{\Gamma \left(\frac{c+8}{6}\right) \Gamma \left(\frac{c+2}{12}\right) \Gamma
   \left(-\frac{c}{4}\right)}\right)^{\frac12}.
\end{align}
Plugging $N=\pm N_c^{(0)}$ into $\mathcal{S}^{(0)}_{c,N}$, we find
\begin{align}\label{eqn:S0pm}
\begin{split}
   \mathcal{S}^{(0)}_{c,\pm}\equiv\left(
\begin{array}{cc}
 \frac12 \csc (\pi \frac{c+2}{12}) & \pm \sqrt{1-\frac{1}{4}\csc^2(\pi\frac{c+2}{12})}   \\
 \pm\sqrt{1-\frac{1}{4}\csc^2(\pi\frac{c+2}{12})}  & -\frac12 \csc (\pi \frac{c+2}{12}) \\
\end{array}
\right).
\end{split}
\end{align}

\subsubsection*{\boldmath Quasi-characters with Wronskian index $\ell=2$}
Turning now to $\ell=2$ \cite{naculich1989differential,mathur1989reconstruction,gaberdiel2016cosets}, the MLDE again has just one parameter and takes the form
\be
\Big(D^2+\frac13 \frac{E_6}{E_4}+\mu E_4\Big)\chi=0.
\label{elltwoeq}
\ee
The solution can again be expressed in terms of hyper-geometric functions as
\begin{align}\label{eqn:Xl2}
\begin{split}
    X_{c,N,0}^{(2)}(\tau) &= j(\tau)^{\frac{c}{24}}{_2}F_1\left(-\frac{c}{24},\frac{2}{3}-\frac{c}{24}; \frac{7}{6}-\frac{c}{12};\frac{1728}{j(\tau)}    \right) \\
    X_{c,N,1}^{(2)}(\tau) &= N j(\tau)^{\frac{1}{6}-\frac{c}{24}}{_2}F_1\left( -\frac{1}{6}+\frac{c}{24},\frac{1}{2}+\frac{c}{24}; \frac{5}{6}+\frac{c}{12};\frac{1728}{j(\tau)}  \right)
\end{split}
\end{align}
where $\mu = -\frac{c(c-4)}{576}$.
The central charge of $X^{(2)}_{c,N}(\tau)$ is $c$, the conformal dimension is
\begin{align}\label{eqn:h2}
    h_c^{(2)}=\frac{c-2}{12},
\end{align}
and the $\mathcal{S}$-matrix is 
\begin{align}\label{eqn:S2}
\begin{split}
    &\mathcal{S}^{(2)}_{c,N}= \left(
\begin{array}{cc}
 \frac12\csc(\pi\frac{2-c}{12}) & \frac{1}{N}\frac{27\ 2^{\frac{c}{2}-9} (c-16) (c-12) c \Gamma
   \left(\frac{26-c}{6}\right) \Gamma \left(\frac{2-c}{12}\right)}{\sqrt{\pi } (c-14)  \Gamma \left(6-\frac{c}{4}\right)} \\
 N\frac{2^{8-\frac{c}{2}} (c-14)  \Gamma \left(\frac{c-20}{6}\right) \Gamma \left(\frac{c+10}{12}\right)}{27 \sqrt{\pi } (c-16) (c-12) c \Gamma
   \left(\frac{c}{4}-5\right)} & -\frac12\csc(\pi\frac{2-c}{12}) \\
\end{array}
\right).
\end{split}
\end{align}
Demanding that this be symmetric implies that $N=\pm N_c^{(2)}$, where
\begin{align}\label{eqn:N2}
    N_c^{(2)}= 27\cdot 2^{\frac{c-17}{2}}c\frac{(c-16)(c-12)}{c-14}\left(\frac{    \Gamma \left(\frac{26-c}{6}\right) \Gamma \left(\frac{2-c}{12}\right) \Gamma
   \left(\frac{c}{4}-5\right)}{ \Gamma \left(6-\frac{c}{4}\right) \Gamma \left(\frac{c-20}{6}\right) \Gamma \left(\frac{c+10}{12}\right)}\right)^{\frac12}.
\end{align}
Plugging $N=\pm N_c^{(2)}$ back into $\mathcal{S}^{(2)}_{c,N}$, we find 
\begin{align}\label{eqn:S2pm}
    \mathcal{S}^{(2)}_{c,\pm} \equiv \left( \begin{array}{cc} \frac12\csc(\pi\frac{2-c}{12})  & \pm \sqrt{1-\frac14 \csc^2(\pi\frac{2-c}{12})  }\\ \pm \sqrt{1-\frac14 \csc^2(\pi\frac{2-c}{12})  } & -\frac12\csc(\pi\frac{2-c}{12})\end{array}   \right).
\end{align}

\subsubsection*{\boldmath Quasi-characters with Wronskian index $\ell=4$}
Finally, we define functions with $\ell=4$ in terms of the $\ell=0$ solutions as
\begin{align}\label{eqn:Xl4}
    X_{c,N}^{(4)}(\tau) = j(\tau)^{\frac13}X_{c-8,N}^{(0)}(\tau).
\end{align}
The central charge is $c$, the conformal dimension is
\begin{align}\label{eqn:h4}
    h^{(4)}_c=h_{c-8}^{(0)} = \frac{c-6}{12},
\end{align}
and the $\mathcal{S}$-matrix is 
\begin{align}\label{eqn:S4}
    \mathcal{S}^{(4)}_{c,N}=\mathcal{S}^{(0)}_{c-8,N} .
\end{align}
Demanding that this is symmetric gives $N=\pm N_c^{(4)}$, defined by
\begin{align}\label{eqn:N4}
    N^{(4)}_c = N^{(0)}_{c-8},
\end{align}
and plugging this back into the $\mathcal{S}$-matrix, we find 
\begin{align}\label{eqn:S4pm}
    \mathcal{S}^{(4)}_{c,\pm} = \mathcal{S}^{(0)}_{c-8,\pm}.
\end{align}

In each case,  the $\mathcal{T}$-matrix can be determined in terms of the central charge and conformal dimension as
\begin{align}\label{eqn:Tl}
    \mathcal{T}^{(\ell)}_{c,N}=\mathcal{T}^{(\ell)}_c = \mathrm{diag}\left[ \exp\left(-2\pi i \frac{c}{24}\right),  \exp\left(2\pi i\left(- \frac{c}{24}+h^{(\ell)}_c\right)\right) \right].
\end{align}
Since $N$ is uniquely determined by $c$ up to a sign by demanding that $\mathcal{S}^{(\ell)}_{c,N}$ be symmetric, we simplify our notation henceforth by writing
\begin{align}
    \X_{c,\pm}^{(\ell)}(\tau) \equiv \X^{(\ell)}_{c,N}(\tau) \hspace{.5in} (N=\pm N_c^{(\ell)}). \ \ 
\end{align}
In the form that we have expressed the solutions $X_{c,\pm}^{(\ell)}(\tau)$, their modular properties are manifest, though the $q$-expansion and its integrality are not. It is worth emphasizing that the $X_{c,\pm}^{(\ell)}(\tau)$ are not quasi-characters for arbitrary values of $c$ (in particular, their $q$-series are not integral), but rather only for a specific set of allowed values. We identify these values by classifying admissible modular data in the sequel.

\clearpage

\section{Classification}\label{sec:classification}

In this section, we turn to the classification of \CFT{2}s with $c<25$.  In \S\ref{subsec:modulardata}, we classify all two-dimensional admissible modular representations (see \S\ref{subsec:RCFTbasics} for the definition); every such representation is realized as the modular data of a \CFT{2}, and conversely any \CFT{2} has one of these representations as its modular data. Using these representations, in \S\ref{subsec:admissiblecharacters}, we compute all admissible two-component characters with $c<25$ (again, see \S\ref{subsec:RCFTbasics} for the definition). Finally, in \S\ref{subsec:enumeration}, we complete the classification, using the gluing principle explained in \S\ref{subsec:coset}.

\subsection{Modular data}\label{subsec:modulardata}

We begin our investigation of unitary RCFTs with two primaries by classifying all two-dimensional admissible modular representations (see \S\ref{subsec:RCFTbasics} for our definition of admissible representation). This data can be obtained by going through the classification of unitary modular tensor categories with 2 simple objects \cite{rowell2009classification}, however we have opted for a more direct, pedestrian approach. 

We will restrict our attention in the beginning to \emph{irreducible} representations, and then explain at the end why this is justified. As we have reviewed earlier, an admissible modular representation has a non-trivial kernel which contains a principal congruence subgroup, $\ker\varrho \supset \Gamma(m)$, for some integer $m$. In particular, since $\Gamma(m)$ has finite index in $\SL_2(\IZ)$ for every $m$, it follows that $\varrho$ must have finite image in a unitary RCFT. Happily, irreducible two-dimensional representations of $\SL_2(\IZ)$ with finite image have been classified \cite{mason20082}. Each takes the following form (up to equivalence). 

Without loss of generality, we may work in a basis in which $T$ is represented by a diagonal matrix,
\begin{align}\label{eqn:TMatrix}
    \mathcal{T} = \mathrm{diag}(e^{2\pi i m_0},e^{2\pi im_1}).
\end{align}
Then, letting
\begin{align}\label{eqn:kappar}
\begin{split}
    \kappa &=e^{2\pi i (2m_0+m_1)}-e^{2\pi i (m_0+2m_1)}, \\
    r &= -e^{6\pi i (m_0+m_1)}\kappa^2-1,
\end{split}
\end{align}
we define the matrix $\mathcal{S}$ in terms of the eigenvalues of $\mathcal{T}$ as
\begin{align}\label{eqn:SMatrix}
    \mathcal{S} = \kappa^{-1} \left(\begin{array}{cc}1 & \sqrt{r} \\ \sqrt{r} & -1 \end{array}\right).
\end{align}
We note that in all the cases of interest to us, $r$ will be a non-negative real number, and so there is no ambiguity in what we mean by $\sqrt{r}$.

\begin{theorem}[Theorem 3.7 of \cite{mason20082}]
There are 54 equivalence classes of irreducible, two-dimensional representations of $\SL_2(\IZ)$ with finite image. Each has kernel containing $\Gamma(m)$ for some integer $m$, and each is obtained by setting $\varrho(T)=\mathcal{T}$  and $\varrho(S) = \mathcal{S}$ for some choice of $m_0,m_1\in\mathbb{Q}$, where $\mathcal{T}$ and $\mathcal{S}$ are defined in Eq.\ \eqref{eqn:TMatrix} and Eq.\ \eqref{eqn:SMatrix}, respectively. 
\end{theorem}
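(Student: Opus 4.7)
The plan is to leverage the presentation $\SL_2(\IZ)=\langle S,T\mid S^4=1,\ (ST)^3=S^2\rangle$, in which $S^2$ is central. Since $\varrho$ is irreducible, Schur's lemma applied to the central element $S^2$ forces $\varrho(S^2)=\epsilon I$ for some scalar $\epsilon$, and $S^4=1$ then gives $\epsilon\in\{\pm 1\}$. The finite-image hypothesis implies $\mathcal{T}=\varrho(T)$ is of finite order, hence diagonalisable, so after conjugating I may assume $\mathcal{T}=\mathrm{diag}(e^{2\pi i m_0},e^{2\pi i m_1})$ with $m_0,m_1\in\mathbb{Q}$. Moreover $m_0\not\equiv m_1\pmod 1$: if $\mathcal{T}$ were scalar it would commute with everything, so each $\mathcal{S}$-eigenline would be simultaneously $\mathcal{S}$- and $\mathcal{T}$-invariant, contradicting irreducibility.

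Next I would write $\mathcal{S}=\varrho(S)=\begin{pmatrix}a & b \\ c & d\end{pmatrix}$ and impose $\mathcal{S}^2=\epsilon I$ together with $(\mathcal{S}\mathcal{T})^3=\epsilon I$. The first relation, combined with $\mathcal{S}$ not being scalar, forces $a+d=0$ and $a^2+bc=\epsilon$. Reading off the trace and determinant of $(\mathcal{S}\mathcal{T})^3=\epsilon I$ then produces a small polynomial system in $a,b,c$ and the eigenvalues of $\mathcal{T}$. One still has the freedom to rescale the two basis vectors independently without destroying the diagonal form of $\mathcal{T}$, and this scaling can be used to bring $\mathcal{S}$ into the symmetric form $b=c$. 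Solving the resulting system delivers the quantities $\kappa$ and $r$ in Eq.\ \eqref{eqn:kappar} and the matrix in Eq.\ \eqref{eqn:SMatrix}; crucially, this parameterisation is forced by the group relations rather than postulated, which is what makes it canonical.

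The remaining task is to enumerate those $(m_0,m_1)$ for which the image is actually finite. The projective image sits inside $\mathrm{PGL}_2(\IC)$ and is finite, hence cyclic, dihedral, or one of $A_4, S_4, A_5$. For each of these types, one lifts back to $\mathrm{GL}_2(\IC)$ and works out the permissible central extensions realisable by an $\SL_2(\IZ)$ representation satisfying the relations above; in particular the order of $\varrho(T)$, and hence the denominators of $m_0$ and $m_1$, is bounded by a small integer depending on the type. This reduces the classification to a finite enumeration, which can be cross-checked against the 2-dimensional irreducible representations of $\SL_2(\IZ/m\IZ)$ for the handful of admissible conductors. One then arrives at exactly 54 equivalence classes, with two representations identified when they differ by swapping $m_0\leftrightarrow m_1$ (a basis reordering); the containment $\Gamma(m)\subset\ker\varrho$ is automatic once the image factors through $\SL_2(\IZ/m\IZ)$.

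The step I expect to be the main obstacle is this final count. The delicate issues are that one must correctly identify Galois conjugates and the involution $m_i\mapsto -m_i$ coming from passing to the dual representation, avoid double-counting representations which admit multiple parameterisations via the rescaling freedom, and handle the borderline cases where $\mathcal{T}$ has very low order and either $\kappa=0$ or $r=0$ causes the canonical formulas to degenerate or the representation to become reducible. In practice, verifying the count 54 is most cleanly done by tabulating irreducible 2-dimensional representations of $\SL_2(\IZ/m\IZ)$ for the small admissible $m$ and matching them back to the parameterisation in Eq.\ \eqref{eqn:TMatrix} and Eq.\ \eqref{eqn:SMatrix}, rather than by a purely uniform argument; this tabulation, though elementary, is where the bookkeeping is most error-prone.
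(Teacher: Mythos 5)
This theorem is not proved in the paper at all: it is imported verbatim as Theorem 3.7 of the cited work of Mason, so there is no in-paper argument to compare against. Judged on its own terms, your outline follows essentially the strategy one would expect (and that the cited reference in fact uses): exploit the presentation $\SL_2(\IZ)=\langle S,T\mid S^4=1,\ (ST)^3=S^2\rangle$, use Schur's lemma to pin $\varrho(S^2)=\pm I$, diagonalize $\varrho(T)$ with rational exponents $m_0\neq m_1$, and let the group relations together with the residual rescaling freedom force $\varrho(S)$ into the canonical form of Eq.\ \eqref{eqn:SMatrix}. Up to that point the proposal is sound.

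The gap is that the two claims that actually constitute the theorem --- the count of exactly 54 equivalence classes and the congruence property of the kernel --- are not established. The enumeration over projective image types (dihedral, $A_4$, $S_4$, $A_5$; cyclic being excluded by irreducibility) is only gestured at, and the lifting/central-extension bookkeeping that bounds the order of $\varrho(T)$ and produces the explicit lists of $(m_0,m_1)$ in Tables \ref{tab:oddweight} and \ref{tab:evenweight} is deferred as "error-prone." More seriously, your final sentence treats $\Gamma(m)\subset\ker\varrho$ as automatic "once the image factors through $\SL_2(\IZ/m\IZ)$," but that factorization is precisely what must be proved: $\SL_2(\IZ)$ has finite-index non-congruence subgroups, so a finite-image representation could a priori have a non-congruence kernel, and the congruence statement is a genuine conclusion of the theorem rather than a formality. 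Cross-checking against the irreducible two-dimensional representations of $\SL_2(\IZ/m\IZ)$ for small $m$ can only confirm that the congruence representations occur in your list; it cannot by itself exclude non-congruence ones unless the projective enumeration has already been carried to completion and matched bijectively. As written, the proposal is a credible plan whose decisive steps are exactly the ones left undone.
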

The 54 choices of $(m_0,m_1)$ which lead to inequivalent representations are summarized in Table \ref{tab:oddweight} and Table \ref{tab:evenweight}. They are numbered according to the order in which they appear in Tables 1--4 of \cite{mason20082}. We will define $\varrho_I$ for $I=1,\dots,54$ to be the $I$th such representation, and also set $\mathcal{S}_I = \varrho_I(S)$ and $\mathcal{T}_I=\varrho_I(T)$. Our goal is to determine which of these representations satisfy the rest of the conditions required by admissibility (possibly after a change of basis).

By explicit computation, one can confirm that in each case, $\mathcal{C}_I=\mathcal{S}_I^2=\pm 1$, which is a basis-independent statement. We can then immediately eliminate the representations with $\mathcal{C}_I=-1$ from consideration because the charge conjugation matrix is not a permutation matrix for these representations. Alternatively, one can rule out such representations as follows. First, note that $S^2$, as a transformation of the upper half-plane, acts trivially on $\tau$. Therefore, modular covariance of the characters of an RCFT dictates that they obey the equation 
\begin{align}
    \chi(\tau) = \mathcal{S}^2\chi(\tau).
\end{align}
Representations with $\mathcal{C}=-1$ therefore will clearly not support (non-vanishing) vector-valued modular forms of weight zero, and hence will not support functions which can serve as characters of a putative RCFT.\footnote{Such representations however will support vector-valued modular forms with odd weight.} The representations with $\mathcal{C}_I=-1$ are the ones appearing in Table \ref{tab:oddweight}; they are precisely the $\varrho_I$ with $I$ an odd integer. Therefore, we restrict our attention in the sequel to the representations $\varrho_I$ with $I$ even, which appear in Table \ref{tab:evenweight} and all satisfy $\mathcal{C}_I=+1$.

Now, in RCFT, it is not sufficient to specify a modular representation up to equivalence. One must also pick a distinguished choice of basis in which $\mathcal{T}$ is diagonal and the representation is unitary. Various physical quantities, such as the fusion rules as computed by the Verlinde formula Eq.\ \eqref{eqn:verlindeformula}, depend on this choice of basis. We remind the reader that we will always interpret the $0$-component as corresponding to the identity primary $\mathds{1}$, and the $1$-component as corresponding to the non-identity primary $\Phi$. 

As it stands, we have already expressed each representation in a unitary basis in which $\mathcal{T}$ is diagonal. However there are still other choices of bases in which the diagonality of $\mathcal{T}$ and the unitarity of the representation are maintained. The following lemma serves to enumerate these bases. 

\begin{lemma}\label{lemma:unitarybases}
Let $\mathcal{T}$ be a diagonal $2\times 2$ matrix with distinct eigenvalues. Any unitary matrix $M$ for which $M\mathcal{T}M^\dagger$ is diagonal takes the form
\begin{align}\label{eqn:generalUnitary}
    M = e^{i\theta}\left(\begin{array}{cc} e^{i\varphi} & 0 \\ 0 & 1  \end{array}\right) \text{ or } M=  e^{i\theta}\left(\begin{array}{rr} 0 & e^{i\varphi} \\ 1 & 0   \end{array}\right).
\end{align}
\end{lemma}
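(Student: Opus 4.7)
The plan is to use the fact that conjugation by $M$ turns the diagonal matrix $\mathcal{T}$ into another diagonal matrix, which forces $M$ to act on the eigenspaces of $\mathcal{T}$ in a very restricted way. Concretely, writing $M\mathcal{T}M^\dagger = \mathcal{D}$ for some diagonal $\mathcal{D}$, I would rewrite this as $\mathcal{T}M^\dagger = M^\dagger\mathcal{D}$, which says that the $k$th column of $M^\dagger$ is an eigenvector of $\mathcal{T}$ with eigenvalue $\mathcal{D}_{kk}$.

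Next I would use the hypothesis that $\mathcal{T}$ has distinct eigenvalues. This implies that its eigenspaces are exactly the two coordinate axes $\mathbb{C}e_0$ and $\mathbb{C}e_1$, so each column of $M^\dagger$ must be proportional to one of the standard basis vectors. Unitarity of $M^\dagger$ forces the two columns to be orthonormal, so they must be proportional to \emph{different} standard basis vectors, with proportionality constants of unit modulus. This gives exactly two possibilities:
\begin{equation*}
M^\dagger = \begin{pmatrix} e^{i\alpha_0} & 0 \\ 0 & e^{i\alpha_1}\end{pmatrix} \quad\text{or}\quad M^\dagger = \begin{pmatrix} 0 & e^{i\alpha_1} \\ e^{i\alpha_0} & 0 \end{pmatrix},
\end{equation*}
for some real $\alpha_0,\alpha_1$. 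Taking the Hermitian conjugate and factoring out an overall phase (absorbing one of the $\alpha_i$ as $\theta$ and letting $\varphi$ be the difference of the two phases, with appropriate signs) reproduces the two stated normal forms.

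There is no substantive obstacle here; the statement is essentially a coordinate form of the standard fact that the group of unitary matrices commuting with (or more generally, conjugating between) diagonal matrices with distinct spectrum is generated by the diagonal phases together with the permutation swapping the two eigenvectors. I would simply need to verify at the end that every matrix of the form \eqref{eqn:generalUnitary} does indeed conjugate $\mathcal{T}$ to a diagonal matrix (trivial in the diagonal case, and in the anti-diagonal case it swaps the two diagonal entries of $\mathcal{T}$), which establishes the converse and completes the characterization.
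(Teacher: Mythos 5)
Your proof is correct, but it takes a different route from the paper's. The paper writes down an explicit parametrization of a general $2\times 2$ unitary matrix, $M=\left(\begin{smallmatrix}w & z \\ -e^{i\zeta}z^\ast & e^{i\zeta}w^\ast\end{smallmatrix}\right)$ with $|w|^2+|z|^2=1$, computes $M\mathcal{T}M^\dagger$ directly, and reads off from the off-diagonal entry $e^{-i\zeta}(\lambda_2-\lambda_1)wz$ that either $w=0$ or $z=0$, whence the two normal forms. You instead rewrite the condition as the intertwining relation $\mathcal{T}M^\dagger = M^\dagger\mathcal{D}$ and argue structurally: the columns of $M^\dagger$ are eigenvectors of $\mathcal{T}$, the distinct-eigenvalue hypothesis pins the eigenspaces to the coordinate axes, and unitarity forces the columns to be unit-modulus multiples of distinct standard basis vectors. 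Both arguments are elementary and complete; yours has the advantage of generalizing immediately to $n\times n$ diagonal matrices with simple spectrum (yielding diagonal-times-permutation matrices), which is in fact the structure implicitly used elsewhere in the paper when ruling out reducible admissible representations, while the paper's brute-force computation is entirely self-contained for the $2\times 2$ case. Your closing remark about verifying the converse is harmless but not required, since the lemma only asserts the forward implication.
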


\begin{proof}
Parametrize an arbitrary $2\times 2$ unitary matrix as
\begin{align}
    M=\left(\begin{array}{cc}w & z \\ -e^{i\zeta}z^\ast & e^{i\zeta} w^\ast    \end{array}\right)
\end{align}
where $|z|^2+|w|^2=1$ and $\zeta$ is a real number. If we take $\mathcal{T} = \mathrm{diag}(\lambda_1,\lambda_2)$, then 
\begin{align}
    M\mathcal{T}M^\dagger = \left(\begin{array}{cc} \lambda_1|w|^2+\lambda_2|z|^2 & e^{-i\zeta}(\lambda_2-\lambda_1)wz \\ e^{i\zeta}(\lambda_2-\lambda_1)(wz)^\ast & \lambda_2 |w|^2+\lambda_1|z|^2  \end{array}\right).
\end{align}
In order for this to be diagonal, we require that $e^{-i\zeta}(\lambda_2-\lambda_1)wz=0$. Since the eigenvalues of $\mathcal{T}$ are distinct by assumption, $\lambda_2-\lambda_1\neq 0$ and we must have that either $w=0$ or $z=0$. This then guarantees that the other off-diagonal element vanishes as well. 

If $w=0$, then $z$ must be a phase, and after redefining variables we find that 
\begin{align}
    M = e^{i\theta}\left(\begin{array}{rr} 0 & e^{i\varphi} \\ 1 & 0 \end{array}\right).
\end{align}
On the other hand, if $z=0$, then $w$ must be a phase, and after redefining variables we find that 
\begin{align}
    M =e^{i\theta}\left(\begin{array}{cc} e^{i\varphi} & 0 \\ 0 & 1  \end{array}\right).
\end{align}
This completes the proof.
\end{proof}
Noting that each of the representations $\varrho_I$ have $m_0\neq m_1$, we have shown that the admissible irreducible modular representations must belong to the following set,
\begin{gather}\label{eqn:modularDataSpace}
\begin{split}
    (\mathcal{S},\ \mathcal{T}) &\in  \left\{\Big(U_\varphi\mathcal{S}_IU_\varphi^\dagger,U_\varphi\mathcal{T}_IU_\varphi^\dagger\Big)\mid I=2,4,\dots,54,~\varphi\in\mathbb{R} \right\}  \\
    & \ \ \ \ \ \ \ \ \  \cup  \left\{\Big(V_\varphi\mathcal{S}_IV_\varphi^\dagger,V_\varphi\mathcal{T}_IV_\varphi^\dagger\Big)\mid I=2,4,\dots,54,~\varphi\in\mathbb{R} \right\}
\end{split}
\end{gather}
where we have defined the unitary matrices\footnote{We have used the fact that $M_1AM_1^\dagger = M_2AM_2^\dagger$ whenever $M_1=e^{i\theta}M_2$ in order to set $\theta=0$ in Eq.\ \eqref{eqn:generalUnitary} without loss of generality.}
\begin{align}
    U_\varphi =\left(\begin{array}{cc}e^{i\varphi} & 0 \\ 0 & 1 \end{array}\right), \ \ \ \ V_\varphi = \left(\begin{array}{cc} 0 & e^{i\varphi} \\ 1 & 0 \end{array}\right).
\end{align}
We can further cut down on the number of modular $\mathcal{S}$ and $\mathcal{T}$ matrices we must consider by demanding that the coefficients $\mathscr{N}_{ij}^k$ computed from $\mathcal{S}$ using Eq.\ \eqref{eqn:verlindeformula} be non-negative integers. Note that if one computes putative fusion rules from the modular $\mathcal{S}$-matrix using the Verlinde formula, one finds that every $(\mathcal{S},\mathcal{T})$ in the set Eq.\ \eqref{eqn:modularDataSpace} leads to the following consistent fusion channels,
\begin{align}
   \mathds{1}\times\mathds{1} = \mathds{1},\ \ \ \ \  \mathds{1}\times\Phi = \Phi\times\mathds{1} = \Phi.
\end{align}
Non-trivial constraints can be obtained by considering the fusion of $\Phi$ with itself,
\begin{align}
\begin{split}
   U_\varphi \mathcal{S}_I U^\dagger_\varphi:& \ \ \    \Phi\times\Phi = e^{-2i\varphi}\mathds{1} + e^{-i\varphi}\frac{r_I-1}{\sqrt{r_I}}\Phi \\
   V_\varphi \mathcal{S}_I V^\dagger_\varphi:&  \ \ \   \Phi\times\Phi = e^{-2i\varphi}\mathds{1} - e^{-i\varphi}\frac{r_I-1}{\sqrt{r_I}}\Phi
\end{split}
\end{align}
where $r_I$ is the non-negative real number associated to $\varrho_I$ through Eq.\ \eqref{eqn:kappar}. Demanding just that the fusion coefficient $\mathscr{N}_{\Phi\Phi}^{\mathds{1}}$ be a non-negative integer then requires that $\varphi = 0,\pi$ in both cases. Therefore, imposing this on Eq.\ \eqref{eqn:modularDataSpace} reduces it to a finite set, and we can check which of these finitely many modular representations is admissible with a case-by-case computation.  

We now introduce some notation which will help us describe the result of this computation. Define 
\begin{gather}\label{eqn:UVW}
    U\equiv U_\pi, \ \ \ \  V\equiv V_0, \ \ \ \  W \equiv V_\pi.
\end{gather}
Given any two-dimensional representation $\varrho$, we can define equivalent representations
\begin{align}
    \varrho_{\mathrm{U}}(\gamma)\equiv U\varrho(\gamma)U^\dagger, \ \ \ \varrho_{\mathrm{V}}(\gamma) \equiv V\varrho(\gamma)V^\dagger, \ \ \  \varrho_{\mathrm{W}}(\gamma) \equiv W \varrho(\gamma) W^\dagger.
\end{align}
It will also be useful in what follows to abbreviate the representation $\varrho_I$ to simply $I$, and the representations $(\varrho_I)_{\mathrm{X}}$ to $I$X for $\mathrm{X} = \mathrm{U}, \mathrm{V}, \mathrm{W}$. For example, we will write $(\varrho_{20})_{\mathrm{V}}$ as 20V. Finally, we define 
\begin{align}
    \varrho_{\A}\equiv \varrho_{20}, \ \ \ \varrho_{\G}\equiv \varrho_{32}
\end{align}
so-labeled because $\varrho_{20}$ is the representation with respect to which the characters of $\A[1,1]$ transform, and $\varrho_{32}$ is the representation with respect to which the characters of $\G[2,1]$ transform. For the convenience of the reader, we explicitly write out the modular data in these two cases, 
\begin{align}\label{eqn:modularrepAG}
    \mathcal{S}_{\A} &= \frac{1}{\sqrt{2}}\left(\begin{array}{cr} 1& 1 \\ 1 & -1  \end{array}\right), & \mathcal{T}_{\A} &= \exp 2\pi i ~\mathrm{diag}(\sfrac{23}{24},\sfrac{5}{24}) \\
    \mathcal{S}_{\G} &= \frac{1}{\sqrt{2+\varphi}}\left(\begin{array}{rr} 1 & \phi \\ \phi & -1\end{array}\right), & \mathcal{T}_{\G} &= \exp 2\pi i ~\mathrm{diag}(\sfrac{53}{60},\sfrac{17}{60})
\end{align}
where $\phi = \frac{1+\sqrt{5}}{2}$.

With these conventions in place, we are ready to state the main result of this subsection. By demanding that $\mathscr{N}_{\Phi\Phi}^\Phi$ be a non-negative integer, and that $\mathcal{S}_{0i}>0$, we find the following. 

\begin{theorem}
The full set of two-dimensional irreducible, admissible modular representations is 
\begin{align}\label{eqn:admissiblereps}
\begin{split}
\mathrm{Semion}:&~20,~28\mathrm{W},~24\mathrm{W} \\
\overline{\mathrm{Semion}}:&~30\mathrm{W},~26,~22 \\
\mathrm{Fibonacci}:& ~32,~40\mathrm{W},~36\mathrm{W} \\
\overline{\mathrm{Fibonacci}}:&~42\mathrm{W},~38,~34
\end{split}
\end{align}
which we have organized by their associated modular tensor categories (see \cite{rowell2009classification}).
Alternatively, this list can be re-written as 
\begin{align}\label{eqn:admissibleAG}
\begin{split}
\mathrm{Semion}:&~ \varrho_{\A},~ \omega\varrho_{\A},~\omega^2\varrho_{\A} \\
\overline{\mathrm{Semion}}:&~\varrho_{\A}^\ast,~\omega\varrho_{\A}^\ast,~\omega^2\varrho_{\A}^\ast \\
\mathrm{Fibonacci}:&~\varrho_{\G},~\omega\varrho_{\G},~\omega^2\varrho_{\G} \\
\overline{\mathrm{Fibonacci}}:&~\varrho_{\G}^\ast,~\omega\varrho_{\G}^\ast,~\omega^2\varrho_{\G}^\ast
\end{split}
\end{align}
where $\omega:\SL_2(\IZ)\to\mathbb{C}^\ast$ is the representation generated by the assignments $\omega(S)=1$ and $\omega(T)=e^{-2\pi i/3}$.
\end{theorem}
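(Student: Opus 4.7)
My approach is a finite case check based on the reductions already established in the build-up to the theorem. After eliminating the 27 odd-indexed Mason representations in Table \ref{tab:oddweight} (which have $\mathcal{C}=-1$) and restricting $\varphi\in\{0,\pi\}$ via the $\mathscr{N}_{\Phi\Phi}^{\mathds 1}\in\IZ_{\geq 0}$ constraint, there remain exactly $27\times 4=108$ concrete pairs $(\mathcal{S},\mathcal{T})$ to examine, indexed by $I\in\{2,4,\ldots,54\}$ and a twist $\mathrm{X}\in\{\mathrm{id},\mathrm{U},\mathrm{V},\mathrm{W}\}$. Only two further admissibility conditions must be imposed: $\mathcal{S}_{0i}>0$ and $\mathscr{N}_{\Phi\Phi}^{\Phi}\in\IZ_{\geq 0}$.

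The positivity condition collapses the four twists for each $I$ down to one: from $\mathcal{S}_I=\kappa_I^{-1}\bigl(\begin{smallmatrix}1 & \sqrt{r_I}\\ \sqrt{r_I}&-1\end{smallmatrix}\bigr)$, the diagonal entries $\kappa_I^{-1}$ and $-\kappa_I^{-1}$ always have opposite signs, so exactly one of identity/$\mathrm{U}$ versus $\mathrm{V}/\mathrm{W}$ puts $\mathcal{S}_{00}>0$, and conjugation by $\mathrm{U}$ toggles the sign of $\mathcal{S}_{01}$. For the fusion constraint I would use $\mathcal{S}^2=\Id$ (which holds for even $I$) together with the Verlinde formula \eref{eqn:verlindeformula} to obtain
\begin{equation*}
\mathscr{N}_{\Phi\Phi}^{\Phi} \;=\; \kappa_I^{-2}\,\frac{r_I^2-1}{\sqrt{r_I}} \;=\; -\,e^{6\pi i(m_0+m_1)}\,\frac{r_I-1}{\sqrt{r_I}},
\end{equation*}
using $\kappa_I^{-2}=-e^{6\pi i(m_0+m_1)}/(r_I+1)$ from \eref{eqn:kappar}. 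After incorporating the sign fixed by the positivity step, this becomes the elementary condition $|\sqrt{r_I}-1/\sqrt{r_I}|\in\IZ_{\geq 0}$, whose solutions form the sequence $\{1,\phi^{\pm 2},(1\pm\sqrt 2)^{\pm 2},\ldots\}$ with $\phi=(1+\sqrt 5)/2$. A direct scan of the values of $r_I$ produced by the $(m_0,m_1)$ in Table \ref{tab:evenweight} via \eref{eqn:kappar} shows that only $r_I=1$ and $r_I=\phi^2$ are attained, with six occurrences of each.

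Collecting the survivors gives the 12 pairs in \eqref{eqn:admissiblereps}: the six $r_I=1$ entries yield the Semion and $\overline{\text{Semion}}$ families (three each, distinguished by the phases of $\mathcal{T}$), and the six $r_I=\phi^2$ entries yield the Fibonacci and $\overline{\text{Fibonacci}}$ families. The alternative presentation \eqref{eqn:admissibleAG} then follows once one notes that $\omega(S)=1$ and $\omega(T)=e^{-2\pi i/3}$, so tensoring by $\omega^n$ preserves $\mathcal{S}$ and merely rotates the eigenvalues of $\mathcal{T}$ by a cube root of unity; within each MTC family the three admissible representations are therefore the three $\omega^n$-twists of a canonical representative, which one takes to be $\varrho_{\A}=\varrho_{20}$ and $\varrho_{\G}=\varrho_{32}$, with $\varrho_{\A}^\ast,\varrho_{\G}^\ast$ generating the conjugate families.

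To complete the proof I would also dispatch the reducible case: a reducible two-dimensional modular representation is a direct sum $\zeta^a\oplus\zeta^b$ of one-dimensional representations, giving a diagonal $\mathcal{S}$ with $\mathcal{S}_{01}=0$, which violates the strict positivity $\mathcal{S}_{0i}>0$, so no reducible representation is admissible. The principal obstacle is therefore pure bookkeeping: one must systematically tabulate $(\kappa_I,r_I,e^{6\pi i(m_0+m_1)})$ for all 27 even-indexed Mason representations and verify that the positivity-selected twist matches the label $\mathrm{id}/\mathrm{U}/\mathrm{V}/\mathrm{W}$ asserted in \eqref{eqn:admissiblereps}.
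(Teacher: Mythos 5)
Your overall strategy is the same as the paper's: reduce to Mason's 27 even-indexed classes, use the basis lemma and the integrality of $\mathscr{N}_{\Phi\Phi}^{\mathds 1}$ to cut the allowed changes of basis down to the four twists $\mathrm{id},\mathrm{U},\mathrm{V},\mathrm{W}$, and then finish by a finite scan against $\mathcal{S}_{0i}>0$ and $\mathscr{N}_{\Phi\Phi}^{\Phi}\in\IZ_{\geq 0}$, treating the reducible case separately exactly as in the paper. The reducible argument and the positivity analysis of the twists are fine (for every even $I$ one has $e^{6\pi i(m_0+m_1)}=-1$, so $\kappa_I$ is real and exactly one of $\mathrm{id}/\mathrm{W}$ survives). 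However, there is a genuine error in your final scan, and it changes the answer.

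The problem is the absolute value in your criterion $|\sqrt{r_I}-1/\sqrt{r_I}|\in\IZ_{\geq 0}$. Your own formula gives $\mathscr{N}_{\Phi\Phi}^{\Phi}=-e^{6\pi i(m_0+m_1)}\,(r_I-1)/\sqrt{r_I}=\sqrt{r_I}-1/\sqrt{r_I}$ (using $e^{6\pi i(m_0+m_1)}=-1$), and one can check directly that conjugating by $\mathrm{W}$ leaves this quantity unchanged --- the positivity step does \emph{not} ``fix the sign'' in a way that lets you drop it. The correct condition is therefore $\sqrt{r_I}-1/\sqrt{r_I}\in\IZ_{\geq 0}$ without absolute value, which in particular forces $r_I\geq 1$. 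This matters because your claim that only $r_I=1$ and $r_I=\phi^2$ are attained is false: the six representations $I=44,46,48,50,52,54$ of Table~\ref{tab:evenweight} have $m_0-m_1=\frac15$ or $\frac35$ with $4\sin^2(\pi/5)-1=\phi^{-2}$, so $r_I=\phi^{-2}$ and $\mathscr{N}_{\Phi\Phi}^{\Phi}=\phi^{-1}-\phi=-1$. These pass your positivity test (e.g.\ for $I=50$, the Lee--Yang representation at $\hat c=\frac25$, one finds $\kappa^{-1}>0$ so the untwisted basis already has $\mathcal{S}_{0i}>0$) and they pass your absolute-value test, so your procedure as written returns $18$ representations rather than $12$. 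The six extra ones are precisely the Lee--Yang/IVOA-type representations that the paper excludes on account of their negative fusion coefficient (cf.\ the discussion of $c=\frac25$ and $c=\frac{38}{5}$ in \S\ref{subsec:enumeration}). Dropping the absolute value and completing the tabulation of $r_I$ over all $27$ classes ($r_I=3$ for $I=2,4,6$; $r_I=2$ for $I=8,\dots,18$; $r_I=1$ for $I=20,\dots,30$; $r_I=\phi^2$ for $I=32,\dots,42$; $r_I=\phi^{-2}$ for $I=44,\dots,54$) repairs the argument and recovers the stated list.
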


\noindent\emph{Remark:}
The admissible $\A$-type representations in Eq.\ \eqref{eqn:admissibleAG} give rise to the fusion rule $\Phi\times \Phi = \mathds{1}$ (corresponding to the so-called ``Semion MTC'' in \cite{rowell2009classification} and its complex conjugate), while the admissible $\G$-type representations give rise to the fusion rule $\Phi\times \Phi = \mathds{1}+\Phi$ (corresponding to the ``Fibonacci MTC'' in \cite{rowell2009classification} and its complex conjugate).\\

\noindent We now explain why the two-dimensional irreducible admissible modular representations exhaust \emph{all} of the two-dimensional admissible modular representations. By assumption, an admissible representation is unitary, and hence if it is not irreducible, it decomposes into a direct sum of two one-dimensional representations of $\SL_2(\IZ)$. In \S\ref{subsec:meromorphic}, we explained that the one-dimensional representations of $\SL_2(\IZ)$ form the group $\mathbb{Z}_{12}$ with generator $\zeta$ given in Eq.\ \eqref{eqn:generatoronedimensional}. Thus, an admissible reducible modular representation $\varrho$ is equivalent to $ \zeta^a\oplus \zeta^b$ for some integers $a,b=0,\dots, 11$. It cannot be the case that $a=b$, because then $\varrho(S)$ is diagonal in every basis, which violates the requirement that $\mathcal{S}_{0i}>0$. On the other hand, if $a\neq b$, then the eigenvalues of $\varrho(T)$ are distinct, and Lemma \ref{lemma:unitarybases} tells us that any unitary change of basis which keeps $\varrho(T)$ diagonal will also keep $\varrho(S)$ diagonal. Hence, we again run into a violation of $\mathcal{S}_{0i}>0$. We therefore have the following.
\begin{proposition}
There are no two-dimensional modular representations which are both admissible and reducible.
\end{proposition}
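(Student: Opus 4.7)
The plan is to argue by cases, exploiting the fact that admissibility forces stringent constraints on the off-diagonal entries of $\mathcal{S}$ that are incompatible with decomposability. First, I would invoke unitarity of any admissible representation $\varrho$: as a finite-dimensional unitary representation of $\SL_2(\IZ)$, a reducible $\varrho$ must split as an orthogonal direct sum of two one-dimensional subrepresentations. Recalling from \S\ref{subsec:meromorphic} that the one-dimensional representations of $\SL_2(\IZ)$ form the cyclic group $\mathbb{Z}_{12}=\langle\zeta\rangle$ with $\zeta$ as in Eq.\ \eqref{eqn:generatoronedimensional}, this reduces the problem to ruling out every $\varrho\cong\zeta^a\oplus\zeta^b$ with $a,b\in\{0,1,\dots,11\}$.

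Next I would split on whether $a=b$. If $a=b$, then $\varrho(S)=\zeta^a(S)\cdot\mathds{1}$ is a scalar matrix, a basis-independent property, so in particular $\mathcal{S}_{01}=0$ in every basis, contradicting the strict positivity requirement $\mathcal{S}_{0i}>0$. If instead $a\ne b$, then the eigenvalues $\zeta^a(T)$ and $\zeta^b(T)$ of $\mathcal{T}=\varrho(T)$ are distinct. Starting from the canonical basis in which $\varrho$ is explicitly the direct sum, both $\varrho(S)$ and $\varrho(T)$ are diagonal, so $\mathcal{S}_{01}=0$ in this basis. Lemma \ref{lemma:unitarybases} then tells me that any other unitary change of basis preserving the diagonality of $\mathcal{T}$ has the form $e^{i\theta}U_\varphi$ or $e^{i\theta}V_\varphi$; a one-line check shows that conjugation by either $U_\varphi$ (which is itself diagonal) or $V_\varphi$ (which merely swaps the two diagonal entries of a diagonal matrix, possibly with a phase) leaves a diagonal $\mathcal{S}$ diagonal. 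Hence in every admissible basis $\mathcal{S}_{01}=0$, once again violating $\mathcal{S}_{0i}>0$.

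The conceptual content is that admissibility's positivity condition is basis-sensitive but only in the controlled way enumerated by Lemma \ref{lemma:unitarybases}. The one step requiring a sanity check, and the main potential pitfall, is the $V_\varphi$ branch: one might worry that swapping the two components could generate off-diagonal entries in $\mathcal{S}$, but because $\mathcal{S}$ is diagonal to begin with, conjugation by $V_\varphi$ only permutes its diagonal entries (up to a phase on one of them), so off-diagonals remain zero. Once this is checked, both branches of the case split yield $\mathcal{S}_{01}=0$, and the proposition follows immediately.
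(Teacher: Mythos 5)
Your proof is correct and follows essentially the same route as the paper: unitarity forces a reducible admissible representation to split as $\zeta^a\oplus\zeta^b$, and the case split on $a=b$ versus $a\neq b$ (the latter using Lemma \ref{lemma:unitarybases}) to derive $\mathcal{S}_{01}=0$ is exactly the paper's argument. Your explicit check that conjugation by $V_\varphi$ preserves diagonality is a harmless elaboration of a step the paper leaves implicit.
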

We conclude this section by observing that the list of representations in \eref{eqn:admissibleAG} can be seen to be in one-to-one correspondence with sequences of quasi-characters from \cite{chandra2019towards} that remain after discarding the following sets: (i) those corresponding to more than 2 primaries, denoted the $A_2$ and $D_4$ series, (ii) those labeled Type II, which have infinitely many negative coefficients in their $q$-expansions, and (iii) those giving rise to negative fusion rules. 

The correspondence is as follows. We start with the series called ``Lee-Yang'' in Section 4.3 of \cite{chandra2019towards}. These fall into three classes corresponding to Wronskian index $\ell=0,2,4$ mod 6. 

In the first class one has central charges $c=\frac{2(6n+1)}{5}$ for $n=0,1,2,3,5,6,7,8$ mod 10. The last four of these correspond to Type II quasi-characters and are discarded. The first four have central charges $c=\sfrac25, \sfrac{14}{5},\sfrac{26}{5},\sfrac{38}{5}$ mod 24. Generalizing the discussion in \cite{mathur1989reconstruction}, the first and last of these can be shown to have negative fusion rules. That leaves central charges $c=\sfrac{14}{5},\sfrac{26}{5}$ mod 24, which correspond to the representations $\varrho_{\G},~\omega\varrho_{\G}^*$ respectively. 

Moving on to $\ell=2$, we have $c=\frac{2(6n-1)}{5}$ and here $n=7,8,9,10$ mod 10 give rise to Type I characters, of which $n=7,10$ are discarded as they have negative fusion rules. The remaining ones, with $c=\sfrac{94}{5},\sfrac{106}{5}$, correspond to the representations $\omega^2\varrho_{\G},~\varrho_{\G}^*$ respectively. 

Finally, the $\ell=4$ quasi-characters are obtained by tensoring the $\E[8,1]$ character $j(\tau)^{\frac13}$ with the $\ell=0$ quasi-characters. This process simply multiplies the modular representation by $\omega$ and therefore gives us $\omega\varrho_{\G},~\omega^2\varrho_{\G}^*$, completing the list of $\G$-type (Fibonacci) representations in \eref{eqn:admissibleAG}. 

A similar exercise maps type I quasi-characters in the $A_1$ series, Section 4.4 of \cite{chandra2019towards}, to the six $\A$-type (Semion) representations in \eref{eqn:admissibleAG}. The $\ell=0$ quasi-characters correspond to the representations $\varrho_{\A},~\omega\varrho_{\A}^*$, the $\ell=2$ quasi-characters correspond to the representations $\omega^2\varrho_{\A},~\varrho_{\A}^*$, and the $\ell=4$ quasi-characters correspond to the representations $\omega\varrho_{\A},~\omega^2\varrho_{\A}^*$. It is gratifying that the same modular representations arise from two somewhat different mathematical starting points: the direct classification of admissible representations vs.\ the study of MLDEs as in \cite{kaneko1998supersingular,kanekokoike2003,kaneko2006onmodular}.

\subsection{Admissible characters}\label{subsec:admissiblecharacters}

In the previous subsection, we classified two-dimensional admissible modular representations. To proceed further, we must ask which of these representations admit admissible characters with central charge in the range $0\leq c \leq 24$. We will see that \emph{every} admissible representation we found in the previous subsection supports at least one admissible character.

\subsubsection*{Rigid representations}
To begin, we consider vector-valued modular forms for one of the following admissible representations,
\begin{align}\label{eqn:rigidreps}
    \varrho = 20,~22,~26,~32,~34,~38,~ 28\mathrm{W},~40\mathrm{W}
\end{align}
or in terms of $\varrho_{\A}$ and $\varrho_{\G}$, the representations 
\begin{align}
   \varrho= \varrho_{\A},~\omega^2\varrho_{\A}^\ast,~ \omega\varrho_{\A}^\ast,~\varrho_{\G},~\omega^2\varrho_{\G}^\ast,~\omega\varrho_{\G}^\ast,~\omega\varrho_{\A},~\omega\varrho_{\G}.
\end{align}
We call these representations ``rigid'' because, as we will show, they admit a unique admissible character with central charge $0\leq c\leq 24$. 

Let us start by assuming that $\varrho\neq \omega\varrho_{\A},\omega\varrho_{\G}$. The $\mathcal{T}$-matrix of such a representation is $\mathcal{T}=\mathrm{diag}(e^{2\pi i m_0},e^{2\pi i m_1})$, where the pair $(m_0,m_1)$ can be read off from the second column of Table \ref{tab:evenweight} and satisfies $0\leq m_1<m_0<1$. If there exists a unitary RCFT with $0<c\leq 24$ whose characters transform covariantly with respect to such a representation, then its central charge must be $c=24(1-m_0)$, and its vacuum character must admit a $q$-expansion of the form 
\begin{align}\label{eqn:leadingvacuum}
    \chi_0(\tau) = q^{m_0-1}+O(q^{m_0}).
\end{align}
Furthermore, we claim that the character of the non-identity primary would need to start as 
\begin{align}\label{eqn:leadingprimary}
    \chi_1(\tau) = aq^{m_1}+O(q^{m_1+1})
\end{align}
with $a\neq 0$. Indeed, since $m_1<m_0$, if the non-identity character started as $aq^{m_1-n}+\cdots$ with $a\neq0$ for a positive integer $n$, then the conformal dimension of the non-identity primary would be negative,
\begin{align}
    h = (m_1-n)-(m_0-1)<-n+1\leq 0,
\end{align}
in violation of unitarity. On the other hand, should $n$ be a negative integer, then the Wronskian index would be negative,
\begin{align}
    \ell =-6(m_0-1+m_1-n)+1\leq -6(m_0+m_1)+1<0
\end{align}
and the characters singular, where the right-most inequality can easily be confirmed by computing $-6(m_0+m_1)+1$ for each of the representations under consideration, using the data of Table \ref{tab:evenweight}.

For each $\varrho\neq \omega\varrho_{\A},\omega\varrho_{\G}$ in Eq.\ \eqref{eqn:rigidreps}, an admissible character of the form Eq.\ \eqref{eqn:leadingvacuum} and Eq.\ \eqref{eqn:leadingprimary} exists, and it is unique. To prove uniqueness, assume there exists another character $\chi'$ satisfying Eq.\ \eqref{eqn:leadingvacuum} and Eq.\ \eqref{eqn:leadingprimary}. Then the difference of these two characters, $\chi-\chi'$, does not have any polar terms in its $q$-expansion, and hence is a \emph{holomorphic} vector-valued modular form of weight zero for $\SL_2(\IZ)$. Because the modular representations we are working with have $\Gamma(N)$ as part of their kernels, it follows that the components of $\chi-\chi'$ are ordinary holomorphic modular forms for $\Gamma(N)$. It is known that the only such forms are the constant functions, but constant functions do not transform correctly under the rest of $\SL_2(\IZ)$ for these representations unless they are zero, in which case we have that $\chi=\chi'$.

To prove existence, we can explicitly construct the character in each case.
\begin{proposition}\label{prop:rigid1}
Let $\varrho$ be one of the admissible modular representations in Eq.\ \eqref{eqn:rigidreps}, except for $\omega\varrho_{\A}$ and $\omega\varrho_{\G}$. The unique quasi-character for $\varrho$ with central charge $0<c\leq 24$ is
\begin{align}
    \chi(\tau) = X^{(\ell)}_{c,+}(\tau)
\end{align}
where $\ell = -6(m_0+m_1)+7$ and $c=24(1-m_0)$. Furthermore, it is admissible.
\end{proposition}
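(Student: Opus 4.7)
The plan is a case-by-case verification, supplemented by the uniqueness argument already assembled immediately before the proposition. For each of the six representations under consideration, I would first read off the pair $(m_0, m_1)$ from Table \ref{tab:evenweight}, then compute $c = 24(1-m_0)$ and $\ell = -6(m_0+m_1) + 7$, and finally check that $\ell \in \{0, 2, 4\}$ so that one of the explicit hypergeometric constructions \eqref{eqn:Xl0}, \eqref{eqn:Xl2}, \eqref{eqn:Xl4} is applicable. These six pairs $(c,\ell)$ line up exactly with the $c \leq 24$ Type I quasi-characters in the ``Lee-Yang'' and ``$\A[1]$'' series of \cite{chandra2019towards}, as spelled out in the remark at the end of \S\ref{subsec:modulardata}.

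Next, I would verify that $X^{(\ell)}_{c,+}(\tau)$ realizes the representation $\varrho$. The $\mathcal{T}$-matrix from \eqref{eqn:Tl} matches $\mathrm{diag}(e^{2\pi i m_0}, e^{2\pi i m_1})$ by direct substitution, since $h^{(\ell)}_c$ in \eqref{eqn:h0}, \eqref{eqn:h2}, \eqref{eqn:h4} equals $m_1 - m_0 + 1$ precisely when $\ell = -6(m_0+m_1) + 7$. For the $\mathcal{S}$-matrix, plugging the chosen $(c,\ell)$ into \eqref{eqn:S0pm}, \eqref{eqn:S2pm}, \eqref{eqn:S4pm} and comparing to \eqref{eqn:SMatrix}, a short trigonometric computation shows $\mathcal{S}^{(\ell)}_{c,+} = \mathcal{S}_I$ for the appropriate index $I$; the sign ``$+$'' is fixed by the requirement $\mathcal{S}_{0i} > 0$. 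The leading $q$-expansions of $X^{(\ell)}_{c,+,0}$ and $X^{(\ell)}_{c,+,1}$ follow directly from the hypergeometric definitions and match \eqref{eqn:leadingvacuum}, \eqref{eqn:leadingprimary}.

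The main obstacle is admissibility, namely showing that the components of $X^{(\ell)}_{c,+}(\tau)$ have non-negative integer $q$-coefficients. Integrality can be established directly by substituting the Frobenius ansatz into the MLDE \eqref{mmseq} (or its $\ell=2$ analog \eqref{elltwoeq}, or the $j^{1/3}$-twist of \eqref{mmseq} for $\ell=4$) and checking that the resulting recursion has integer solutions for the six arithmetic choices of $c$ under consideration, once $N = N^{(\ell)}_c$ is fixed so as to make $\mathcal{S}^{(\ell)}_{c,N}$ symmetric. For non-negativity, I would invoke the correspondence spelled out in the remark at the end of \S\ref{subsec:modulardata}: the six quasi-characters are exactly the Type I characters from the $\A[1]$ and ``Lee-Yang'' series of \cite{chandra2019towards}, whose positivity in these central charge ranges is already known. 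Equivalently, in each of the six cases one can identify $X^{(\ell)}_{c,+}(\tau)$ with the characters of an explicit, well-studied RCFT such as $\A[1,1]$ or $\G[2,1]$ (or a coset thereof, in the $\ell=4$ cases), from which admissibility is manifest.

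Uniqueness then follows from the observation made immediately before the proposition: any other admissible character $\chi'$ with central charge $c = 24(1-m_0)$ and leading behavior \eqref{eqn:leadingvacuum}, \eqref{eqn:leadingprimary} would make $\chi - \chi'$ a holomorphic weight-zero vector-valued modular form for $\SL_2(\IZ)$, each component of which is an ordinary holomorphic modular form for the principal congruence subgroup $\Gamma(m) \subset \ker \varrho$; these components must therefore be constants, and because $\varrho$ is non-trivial the only constant vector transforming correctly under all of $\SL_2(\IZ)/\Gamma(m)$ is zero.
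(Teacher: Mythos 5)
Your proposal is correct and follows essentially the same route as the paper: the paper likewise establishes the leading $q$-behavior from the $h\geq 0$ and $\ell\geq 0$ constraints, proves uniqueness by noting that the difference of two such characters is a holomorphic weight-zero vector-valued form for $\Gamma(m)$ and hence vanishes, verifies the modular properties of $X^{(\ell)}_{c,+}$ from the results of \S\ref{subsec:MLDE}, and defers rigorous positivity to the identification with explicit RCFTs (the MMS theories and their $\E[8,1]$ tensor products / $\D[16,1]^+$ cosets) in the next subsection. Your added remark on checking integrality via the Frobenius recursion is a harmless elaboration of the same argument.
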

\noindent The remaining two rigid cases can be treated similarly, and one finds the following.
\begin{proposition}\label{prop:rigid2}
Let $\varrho=\omega\varrho_{\A}$ or $\omega\varrho_{\G}$. There is a unique quasi-character for $\varrho$ with central charge $0<c\leq 24$, and it is 
\begin{align}
    \chi(\tau) = X^{(\ell)}_{c,+}(\tau)
\end{align}
where $\ell = -6(m_0+m_1)+13$ and $c=24(1-m_1)$. Furthermore, it is admissible.
\end{proposition}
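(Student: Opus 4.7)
The plan is to mimic the argument that precedes Proposition \ref{prop:rigid1}, with the twist that for $\varrho = \omega\varrho_{\A}$ and $\varrho = \omega\varrho_{\G}$ the vacuum must be identified with the $m_1$-eigencomponent of $\mathcal{T}$ rather than the $m_0$-eigencomponent. Reading off from Table \ref{tab:evenweight}, the relevant pairs are $(m_0, m_1) = (7/8, 5/8)$ and $(m_0, m_1) = (19/20, 11/20)$, respectively; in both cases $m_0 + m_1 = 3/2$. A priori a quasi-character with $0 < c \leq 24$ could place the vacuum in either the $m_0$ or the $m_1$ slot, leading to central charge $c = 24(1-m_0)$ or $c = 24(1-m_1)$, respectively.

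First I would rule out $c = 24(1-m_0) \in \{3, 6/5\}$. If the vacuum leads with $q^{m_0 - 1}$, the non-identity character has the form $a q^{m_1 - n} + \cdots$ for some $a \neq 0$ and some integer $n$. Non-negativity of the conformal dimension $h = m_1 - m_0 + 1 - n$ forces $n \leq 0$, while non-negativity of the Wronskian index
\begin{equation}
\ell = \frac{c}{2} - 6h + 1 = 7 - 6(m_0 + m_1) + 6n = -2 + 6n
\end{equation}
forces $n \geq 1$, a contradiction. Hence this choice supports no quasi-characters.

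Next I would analyse the remaining option $c = 24(1-m_1) \in \{9, 54/5\}$. Writing the non-identity character as $a q^{m_0 - n} + \cdots$ with $a \neq 0$, the conformal dimension is $h = m_0 - m_1 + 1 - n$, and the same two inequalities now force $n \leq 1$ and $n \geq 1$, pinning down $n = 1$. This yields $h = m_0 - m_1$ and $\ell = -6(m_0 + m_1) + 13$, matching the values asserted in the proposition. Uniqueness of the quasi-character with this leading behaviour follows by the argument preceding Proposition \ref{prop:rigid1}: the difference of any two such quasi-characters is a holomorphic weight-zero vector-valued modular form for $\SL_2(\IZ)$ whose kernel contains some $\Gamma(m)$, so its components are ordinary holomorphic modular forms of weight zero on $\Gamma(m)$ and hence constants; transforming correctly under the full group then forces them to vanish.

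For existence, I would exhibit $\chi(\tau) = X^{(4)}_{c,+}(\tau) = j(\tau)^{1/3}\, X^{(0)}_{c-8,+}(\tau)$ from Eq.\ \eqref{eqn:Xl4} as the required quasi-character. Its central charge is $c$, its conformal dimension is $h^{(4)}_c = (c-6)/12$, which matches $m_0 - m_1$ in both cases, and its modular data, given by Eq.\ \eqref{eqn:S4pm} and Eq.\ \eqref{eqn:Tl}, agrees (in the basis of Eq.\ \eqref{eqn:SMatrix}--Eq.\ \eqref{eqn:TMatrix}) with $\varrho$. Admissibility is then verified by a direct $q$-expansion check; concretely, $X^{(4)}_{9,+}(\tau)$ and $X^{(4)}_{54/5,+}(\tau)$ are the characters of $\E[8,1]\otimes\A[1,1]$ and $\E[8,1]\otimes\G[2,1]$, respectively, whose positive-integrality is manifest. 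The main obstacle is really the bookkeeping in the first step --- recognising that unlike in Proposition \ref{prop:rigid1} the vacuum here sits in the $m_1$ slot --- after which the rest of the proof runs in parallel.
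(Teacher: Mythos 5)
Your proof is correct and follows the same route the paper intends when it says the remaining two rigid cases ``can be treated similarly'': pin down the leading exponents via $h\geq 0$ and $\ell\geq 0$, get uniqueness from the absence of nonzero holomorphic weight-zero vector-valued forms, and exhibit $X^{(4)}_{c,+}=j^{1/3}X^{(0)}_{c-8,+}$ (the characters of $\E[8,1]\otimes\A[1,1]$ and $\E[8,1]\otimes\G[2,1]$) for existence and positivity. Your first step is superfluous, though harmless: $\omega\varrho_{\A}$ and $\omega\varrho_{\G}$ are the specific matrix pairs $28\mathrm{W}$ and $40\mathrm{W}$, whose basis is already swapped so that $\mathcal{T}_{00}=e^{2\pi i m_1}$, hence $c\equiv 24(1-m_1)\bmod 24$ is forced by the vacuum's $\mathcal{T}$-eigenvalue and the $c=24(1-m_0)$ branch never arises (and note the modular data of $X^{(4)}_{c,+}$ matches $\varrho$ in this $W$-conjugated basis, not the basis of Eq.\ \eqref{eqn:SMatrix}--\eqref{eqn:TMatrix}).
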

That $X_{c,+}^{(\ell)}(\tau)$ has all the right properties (except perhaps for positivity) for Proposition \ref{prop:rigid1} and Proposition \ref{prop:rigid2} to be true follows from the results of \S\ref{subsec:MLDE}. The $q$-expansions of these characters are reported in Table \ref{tab:healthycharacters};  their positivity to finite order in the $q$-expansion can be seen by inspection, or more rigorously by appealing to the RCFTs we will assign to them in the next subsection.

\subsubsection*{One-parameter representations}

Let us now consider the remaining admissible representations, 
\begin{align}\label{eqn:oneparameterreps}
    \varrho = \omega^2\varrho_{\A},~\varrho_{\A}^\ast,~\omega^2\varrho_{\G},~\varrho_{\G}^\ast.
\end{align}
We will call these \emph{one-parameter representations} because, as we claim below, they support a one-parameter family of admissible characters. We will interpret the parameter (when it is non-zero) as being the degeneracy of the non-identity primary.

The $\mathcal{T}$-matrix of such a representation is $\mathcal{T}=\mathrm{diag}(e^{2\pi i m_1},e^{2\pi i m_0})$, and hence a unitary RCFT with such a representation as its modular data and $0<c\leq 24$ must have $c=24(1-m_1)$. Using arguments which are nearly identical to the ones employed for rigid representations, one can convince oneself of the following.
\begin{proposition} 
The most general quasi-character for any $\varrho$ in Eq.\ \eqref{eqn:oneparameterreps} with $0<c\leq 24$ takes the form 
\begin{align}\label{eqn:leading2}
\begin{split}
    \chi_0(\tau)&=q^{m_1-1}+O(q^{m_1}) \\
    \chi_1(\tau) &= aq^{m_0-1}+O(q^{m_0})
\end{split}
\end{align}
where $a$ is an integer which completely fixes $\chi$ to be
\begin{align}
    \chi(\tau) &= X_{c,+}^{(\ell)}(\tau) + aV X^{(\ell)}_{c',+}(\tau) .
\end{align}
Here, $\ell=-6(m_0+m_1)+7$, $c=24(1-m_1)$, $c'=24(1-m_0)$, and $V$ is the matrix defined in Eq.\ \eqref{eqn:UVW}.
\end{proposition}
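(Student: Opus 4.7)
The plan is to extend the rigid-case argument (Proposition \ref{prop:rigid1}) by allowing the non-identity component to carry exactly one additional degree of freedom: an integer $a$ encoding the coefficient of a permissible polar monomial $q^{m_0-1}$ in slot 1. I will proceed in three steps.

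\textbf{Step 1 (Constraining the leading $q$-behaviour).} By the same reasoning as in the rigid case, the eigenvalues of $\mathcal{T}$ force the slot-$0$ leading term of any quasi-character $\chi$ to be $q^{m_1-1}$, pinning down $c=24(1-m_1)$; any more negative power would force $c>24$. The slot-$1$ component must lead as $a\,q^{m_0-1+n}$ for some non-negative integer $a$-coefficient and some integer $n$. Unitarity $h=(m_0-m_1)+n\ge 0$, together with the inequality $0<m_0-m_1<1$ verified case-by-case from Table \ref{tab:evenweight}, forbids $n\le -1$. At the other extreme, $n\ge 2$ yields $\ell = 7-6(m_0+m_1)-6(n-1)<0$, and so by the argument of \S\ref{subsec:MLDE} the Wronskian cannot be holomorphic. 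The only possibilities are therefore $n\in\{0,1\}$, with the single free parameter being the integer $a$ (vanishing precisely when $n=1$).

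\textbf{Step 2 (Two explicit quasi-characters).} The MLDE solution $X^{(\ell)}_{c,+}(\tau)$ with $\ell=7-6(m_0+m_1)$ has slot-$0$ leading term $q^{m_1-1}$ and slot-$1$ leading term $N_c^{(\ell)}\,q^{m_0}$, realising the $n=1$ case; this is essentially the content of Proposition \ref{prop:rigid1}. The partner $V X^{(\ell)}_{c',+}(\tau)$, with $c'=24(1-m_0)$ and $V$ the swap of Eq.\ \eqref{eqn:UVW}, has slot-$1$ leading term $q^{m_0-1}$ and slot-$0$ leading term $N_{c'}^{(\ell)}\,q^{m_1}$, realising the $n=0$ case. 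To see both transform under $\varrho$ itself, one compares on generators: the $\mathcal{T}$-matrices of $\varrho^{(\ell)}_{c,+}$ and $V\varrho^{(\ell)}_{c',+}V^\dagger$ both equal $\mathrm{diag}(e^{2\pi i m_1},e^{2\pi i m_0})$, and the closed-form $\mathcal{S}$-matrices of Eq.\ \eqref{eqn:S0pm}, Eq.\ \eqref{eqn:S2pm}, Eq.\ \eqref{eqn:S4pm} coincide with the Mason-form $\mathcal{S}$-matrix of $\varrho$ after choosing the $+$ sign (case by case across the four one-parameter representations).

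\textbf{Step 3 (Spanning and uniqueness).} Given an arbitrary quasi-character $\chi$ for $\varrho$ with $0<c\le 24$, let $a\in\mathbb{Z}$ be the coefficient of $q^{m_0-1}$ in $\chi_1$ (possibly zero). By Step 2 the combination $\chi-X^{(\ell)}_{c,+}-a\,V X^{(\ell)}_{c',+}$ is again a weight-$0$ vector-valued modular form for $\varrho$, and by the matched leading coefficients it has no polar terms in either component; thus it is holomorphic on the compactified upper half-plane. Because $\ker\varrho\supset\Gamma(m)$ for some $m$, each component is an ordinary holomorphic modular form of weight $0$ for $\Gamma(m)$, hence constant. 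The only constant vector invariant under the non-trivial irreducible representation of $\mathrm{SL}_2(\mathbb{Z})/\Gamma(m)$ induced by $\varrho$ is the zero vector, so the difference vanishes and $\chi=X^{(\ell)}_{c,+}+a\,V X^{(\ell)}_{c',+}$.

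The main obstacle is the $\mathcal{S}$-matrix identification in Step 2: while the $\mathcal{T}$-matrix equality is essentially bookkeeping (the exponents match modulo $1$ by the definitions of $c,c',\ell$), verifying that the trigonometric closed forms of Eq.\ \eqref{eqn:S0pm}--\eqref{eqn:S4pm} reproduce the specific $\mathcal{S}$-matrices attached to each of $\omega^2\varrho_{\mathsf{A}},\varrho_{\mathsf{A}}^\ast,\omega^2\varrho_{\mathsf{G}},\varrho_{\mathsf{G}}^\ast$ requires a four-case check. Once that is done, Steps 1 and 3 are nearly mechanical adaptations of the rigid argument.
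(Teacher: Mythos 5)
Your proposal is correct and follows essentially the same route the paper intends: it is the ``nearly identical to the rigid case'' argument (leading-order constraints from $h\ge 0$ and $\ell\ge 0$, explicit MLDE solutions $X^{(\ell)}_{c,+}$ and $VX^{(\ell)}_{c',+}$ realizing the two allowed leading behaviours, and uniqueness via the vanishing of weight-zero holomorphic vector-valued forms), which the paper only sketches by reference to Proposition \ref{prop:rigid1}. The only points you leave implicit --- the four-case $\mathcal{S}$-matrix match and the integrality of the two basis solutions at these specific central charges --- are likewise left implicit in the paper (the latter being settled by the realizing theories of Table \ref{tab:healthycharacters}).
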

The $q$-expansions are again summarized in Table \ref{tab:healthycharacters}. The theories we match to them in the next subsections rigorously show that they can be made completely positive for suitable choices of the integer $a$.

\begingroup
\setlength{\tabcolsep}{3pt}
\begin{table}[]
\begin{scriptsize}
    \centering
    \begin{tabular}{c|c|c|c|l}
    \toprule
        \multirow{2}{*}{$\varrho$} & \multirow{2}{*}{$(c,\hat h)$} & \multirow{2}{*}{Theory} & \multirow{2}{*}{$\chi(\tau)$} & $q^{\frac{c}{24}}\chi_0(\tau)$\\
        & & & & $q^{\frac{c}{24}-h}\chi_1(\tau)$\\\midrule
        \multirow{2}{*}{$\varrho_{\A}$} & \multirow{2}{*}{$(1,\sfrac14)$} & \multirow{2}{*}{$\A[1,1]$} & \multirow{2}{*}{$[0,1,2]$} & \begin{tiny}$1+3 q+4 q^2+7 q^3+13 q^4+19 q^5+29 q^6+43 q^7+62 q^8+O(q^{9})$\end{tiny} \\
        & & & & \begin{tiny}$2+2 q+6 q^2+8 q^3+14 q^4+20 q^5+34 q^6+46 q^7+70 q^8+O(q^9)$\end{tiny} \\\hline
        \multirow{2}{*}{$\omega\varrho_{\A}$} & \multirow{2}{*}{$(9,\sfrac14)$} & \multirow{2}{*}{$\mathbf{M}^{(16)}\big/\E[7,1]$} & \multirow{2}{*}{$[4,9,2]$}&\begin{tiny}$1+251 q+4872 q^2+48123 q^3+335627 q^4+1868001 q^5+O(q^6) $\end{tiny} \\
        & & & & \begin{tiny}$2+498 q+8750 q^2+79248 q^3+522498 q^4+2786256 q^5+O(q^6) $\end{tiny}\\\hline
        \multirow{2}{*}{$\omega^2\varrho_{\A}$} & \multirow{2}{*}{$(17,\sfrac14)$} & \multirow{2}{*}{$\mathbf{M}^{(24)}\big/\E[7,1]$} & $\hspace{-.2in}[2,17,1632]$&\begin{tiny}$1+(88 a+323) q+(5192 a+60860) q^2+O(q^3) $\end{tiny} \\
        & & & $~+aV[2,11,88] $& \begin{tiny}$a+(1632-319 a) q+(162656-11077 a) q^2+O(q^3) $\end{tiny}\\\hline
        \multirow{2}{*}{$\varrho_{\A}^\ast$} & \multirow{2}{*}{$(23,\sfrac34)$} & \multirow{2}{*}{$\mathbf{M}^{(24)}\big/\A[1,1]$} & $[2,23,32384]$&\begin{tiny}$1+(10 a+69) q+(98 a+131905) q^2+(690 a+12195106) q^3+O(q^4) $\end{tiny} \\
        & & & $~+aV[2,5,10]$& \begin{tiny}$a+(32384-65 a) q+(4418944-450 a) q^2+O(q^3) $\end{tiny}\\\hline
        \multirow{2}{*}{$\omega\varrho_{\A}^\ast$} & \multirow{2}{*}{$(7,\sfrac34)$} & $\mathbf{M}^{(8)}\big/\A[1,1]$ &  \multirow{2}{*}{$[0,7,56]$}&\begin{tiny}$1+133 q+1673 q^2+11914 q^3+63252 q^4+278313 q^5+O(q^6) $\end{tiny} \\
        & & $\cong \E[7,1]$& & \begin{tiny}$56+968 q+7504 q^2+42616 q^3+194768 q^4+772576 q^5+O(q^6) $\end{tiny}\\\hline
        \multirow{2}{*}{$\omega^2\varrho_{\A}^\ast$} & \multirow{2}{*}{$(15,\sfrac34)$} & \multirow{2}{*}{$\mathbf{M}^{(16)}\big/\A[1,1]$} & \multirow{2}{*}{$[4,15,56]$} & \begin{tiny}$1+381 q+38781 q^2+1010062 q^3+14752518 q^5+O(q^6)$\end{tiny}\\
        & & & & \begin{tiny}$56+14856 q+478512 q^2+7841752 q^3+87285024 q^4+O(q^5)$\end{tiny}\\\hline
        \multirow{2}{*}{$\varrho_{\G}$} & \multirow{2}{*}{$(\sfrac{14}5,\sfrac25)$} & \multirow{2}{*}{$\G[2,1]$} & \multirow{2}{*}{$[0,\sfrac{14}5,7]$} &\begin{tiny}$ 1+14 q+42 q^2+140 q^3+350 q^4+840 q^5+1827 q^6+3858 q^7+O(q^8)$\end{tiny} \\
        & & & & \begin{tiny}$7+34 q+119 q^2+322 q^3+819 q^4+1862 q^5+4025 q^6+8218 q^7+O(q^8) $\end{tiny}\\\hline
        \multirow{2}{*}{$\omega\varrho_{\G}$} & \multirow{2}{*}{$(\sfrac{54}5,\sfrac25)$} & \multirow{2}{*}{$\mathbf{M}^{(16)}\big/\F[4,1]$} & \multirow{2}{*}{$[4,\sfrac{54}5,7]$} &\begin{tiny}$1+262 q+7638 q^2+103044 q^3+907932 q^4+6165852 q^5+O(q^6) $\end{tiny} \\
        & & & & \begin{tiny}$7+1770 q+37419 q^2+413314 q^3+3244881 q^4+20317202 q^5+O(q^6) $\end{tiny}\\\hline
        \multirow{2}{*}{$\omega^2\varrho_{\G}$} & \multirow{2}{*}{$(\sfrac{94}5,\sfrac25)$} & \multirow{2}{*}{$\mathbf{M}^{(24)}\big/\F[4,1]$} & $\hspace{-.2in}[2,\sfrac{94}5,4794]$ &\begin{tiny}$1+(46 a+188) q+(2093 a+62087) q^2+(27002 a+2923494) q^3+O(q^4)$\end{tiny} \\
        & & & $~+aV[2,\sfrac{46}5,46]$ & \begin{tiny}$a+(4794-184 a) q+(532134-3841 a) q^2+O(q^3)$\end{tiny}\\\hline
        \multirow{2}{*}{$\varrho_{\G}^\ast$} & \multirow{2}{*}{$(\sfrac{106}5,\sfrac35)$} & \multirow{2}{*}{$\mathbf{M}^{(24)}\big/\G[2,1]$} & $[2,\sfrac{106}5,15847]$ &\begin{tiny}$1+(17 a+106) q+(442 a+84429) q^2+(4063 a+5825442) q^3+O(q^4)  $\end{tiny} \\
        & & &$+aV[2,\sfrac{34}5,17]$& \begin{tiny}$a+(15847-102 a) q+(1991846-1088 a) q^2+O(q^3) $\end{tiny}\\\hline
        \multirow{2}{*}{$\omega\varrho_{\G}^\ast$} & \multirow{2}{*}{$(\sfrac{26}5,\sfrac35)$} & $\mathbf{M}^{(8)}\big/\G[2,1]$ & \multirow{2}{*}{$[0,\sfrac{26}5,26]$} &\begin{tiny}$1+52 q+377 q^2+1976 q^3+7852 q^4+27404 q^5+84981 q^6+O(q^7)$\end{tiny} \\
        & & $\cong\F[4,1]$ & & \begin{tiny}$26+299 q+1702 q^2+7475 q^3+27300 q^4+88452 q^5+260650 q^6+O(q^7) $\end{tiny}\\\hline
        \multirow{2}{*}{$\omega^2\varrho_{\G}^\ast$} & \multirow{2}{*}{$(\sfrac{66}5,\sfrac35)$} & \multirow{2}{*}{$\mathbf{M}^{(16)}\big/\G[2,1]$} & \multirow{2}{*}{$[4,\sfrac{66}5,26]$} &\begin{tiny}$1+300 q+17397 q^2+344672 q^3+4072878 q^4+35365284 q^5+O(q^6) $\end{tiny} \\
        & & & & \begin{tiny}$26+6747 q+183078 q^2+2566199 q^3+24832272 q^4+O(q^5) $\end{tiny}\\
        \bottomrule
    \end{tabular}
    \caption{Admissible characters for the admissible two-dimensional modular representations. Here, $[\ell,c,N]$ stands for the function $\X_{c,N}^{(\ell)}(\tau)$, and $\mathbf{M}^{(c)}$ denotes a chiral algebra with one primary and central charge $c$.}
    \label{tab:healthycharacters}
\end{scriptsize}
\end{table}
\endgroup

\subsection{Enumeration of theories}\label{subsec:enumeration}

In the previous subsection, we showed that there are 12 admissible representations which support admissible characters. Their data is summarized in Table \ref{tab:healthycharacters}. We now turn to classifying the unitary rational conformal field theories which have these forms as their characters.

\subsubsection[Theories with $0\leq c <8$]{{\boldmath Theories with $0\leq c <8$}}

We start with the vector-valued forms whose central charge lies in the range $0\leq c<8$. These correspond to the rigid representations
\begin{align}
    \varrho=\varrho_{\A},~\varrho_{\G},~\omega\varrho_{\G}^\ast,~\omega\varrho_{\A}^\ast
\end{align} 
at central charge 
\begin{align}
    c=1,~\sfrac{14}5,~\sfrac{26}5,~7
\end{align} 
respectively. It is known that the MMS theories
\begin{align}
    \A[1,1],~ \G[2,1], ~\F[4,1],~\E[7,1]
\end{align} 
have characters which match these cases. Moreover, it is possible to show that these are the \emph{unique} RCFTs with these characters \cite{mason2018vertex}. 

The proof of this is straightforward. The space of dimension one operators in a unitary RCFT must form a reductive Lie algebra with rank less than or equal to the central charge \cite{goddard1989meromorphic,dong2004rational}. For example, in the case of the representation $\varrho_{\A}$, inspection of the characters reveals that there are 3 currents, so the global symmetry algebra must be a reductive Lie algebra of dimension 3 and rank at most 1. Of course $\A[1]$ is the unique such Lie algebra, and the only possible level we can place it at without exceeding $c=1$ is level 1. In fact, the central charge of $\A[1,1]$ is equal to 1 and the algebra already has two primaries, so we know this must be the correct theory on the nose. The same argument goes through mutatis mutandis in the other three cases.

An alternative proof for the uniqueness of $\F[4,1]$ and $\E[7,1]$ which is more in the spirit of this paper is the following. Let us consider the genus $(c,\varrho)=(\sfrac{26}5,\omega\varrho_{\G}^\ast)$ (see \S\ref{subsec:coset} for the definition of genus). Unitarity of the representation $\varrho_{\G}$ implies that
\begin{align}
   \varrho_{\G}^T \cdot (\omega\varrho_{\G}^\ast) =\omega 
\end{align}
What this means is that \emph{any} theory supported at the genus $(\sfrac{26}{5},\omega\varrho_{\G}^\ast)$ can be glued to a theory supported at $(\sfrac{14}{5},\varrho_{\G})$ (the unique choice being $\G[2,1]$) to produce a chiral algebra with one primary operator in the genus $(8,\omega)$ (the unique choice being $\E[8,1]$). Conversely, if any theory in the genus $(\sfrac{26}{5},\omega\varrho_{\G}^\ast)$ is to exist, then it must be expressible as a coset of the form $\E[8,1]\big/\G[2,1]$. Any two $\G[2,1]$ subalgebras of $\E[8,1]$ are related by an automorphism of $\E[8,1]$, and so there is a unique theory of the form $\E[8,1]\big/\G[2,1]$ up to isomorphim: it is none other than $\F[4,1]$, so $\F[4,1]$ is the unique theory with $c=\sfrac{26}5$ and modular representation $\omega\varrho_{\G}^\ast$. A similar argument proves that $\E[7,1]$ is the unique theory at genus $(7,\omega\varrho_{\A}^\ast)$.

A reader familiar with \cite{mathur1988classification} may be surprised that we are claiming that there are only 4 theories in this range of central charge, so we pause for a moment to compare to older results. By studying the most general $\ell=0$ MLDE, namely Eq.\ \eqref{mmseq}, op.\ cit.\ found that the values of $c$ which render the solution $X^{(0)}_{c,+}(\tau)$ positive are 
\begin{align}
    c = \sfrac25,~1,~2,~\sfrac{14}5,~4,~\sfrac{26}5,~6,~7,~\sfrac{38}5,~8
\end{align}
corresponding to the MMS theories 
\begin{align}
    \mathsf{LY},~\A[1,1],~\A[2,1],~\G[2,1],~\D[4,1],~\F[4,1],~\E[6,1],~\E[7,1],~\E[7\frac12,1],~\E[8,1].
\end{align}
This identification is rigorous in the sense that the characters of the above theories can be independently computed and shown to be equal to those produced by the MLDE.\footnote{Several years after the above discovery and working from a very different point of view, Deligne \cite{deligne1996laserie} discovered the same series as a set of finite Lie algebras sharing some remarkable properties. This series of Lie algebras was also found independently by Cvitanovič and is described in \cite{cvitanovic2008group}.} Why has our classification not recovered these additional cases?

First, as is well-known, $\E[8,1]$ is a \CFT{1}, and its unique character has re-appeared as a solution to a rank-2 MLDE that happens to be modular invariant by itself. Thus it can be discarded since our goal in the present work is to classify only theories with precisely two primary fields. Next, $\A[2,1]$ and $\E[6,1]$ are actually theories with three primaries, two of which have the same character because they are complex conjugates of each other. Similarly $\D[4,1]$ describes a \CFT{4}, three of whose primaries are related by triality and therefore have the same character. We must discard these as well.

This leaves the quasi-characters having central charges $\sfrac25$ and $\sfrac{38}{5}$. In \cite{mathur1989reconstruction} it was shown that both have at least one negative fusion-rule coefficient (as calculated from the modular $\mathcal{S}$-matrix using the Verlinde formula \cite{verlinde1988fusion,huang2008vertex}). These were therefore discarded.\footnote{Today both of these are known as Intermediate Vertex Operator Algebras (IVOA), a generalisation of VOAs where negative fusion rules are permitted \cite{kawasetsu2014intermediate}. We will not include IVOAs in our classification in the present work.} The above works also noted that if one switches which character they consider to correspond to the vacuum and which they consider to correspond to the non-identity primary, then the $c=\sfrac25,\sfrac{38}5$ characters can be reinterpreted as characters with $c=-\sfrac{22}5,-\sfrac{58}5$. In the  former case, this leads to a {\em non-unitary} theory with consistent fusion rules. The  conformal dimension in this case is $-\sfrac15$ which permits us to identify this character with the non-unitary Lee-Yang minimal model. The $c=\sfrac{38}{5}$ case is different: it also acquires consistent fusion rules after exchanging characters, but  this exchange leads to a 57-fold degenerate vacuum state. Thus its characters  cannot be written in the form \eref{eqn:qexpansioncharacter} with unit coefficient for the first term in the first line and the exchanged theory also has to be rejected. 

Since we are restricting attention to unitary theories, both the $\sfrac25, \sfrac{38}{5}$ cases as well as their exchanged versions are unacceptable. Hence the short summary of \cite{mathur1988classification} for us is that there are precisely four unitary \CFT{2}s with $\ell=0$, namely the WZW models $\A[1,1],\G[2,1],\F[4,1],\E[7,1]$.

\subsubsection[Theories with $8\leq c <16$]{{\boldmath Theories with $8\leq c <16$}}

Next, we turn to theories with central charge in the range $8\leq c \leq 16$. Such theories must be associated to the rigid modular representations 
\begin{align}\label{eqn:repstobeclassified16}
   \varrho= \omega\varrho_{\A}, ~ \omega\varrho_{\G},~\omega^2\varrho_{\G}^\ast,~\omega^2\varrho_{\A}^\ast,
\end{align}
which occur at central charge 
\begin{align}
    c=9,~\sfrac{54}5,~\sfrac{66}5,~15
\end{align}
and support admissible characters of Wronskian index $\ell=4$. These can be paired with the representations 
\begin{align}\label{eqn:knownreps}
    \tilde{\varrho}=\omega\varrho_{\A}^\ast,~\omega\varrho_{\G}^\ast, \varrho_{\G},~\varrho_{\A}
\end{align}
respectively, which occur at central charges 
\begin{align}
    \tilde{c} = 7,~\sfrac{26}5,~\sfrac{14}5,~1.
\end{align}
Their pairing leads to the relations
\begin{align}
    \varrho^T\cdot \tilde{\varrho} = \omega^2, \ \ \ \ c+\tilde{c}=16.
\end{align}
At the level of theories, what this means is that \emph{any} theory in the genus $(c,\varrho)$ can be glued to an MMS theory in the dual genus $(\tilde{c},\tilde\varrho)$ to produce a chiral algebra with one primary operator in the genus $(16,\omega^2)$, the two choices being $\E[8,1]\otimes \E[8,1]$ and $\D[16,1]^+$. Conversely, any \CFT{2} with $8\leq c < 16$ can be obtained as a coset of $\E[8,1]\otimes\E[8,1]$ or $\D[16,1]^+$ by an MMS theory. Such cosets can be explicitly enumerated (using the results of Appendix \ref{app:liecurrentalgebras}), and one finds 
\begin{align}
\begin{split}
    {(9,\omega\varrho_{\A})}&= \{\E[8,1]\otimes \A[1,1]\cong (\E[8,1]\otimes\E[8,1])\big/\E[7,1]\} \\
    {(\sfrac{54}5,\omega\varrho_{\G})}&= \{\E[8,1]\otimes \G[2,1]\cong (\E[8,1]\otimes \E[8,1])\big/\F[4,1]\} \\
    (\sfrac{66}5,\omega^2\varrho_{\G}^\ast)&= \{\E[8,1]\otimes \F[4,1]\cong(\E[8,1]\otimes\E[8,1])\big/\G[2,1], ~ \D[16,1]^+\big/\G[2,1]\} \\
    (15,\omega^2\varrho_{\A}^\ast)&= \{\E[8,1]\otimes \E[7,1]\cong(\E[8,1]\otimes\E[8,1])\big/\A[1,1], ~ \D[16,1]^+\big/\A[1,1]\}.
\end{split}
\end{align}

We again compare to previous work. Admissible characters with $\ell=4$ were studied and classified in \cite{tener2017classification,chandra2019towards,grady2020classification}. Furthermore, these references identified at least one example of a chiral algebra for each admissible character. In particular, the theories  $\E[8,1]\otimes\mathrm{MMS}$ were identified as furnishing admissible $\ell=4$ characters, but the two cosets  $\D[16,1]^+\big/\G[2,1]$ and $\D[16,1]^+\big/\A[1,1]$ (which to our knowledge, have not been considered previously in the literature) did not appear because they are isospectral with $\E[8,1]\otimes \F[4,1]$ and $\E[8,1]\otimes \E[7,1]$, respectively. The rest of the admissible characters they found in the range $8\leq c <16$ don't arise in our classification for similar reasons as in the previous subsection.

In addition, the $\ell=4$ MLDE can be shown to permit three new admissible characters with $c=\sfrac{162}{5},33,34$. None of these values of $c$ are less than $25$, the case of interest here, so we can ignore them henceforth. However for the interested reader we mention that the $c=33$ case has been identified with a CFT in
\cite{grady2020classification}, and the $c=\sfrac{162}{5}$ case was found to be of IVOA type in \cite{chandra2019towards}. To our knowledge, the status of the $c=34$ case remains unclear; in any case, it has fusion rules of $\A[2,1]$-type and therefore can at best describe a theory with three primaries.

\subsubsection[Theories with $16\leq c <25$]{{\boldmath Theories with $16\leq c <25$}}

Finally, we enumerate the $p=2$ theories with central charge satisfying $16\leq c < 25$. 

First, we note that there are no theories in the range $24\leq c < 25$. Indeed, by examining the matrix element $\mathcal{T}_{00}$ in every two-dimensional admissible modular representation, we see that no chiral algebra can have $c~\mathrm{mod}~24$ in the interval $[0,1)$.

Thus, we restrict our attention from now on to chiral algebras with $16\leq c < 24$. These must belong to the genera 
\begin{align}
    (c,\varrho) = (17,\omega^2\varrho_{\A}),~(\sfrac{94}{5},\omega^2\varrho_{\G}),~(\sfrac{106}{5},\varrho_{\G}^\ast),~(23,\varrho_{\A}^\ast).
\end{align}
By the gluing principle, any theory in these genera must be obtained as a coset of a $c=24$ chiral algebra $\mathcal{A}$ with one irreducible representation (i.e.\ a Schellekens theory \cite{schellekens1993meromorphicc}) by one of the algebras $\E[7,1]$, $\F[4,1]$, $\G[2,1]$, or $\A[1,1]$, respectively (cf.\ Table \ref{tab:healthycharacters}). 

Fixing $\mathcal{A}$ to be a Schellekens theory, and $\mathcal{V}$ to be one of $\E[7,1],\F[4,1],\G[2,1]$, or $\A[1,1]$, there may be multiple cosets of $\mathcal{A}$ by $\mathcal{V}$. This has a chance of happening if there are multiple inequivalent ways of realizing $\mathcal{V}$ inside of $\mathcal{A}$ as a subalgebra, i.e.\ if there are multiple subalgebras isomorphic to $\mathcal{V}$ which are not related by an automorphism of $\mathcal{A}$ (cf.\ the discussion in \S\ref{subsec:coset}). 

As explained in Appendix \ref{app:liecurrentalgebras}, any embedding  $\mathcal{V}\hookrightarrow \mathcal{A}$ must factor through one of the simple Kac--Moody factors of $\mathcal{A}$ at level 1, 
\begin{align}
    \mathcal{V}\hookrightarrow \mathfrak{g}_1 \hookrightarrow \mathcal{A}.
\end{align}
Let us imagine embedding $\mathcal{V}$ into $\mathcal{A}$ in two different ways, and asking whether they lead to isomorphic cosets or not. There are three different scenarios. \\

\noindent\textbf{Scenario 1.} In the first scenario, we imagine two embeddings $\iota,\iota':\mathcal{V}\to\mathcal{A}$ which factor through two non-isomorphic simple factors $\mathfrak{g}_1,\mathfrak{g}_1'$, respectively, as in part (a) of Figure \ref{fig:equivcosets}. One generically expects that the two cosets so-obtained will be \emph{non-isomorphic}: indeed, in all the cases we consider, this can be seen by using the techniques of Appendix \ref{app:liecurrentalgebras} to compute the Lie algebras of the automorphism groups of the two cosets $\mathcal{A}\big/\iota(\mathcal{V})$ and $\mathcal{A}\big/\iota'(\mathcal{V})$, and verifying that they are not isomorphic. \\ 

\noindent\textbf{Scenario 2.} Another possibility is that $\mathcal{V}$ is embedded in two different ways into the \emph{same} simple factor $\mathfrak{g}_1$ of $\mathcal{A}$. By Proposition \ref{prop:subalgebras}, the images of any two such embeddings can be rotated into one another using the Lie group symmetries of $\mathcal{A}$, and hence lead to isomorphic coset theories. This situation is summarized in part (b) of Figure \ref{fig:equivcosets}. \\

\noindent \textbf{Scenario 3.} The most subtle possibility is that $\mathcal{V}$ is embedded into two different simple factors of $\mathcal{A}$ which are isomorphic to one another, as described in part (c) of Figure \ref{fig:equivcosets}. In this case, we must search the symmetry group of $\mathcal{A}$ for outer automorphisms which map one simple factor to the other. If such an automorphism can be found, then the cosets are isomorphic. Otherwise, further consideration is needed. Proposition \ref{prop:outerauts} guarantees that such an automorphism can be found in all but two cases.

\begin{figure}
     \centering
     \begin{subfigure}[b]{0.3\textwidth}
         \centering
\begin{tikzcd}[column sep=tiny,row sep=small]
 & \mathcal{V} \arrow[dl]\arrow[dr] \\
\mathfrak{g}_1 \arrow[dr] & & \mathfrak{g}_1'\arrow[dl] \\
& \mathfrak{g}_1\oplus\mathfrak{g}_1'\arrow[d] \\
& \mathcal{A}
\end{tikzcd}
         \caption{}
         \label{subfig:a}
     \end{subfigure}
     \hfill
     \begin{subfigure}[b]{0.3\textwidth}
         \centering
         \begin{tikzcd}
\mathcal{V}\arrow[d,bend left]\arrow[d,bend right] \\
\mathfrak{g}_1 \arrow[d]\\
\mathcal{A}
\end{tikzcd}
         \caption{}
         \label{subfig:b}
     \end{subfigure}
     \hfill
     \begin{subfigure}[b]{0.3\textwidth}
         \centering
         \begin{tikzcd}[column sep=tiny,row sep=small]
 & \mathcal{V} \arrow[dl]\arrow[dr] \\
\mathfrak{g}_1 \arrow[dr] & & \mathfrak{g}_1\arrow[dl] \\
& \mathfrak{g}_1\oplus\mathfrak{g}_1\arrow[d] \\
& \mathcal{A}
\end{tikzcd}
         \caption{}
         \label{subfig:c}
     \end{subfigure}
        \caption{Different pairs of embeddings of  $\mathcal{V}=\A[1,1],\G[2,1],\F[4,1],$ or $\E[7,1]$ into a $c=24$ chiral algebra  $\mathcal{A}$ with one simple module. }
        \label{fig:equivcosets}
\end{figure}
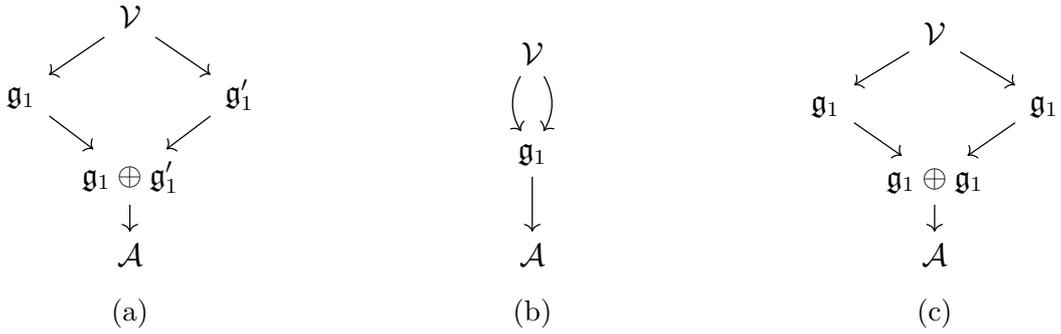

\begin{example}
To make the preceding discussion more concrete, consider the theory $\mathbf{S}(\A[7,1]^2\D[5,1]^2)$, numbered 49 in \cite{schellekens1993meromorphicc}.  The claim is that there are two inequivalent cosets of the form $\mathbf{S}(\A[7,1]^2\D[5,1]^2)\big/\A[1,1]$, one obtained by embedding $\A[1,1]$ into either of the $\A[7,1]$ factors, and the other obtained by embedding $\A[1,1]$ into either of the $\D[5,1]$ factors. We label these two cosets $\mathbf{S}(\A[7,1]^2\D[5,1]^2)\big/(\A[1,1]\hookrightarrow \A[7,1])$ and $\mathbf{S}(\A[7,1]^2\D[5,1]^2)\big/(\A[1,1]\hookrightarrow \D[5,1])$, respectively. 

For the purposes of this discussion, let us distinguish the two copies of $\A[7,1]$ by giving one of them a prime, and likewise for $\D[5,1]$; hence, we write the Schellekens theory as $\mathbf{S}(\A[7,1]\A[7,1]'\D[5,1]\D[5,1]')$. To see why the claim of the preceding paragraph is true, note that by Proposition \ref{prop:subalgebras}, any two $\A[1,1]$ subalgebras which reside in the same $\A[7,1]$ factor, as in Scenario 2, can be rotated into one another by the Lie group symmetries of $\mathbf{S}(\A[7,1]^2\D[5,1]^2)$, and hence lead to isomorphic cosets (likewise for $\A[7,1]$ replaced by $\D[5,1]$).

Furthermore, by Proposition \ref{prop:outerauts}, the following two cosets, which follow Scenario 3, are isomorphic
\begin{align}
   \mathbf{S}(\A[7,1]\A[7,1]'\D[5,1]\D[5,1]')\big/ (\A[1,1]\hookrightarrow \A[7,1])\cong \mathbf{S}(\A[7,1]\A[7,1]'\D[5,1]\D[5,1]')\big/ (\A[1,1]\hookrightarrow \A[7,1]')
\end{align}
because there is an outer automorphism of $\mathbf{S}(\A[7,1]\A[7,1]'\D[5,1]\D[5,1]')$ which maps $\A[7,1]$ to $\A[7,1]'$ (likewise for $\A[7,1],\A[7,1]'$ replaced by $\D[5,1],\D[5,1]'$). 

On the other hand, the two cosets 
\begin{align}
    \mathbf{S}(\A[7,1]\A[7,1]'\D[5,1]\D[5,1]')\big/(\A[1,1]\hookrightarrow \A[7,1]) , ~ \mathbf{S}(\A[7,1]\A[7,1]'\D[5,1]\D[5,1]')\big/ (\A[1,1]\hookrightarrow\D[5,1])
\end{align}
are inequivalent, as in Scenario 1. Indeed, by appealing to Table \ref{tab:centralizers}, we see that the former inherits an $\A[7,1]\A[5,1]\D[5,1]^2\mathsf{U}_1$ Kac--Moody algebra, whereas the latter inherits an $\A[7,1]^2\D[5,1]\A[3,1]\A[1,1]$ Kac--Moody algebra; in particular, their global symmetry groups have different Lie algebras, and hence they are distinct theories.
\end{example}

Using arguments analogous to the ones employed above, in conjunction with the techniques and data provided in Appendix \ref{app:liecurrentalgebras}, one can enumerate all the inequivalent cosets $\mathcal{A}\big/\mathcal{V}$, with $\mathcal{A}$ a Schellekens theory and $\mathcal{V}$ one of the four, two-primary MMS theories. However, there are two exceptions we must contend with (which occur when $\mathcal{A}$ is one of the exceptions to Proposition \ref{prop:outerauts}). We conclude by treating these in turn.\footnote{We thank Ching Hung Lam and Hiroki Shimakura for useful exchanges related to these exceptions.} \\

\noindent\textbf{Exception 1.} The first exception is when $\mathcal{A} = \mathbf{S}(\D[6,5]\A[1,1]\A[1,1]')$. In this case, there is no automorphism which maps $\A[1,1]$ into $\A[1,1]'$ and so there is a chance that the cosets $\mathbf{S}(\D[6,5]\A[1,1]\A[1,1]')\big/\A[1,1]$ and $\mathbf{S}(\D[6,5]\A[1,1]\A[1,1]')\big/\A[1,1]'$ are inequivalent. We can verify that this is indeed true. We note that, using the results of \cite{schellekens1993meromorphicc}, the first coset decomposes into $\D[6,5]\A[1,1]$-modules as 
\begin{align}
\begin{split}
    &\mathbf{S}(\D[6,5]\A[1,1]\A[1,1]')\big/\A[1,1]\cong \\ 
    & \hspace{.5in}\big[(0,0,0,0,0,0)\oplus (0,1,0,0,0,2)\oplus (0,1,0,0,2,0) \\
    &\hspace{.65in}\oplus(1,0,0,1,1,1)\oplus (0,0,2,0,0,0)\oplus(2,0,0,1,0,0)\big]\otimes \A[1,1] \\
&\hspace{.5in}\oplus \big[(0,0,0,0,0,5)\oplus(0,0,0,1,0,1)\oplus (2,0,0,1,0,1) \\
&\hspace{.65in}\oplus (1,1,0,0,1,0)+(0,0,2,0,0,1)+(0,1,0,0,2,1)\big]\otimes J
\end{split}
\end{align}
while the second coset decomposes as 
\begin{align}
    \begin{split}
    &\mathbf{S}(\D[6,5]\A[1,1]\A[1,1]')\big/\A[1,1]'\cong \\ 
        & \hspace{.5in}\big[(0,0,0,0,0,0)\oplus(0,1,0,0,0,2)\oplus(0,1,0,0,2,0) \\
        &\hspace{.65in}\oplus(1,0,0,1,1,1)\oplus(0,0,2,0,0,0)\oplus(2,0,0,1,0,0)\big]\otimes \A[1,1] \\
&\hspace{.5in}\oplus \big[(0,0,0,0,5,0)\oplus(2,0,0,1,1,0)\oplus(0,0,0,1,1,0)\\
&\hspace{.65in}\oplus(1,1,0,0,0,1)\oplus(0,0,2,0,1,0)\oplus(0,1,0,0,1,2)\big]\otimes J
    \end{split}
\end{align}
where here, $J$ is the unique non-vacuum module of $\A[1,1]$, and $(\lambda_1,\cdots,\lambda_6)$ denotes the $\D[6,5]$-module with Dynkin labels given by $\lambda_i$. As can be seen by inspection of the module structure, an isomorphism which relates these two cosets would need to restrict to an automorphism of $Z:=\mathbf{S}(\D[6,5]\A[1,1]\A[1,1]')\big/\A[1,1]\A[1,1]'$ which is a lift of the diagram outer-automorphism of $\D[6,5]$. But since $\mathcal{A}$ is a simple-current extension of $Z\otimes \A[1,1]\A[1,1]'$, if $Z$ had such an automorphism, then it could be lifted to an outer automorphism of $\mathcal{A}$, using Theorem 2.1 of \cite{shimakura2007lifts}. By consulting Table 1 of \cite{betsumiya2022automorphism}, one sees that the outer automorphism group of $\mathcal{A}$ is trivial, and hence no such outer automorphism exists. Thus, there cannot be an isomorphism which relates the two cosets, and they should be thought of as distinct theories. \\

\noindent\textbf{Exception 2.} The second exception is when $\mathcal{A} = \mathbf{S}(\A[7,4]\A[1,1]^3)$. In this case, there is a $\mathbb{Z}_2$ automorphism which swaps two of the $\A[1,1]$ factors, but does not connect them to the third $\A[1,1]$ factor. Thus, there are potentially two inequivalent cosets, which we might write as 
\begin{align}
    \mathbf{S}(\A[7,4]\A[1,1]'\A[1,1]^2)\big/\A[1,1], \ \ \ \ \mathbf{S}(\A[7,4]\A[1,1]'\A[1,1]^2)\big/\A[1,1]'.
\end{align}
One can verify that they are indeed inequivalent by using the results of \cite{schellekens1993meromorphicc} to decompose the two cosets into $\A[7,4]\A[1,1]^2$-modules, and confirming that the modules which appear cannot be related by any outer automorphism of $\A[7,4]\A[1,1]^2$. \\

\noindent The full list of 123 two-primary theories with $c<25$ is reported in Appendix \ref{app:theories}.

\clearpage

\section{Future Directions}\label{sec:future}

In this work, we have classified all unitary, rational \CFT{2}s with central charge $c<25$. Every such theory is either one of four MMS theories --- $\A[1,1]$, $\G[2,1]$, $\F[4,1]$, $\E[7,1]$ --- or a coset of a chiral algebra with one primary by such an MMS theory. Such theories can be explicitly enumerated, and their basic properties (characters, Kac--Moody subalgebras, etc.) computed. The complete list appears in Appendix \ref{app:theories}.  There are a number of questions which are ripe for future research.

The most immediate question one can ask is whether or not it is possible to push the classification of \CFT{2}s to higher central charge. The next case where one expects to find theories is $c=25$. The logic of the gluing principle says that classifying such theories is equivalent to classifying \CFT{1}s with $c=32$ and an $\E[7,1]$ subalgebra. Some examples here are known \cite{chandra2019curiosities}, but the list is far from complete. One motivation for studying this question comes from penumbral moonshine \cite{harvey2015traces,duncan2021overview,duncan2022modular,duncan2022two} (see also Appendix \ref{app:penumbral}). To explain this connection, note that the most general quasi-character with central charge $c=25$ is
\begin{align}\label{eqn:penumbralquasicharacter}
    \chi(\tau)=X_{25,+}^{(0)}(\tau)+\beta  V X_{19,+}^{(0)}(\tau)+\gamma X_{1,+}^{(0)}(\tau).
\end{align}
In \cite{duncan2021overview}, it was found that if one sets $\beta=\gamma=0$, then the function
\begin{align}
\begin{split}
    F^{(-4,1)}(\tau)&= 2\eta(\tau)\chi(\tau) \\
    &= \begin{cases}
    2q^{-1}-492+2\cdot 142884q+2\cdot 18473000q^2+\cdots \\
    2\cdot 565760q^{\frac54}+2\cdot 51179520q^{\frac94}+2\cdot 1912896000q^{\frac{13}4}+\cdots
    \end{cases}
\end{split}
\end{align}
enjoys a tight relationship with the representation theory of the exceptional finite group $G^{(-4,1)}=2.F_4(2).2$.\footnote{In the Gap \cite{GAP4} database, this group appears as ``\texttt{Isoclinic(2.F4(2).2)}''.} For example, $142884$ and $565760$ are precisely the dimensions of certain irreducible representations of $G^{(-4,1)}$. It would be remarkable if this connection could find some physical explanation in terms of VOA, or VOA-adjacent structures. There are of course obstacles to a unitary RCFT interpretation of this function as written: the vacuum is doubly degenerate and there is a negativity in the spectrum. Note however that the negativity can be cured by taking $\gamma \geq 246$ in Eq.\ \eqref{eqn:penumbralquasicharacter}, though it is not obvious that one can do this without sacrificing the appearance of the group $G^{(-4,1)}$. Similarly, it might be found in the final analysis that the overall factor of 2 which leads to a degenerate vacuum is inessential, or alternatively, that the degenerate vacuum has a natural interpretation in terms of intermediate vertex operator algebras \cite{kawasetsu2014intermediate}. Regardless, having complete knowledge of the genus $(c,\varrho)=(25,\varrho_{\A})$ may offer valuable clues for teasing out the physical origin of penumbral moonshine.

Another question is how effectively one can classify \CFT{p}s with $p>2$ and $c\lesssim24$ using the strategy we have employed in this paper.\footnote{Note added: This problem is addressed in \cite{brayhaun}.} One of the simplifying features of the $p=2$ analysis was that the only cosets we were required to take involved level 1 affine Kac--Moody algebras based on simple Lie algebras. Thus, the enumeration of these cosets  more or less reduced to the enumeration of index-1 embeddings of ordinary simple Lie algebras into simple Lie algebras, a problem for which there are an abundance of tools available. In the $p>2$ setting, the cosets required will be more involved. For example, the Ising model is the unique \CFT{3} at $c=\sfrac12$, and therefore the classification of theories at $c=\sfrac{47}2$ with conjugate modular data will involve having a detailed understanding of equivalence classes of Ising subalgebras in meromorphic conformal field theories. In the monster CFT $V^\natural$, it is understood \cite{miyamoto1996griess} that Ising subalgebras are in one-to-one correspondence with involutions in a particular conjugacy class of the monster group $\mathbb{M}$, which in the Atlas \cite{conway1985atlas} conventions is called the 2A conjugacy class. All of these Ising subalgebras are conjugate to one another, and so the coset $V^\natural\big/ \mathsf{L}_{\sfrac12}$ is always isomorphic to the Baby Monster VOA $\textsl{V}\mathbb{B}^\natural$ \cite{hohn1996selbstduale,hohn2012mckay} (see also \cite{Hampapura:2016mmz,lin2021duality,bae2021conformal}).\footnote{For us, the Baby Monster VOA $\textsl{V}\mathbb{B}^\natural$ refers to the bosonic subalgebra of the vertex operator ``super'' algebra defined in \cite{hohn1996selbstduale}, i.e.\ the subalgebra consisting just of the operators with integral conformal dimension.} Beyond the monster CFT, there are many beautiful results \cite{dong1996associative,dong1998framed,miyamoto1996griess,lam2015classification,lam2011constructions,lam2012quadratic,lam2005mckay,miyamoto2004new} which contend with the structure of Ising subalgebras in meromorphic CFTs; it would be interesting if everything that is known at present could be synthesized to produce a complete classification of \CFT{3}s with $c=\sfrac{47}{2}$.

Our classification naturally pairs $p=2$ chiral algebras $\mathcal{V}$, $\widetilde{\mathcal{V}}$ which can be glued together to produce a $p=1$ theory. Recently it has been appreciated that a new class of Hecke operators \cite{harvey2018hecke,harvey2020galois}, which act on vector-valued modular forms rather than scalar modular forms, often relate the characters of $\widetilde{\mathcal{V}}$ to those of $\mathcal{V}$ when the central charge of $\widetilde{\mathcal{V}}$ is an integer multiple of the central charge of $\mathcal{V}$ (see also \cite{bae2021conformal,Duan:2022ltz}). It would be interesting to explore how this Hecke correspondence plays out in the context of the \CFT{2}s classified in this paper. Even more interesting would be if one could define a generalized class of Hecke operators which relate the characters of $\widetilde{\mathcal{V}}$ to those of $\mathcal{V}$ even when the central charge of $\widetilde{\mathcal{V}}$ is only a fractional multiple of the central charge of $\mathcal{V}$. 

Finally, it would be interesting if one could use the techniques we have described in this paper to classify CFTs, but relaxing some of the assumptions of unitarity, regularity, etc. Among the many reasons to study non-unitary theories, the prospect of leveraging them to learn about 4d $\mathcal{N}=2$ SCFTs \cite{beem2015infinite,kaidi2022needles,beem2018vertex} is a comparatively recent and exciting one.

\section*{Acknowledgements}
We are happy to thank Nathan Benjamin, Thomas Creutzig, Arpit Das, Chethan N. Gowdigere, Greg Moore, Ching-Hung Lam, Daniel Ranard, and Hiroki Shimakura for insightful exchanges. We especially thank Hiroki Shimakura and Ching-Hung Lam for explaining their work to us, and Nathan Benjamin for comments which greatly simplified the presentation of our results. BR gratefully acknowledges support from NSF grant PHY 1720397. SM acknowledges the kind hospitality of Shamit Kachru and the Department of Physics, Stanford University, where this work was initiated. He is also grateful for support from a grant by Precision Wires India Ltd.\ for String Theory and Quantum Gravity research at IISER Pune. \\

\noindent The authors have no relevant financial or non-financial interests to disclose.

\clearpage

\appendix

\section{Theories}\label{app:theories}

This appendix contains tables which enumerate all theories with $p=2$ primaries and $c<25$, ordered by increasing central charge. In each table, the first column specifies the number of the theory. The second column gives a description of the theory. The notation $\mathbf{S}(\mathsf{X})$ denotes the unique $c=24$ meromorphic CFT with Kac--Moody subalgebra $\mathsf{X}$.  The notation $\mathbf{S}(\cdots \mathsf{X}_{r,1} \cdots )\big/(\mathsf{Y}_{r',1}\hookrightarrow \mathsf{X}_{r,1})$ stands for the coset of the Schellekens theory $\mathsf{S}(\cdots \mathsf{X}_{r,1}\cdots)$ by $\mathsf{Y}_{r',1}$ (which is always an MMS theory) embedded into the factor $\mathsf{X}_{r,1}$. When there is only a single kind of factor at level 1 in the Kac--Moody subalgebra of the Schellekens theory, we abbreviate this coset to $\mathbf{S}(\cdots \mathsf{X}_{r,1}\cdots)\big/ \mathsf{Y}_{r',1}$. The third column is the genus, i.e.\ a tuple consisting of the central charge and the modular tensor category which characterizes the representation theory of the chiral algebra. Sem stands for the ``Semion MTC'', Fib stands for the ``Fibonacci MTC'', and $\overline{\mathrm{Sem}}$, $\overline{\mathrm{Fib}}$ are their complex conjugates (see \cite{rowell2009classification} for details). The fourth column is the conformal dimension of the non-identity primary and the fifth column is the Wronskian index. The sixth column provides a subalgebra of the theory with the same central charge (i.e.\ a conformal subalgebra). This subalgebra captures the full Kac--Moody symmetry of the theory; Abelian factors are represented by $\mathsf{U}_1$. In the cases that the Kac--Moody subalgebra does not saturate the central charge of the full theory, the specified subalgebra includes a tensor product of minimal model chiral algebras, $\mathsf{L}_c$ with $c<1$, which make up the difference. The seventh column indicates the degeneracy $d$ of the non-identity primary. Finally, chiral algebras which are isomorphic to lattice VOAs are decorated by a $\star$.

\clearpage

\begin{table}[]
\begin{footnotesize}
    \begin{center}
    \begin{tabular}{r|c|c |c |c|c|c}
        No. & Theory & $(c,\mathscr{C})$ & $h$ & $\ell$ & Subalgebra & $d$ \\
        \midrule
$\star$ \rownumber & $\A[1,1]$ & $(1,\mathrm{Sem})$ & $\sfrac{1}{4}$ & $0$ & $\A[1,1]$ & $2 $ \\
\rownumber & $\G[2,1]$ & $(\sfrac{14}{5},\mathrm{Fib})$ & $\sfrac{2}{5}$ & $0$ & $\G[2,1]$ & $7 $ \\
\rownumber & $\F[4,1]\cong \E[8,1]\big/ \G[2,1]$ & $(\sfrac{26}{5},\overline{\mathrm{Fib}})$ & $\sfrac{3}{5}$ & $0$ & $\F[4,1]$ & $26 $ \\
$\star$ \rownumber & $\E[7,1]\cong \E[8,1]\big/\A[1,1]$ & $(7,\overline{\mathrm{Sem}})$ & $\sfrac{3}{4}$ & $0$ & $\E[7,1]$ & $56 $ \\
$\star$ \rownumber & $\E[8,1]\A[1,1]\cong \E[8,1]^2\big/\E[7,1]$ & $(9,\mathrm{Sem})$ & $\sfrac{1}{4}$ & $4$ & $\A[1,1]\E[8,1]$ & $2 $ \\
\rownumber & $\E[8,1]\G[2,1]\cong \E[8,1]^2\big/\F[4,1]$ & $(\sfrac{54}{5},\mathrm{Fib})$ & $\sfrac{2}{5}$ & $4$ & $\G[2,1]\E[8,1]$ & $7 $ \\
\rownumber & $\E[8,1]\F[4,1]\cong \E[8,1]^2\big/\G[2,1]$ & $(\sfrac{66}{5},\overline{\mathrm{Fib}})$ & $\sfrac{3}{5}$ & $4$ & $\F[4,1]\E[8,1]$ & $26 $ \\
\rownumber & $\D[16,1]^+\big/\G[2,1]$ & $(\sfrac{66}{5},\overline{\mathrm{Fib}})$ & $\sfrac{3}{5}$ & $4$ & $\B[12,1]\mathsf{L}_{\sfrac{7}{10}}$ & $26 $ \\
$\star$ \rownumber & $\E[8,1]\E[7,1]\cong\E[8,1]^2\big/\A[1,1]$ & $(15,\overline{\mathrm{Sem}})$ & $\sfrac{3}{4}$ & $4$ & $\E[7,1]\E[8,1]$ & $56 $ \\
$\star$ \rownumber & $\D[16,1]^+\big/\A[1,1]$ & $(15,\overline{\mathrm{Sem}})$ & $\sfrac{3}{4}$ & $4$ & $\D[14,1]\A[1,1]$ & $56 $ \\
$\star$ \rownumber & $\SC[\D[10,1]\E[7,1]^{2}][(\E[7,1]\hookrightarrow\E[7,1])]$ & $(17,\mathrm{Sem})$ & $\sfrac{5}{4}$ & $2$ & $\D[10,1]\E[7,1]$ & $1632 $ \\
$\star$ \rownumber & $\SC[\A[17,1]\E[7,1]][(\E[7,1]\hookrightarrow\E[7,1])]$ & $(17,\mathrm{Sem})$ & $\sfrac{5}{4}$ & $2$ & $\A[17,1]$ & $1632 $ \\
$\star$ \rownumber & $\E[8,1]^{2}\A[1,1]\cong \E[8,1]^3\big/\E[7,1]$ & $(17,\mathrm{Sem})$ & $\sfrac{1}{4}$ & $8$ & $\E[8,1]^{2}\A[1,1]$ & $2 $ \\
$\star$ \rownumber & $\SC[\D[16,1]\E[8,1]][(\E[7,1]\hookrightarrow\E[8,1])]$ & $(17,\mathrm{Sem})$ & $\sfrac{1}{4}$ & $8$ & $\D[16,1]\A[1,1]$ & $2 $ \\
\rownumber & $\SC[\C[8,1]\F[4,1]^{2}][(\F[4,1]\hookrightarrow\F[4,1])]$ & $(\sfrac{94}{5},\mathrm{Fib})$ & $\sfrac{7}{5}$ & $2$ & $\C[8,1]\F[4,1]$ & $4794 $ \\
\rownumber & $\SC[\E[7,2]\B[5,1]\F[4,1]][(\F[4,1]\hookrightarrow\F[4,1])]$ & $(\sfrac{94}{5},\mathrm{Fib})$ & $\sfrac{7}{5}$ & $2$ & $\E[7,2]\B[5,1]$ & $4794 $ \\
\rownumber & $\SC[\E[6,1]^{4}][\F[4,1]]$ & $(\sfrac{94}{5},\mathrm{Fib})$ & $\sfrac{2}{5}$ & $8$ & $\E[6,1]^{3}\mathsf{L}_{\sfrac{4}{5}}$ & $1 $ \\
\rownumber & $\SC[\A[11,1]\D[7,1]\E[6,1]][(\F[4,1]\hookrightarrow\E[6,1])]$ & $(\sfrac{94}{5},\mathrm{Fib})$ & $\sfrac{2}{5}$ & $8$ & $\A[11,1]\D[7,1]\mathsf{L}_{\sfrac{4}{5}}$ & $1 $ \\
\rownumber & $\SC[\D[10,1]\E[7,1]^{2}][(\F[4,1]\hookrightarrow\E[7,1])]$ & $(\sfrac{94}{5},\mathrm{Fib})$ & $\sfrac{2}{5}$ & $8$ & $\D[10,1]\E[7,1]\A[1,3]$ & $3 $ \\
\rownumber & $\SC[\A[17,1]\E[7,1]][(\F[4,1]\hookrightarrow\E[7,1])]$ & $(\sfrac{94}{5},\mathrm{Fib})$ & $\sfrac{2}{5}$ & $8$ & $\A[17,1]\A[1,3]$ & $3 $ \\
\rownumber & $\E[8,1]^2\G[2,1]\cong \E[8,1]^{3}\big/\F[4,1]$ & $(\sfrac{94}{5},\mathrm{Fib})$ & $\sfrac{2}{5}$ & $8$ & $\E[8,1]^{2}\G[2,1]$ & $7 $ \\
\rownumber & $\SC[\D[16,1]\E[8,1]][(\F[4,1]\hookrightarrow\E[8,1])]$ & $(\sfrac{94}{5},\mathrm{Fib})$ & $\sfrac{2}{5}$ & $8$ & $\D[16,1]\G[2,1]$ & $7 $ \\
\rownumber & $\SC[\E[6,3]\G[2,1]^{3}][\G[2,1]]$ & $(\sfrac{106}{5},\overline{\mathrm{Fib}})$ & $\sfrac{8}{5}$ & $2$ & $\E[6,3]\G[2,1]^{2}$ & $15847 $ \\
\rownumber & $\SC[\D[7,3]\A[3,1]\G[2,1]][(\G[2,1]\hookrightarrow\G[2,1])]$ & $(\sfrac{106}{5},\overline{\mathrm{Fib}})$ & $\sfrac{8}{5}$ & $2$ & $\D[7,3]\A[3,1]$ & $15847 $ \\
\rownumber & $\SC[\D[6,2]\C[4,1]\B[3,1]^{2}][(\G[2,1]\hookrightarrow\B[3,1])]$ & $(\sfrac{106}{5},\overline{\mathrm{Fib}})$ & $\sfrac{3}{5}$ & $8$ & $\D[6,2]\C[4,1]\B[3,1]\mathsf{L}_{\sfrac{7}{10}}$ & $1 $ \\
\rownumber & $\SC[\A[9,2]\A[4,1]\B[3,1]][(\G[2,1]\hookrightarrow\B[3,1])]$ & $(\sfrac{106}{5},\overline{\mathrm{Fib}})$ & $\sfrac{3}{5}$ & $8$ & $\A[9,2]\A[4,1]\mathsf{L}_{\sfrac{7}{10}}$ & $1 $ \\
\rownumber & $\SC[\D[4,1]^{6}][\G[2,1]]$ & $(\sfrac{106}{5},\overline{\mathrm{Fib}})$ & $\sfrac{3}{5}$ & $8$ & $\D[4,1]^{5}\mathsf{L}_{\sfrac{1}{2}}\mathsf{L}_{\sfrac{7}{10}}$ & $2 $ \\
\rownumber & $\SC[\A[5,1]^{4}\D[4,1]][(\G[2,1]\hookrightarrow\D[4,1])]$ & $(\sfrac{106}{5},\overline{\mathrm{Fib}})$ & $\sfrac{3}{5}$ & $8$ & $\A[5,1]^{4}\mathsf{L}_{\sfrac{1}{2}}\mathsf{L}_{\sfrac{7}{10}}$ & $2 $ \\
\rownumber & $\SC[\D[8,2]\B[4,1]^{2}][\G[2,1]]$ & $(\sfrac{106}{5},\overline{\mathrm{Fib}})$ & $\sfrac{3}{5}$ & $8$ & $\D[8,2]\B[4,1]\mathsf{U}_1\mathsf{L}_{\sfrac{7}{10}}$ & $3 $ \\
\rownumber & $\SC[\C[6,1]^{2}\B[4,1]][(\G[2,1]\hookrightarrow\B[4,1])]$ & $(\sfrac{106}{5},\overline{\mathrm{Fib}})$ & $\sfrac{3}{5}$ & $8$ & $\C[6,1]^{2}\mathsf{U}_1\mathsf{L}_{\sfrac{7}{10}}$ & $3 $ \\
\rownumber & $\SC[\A[7,1]^{2}\D[5,1]^{2}][(\G[2,1]\hookrightarrow\D[5,1])]$ & $(\sfrac{106}{5},\overline{\mathrm{Fib}})$ & $\sfrac{3}{5}$ & $8$ & $\A[7,1]^{2}\D[5,1]\A[1,2]\mathsf{L}_{\sfrac{7}{10}}$ & $4 $ \\
\rownumber & $\SC[\C[8,1]\F[4,1]^{2}][(\G[2,1]\hookrightarrow\F[4,1])]$ & $(\sfrac{106}{5},\overline{\mathrm{Fib}})$ & $\sfrac{3}{5}$ & $8$ & $\C[8,1]\F[4,1]\A[1,8]$ & $5 $ \\
\rownumber & $\SC[\E[7,2]\B[5,1]\F[4,1]][(\G[2,1]\hookrightarrow\B[5,1])]$ & $(\sfrac{106}{5},\overline{\mathrm{Fib}})$ & $\sfrac{3}{5}$ & $8$ & $\E[7,2]\A[1,1]^{2}\F[4,1]\mathsf{L}_{\sfrac{7}{10}}$ & $5 $ \\
\rownumber & $\SC[\E[7,2]\B[5,1]\F[4,1]][(\G[2,1]\hookrightarrow\F[4,1])]$ & $(\sfrac{106}{5},\overline{\mathrm{Fib}})$ & $\sfrac{3}{5}$ & $8$ & $\E[7,2]\B[5,1]\A[1,8]$ & $5 $ \\
\rownumber & $\SC[\D[6,1]^{4}][\G[2,1]]$ & $(\sfrac{106}{5},\overline{\mathrm{Fib}})$ & $\sfrac{3}{5}$ & $8$ & $\D[6,1]^{3}\B[2,1]\mathsf{L}_{\sfrac{7}{10}}$ & $6 $ \\
\rownumber & $\SC[\A[9,1]^{2}\D[6,1]][(\G[2,1]\hookrightarrow\D[6,1])]$ & $(\sfrac{106}{5},\overline{\mathrm{Fib}})$ & $\sfrac{3}{5}$ & $8$ & $\A[9,1]^{2}\B[2,1]\mathsf{L}_{\sfrac{7}{10}}$ & $6 $ \\
\rownumber & $\SC[\C[10,1]\B[6,1]][(\G[2,1]\hookrightarrow\B[6,1])]$ & $(\sfrac{106}{5},\overline{\mathrm{Fib}})$ & $\sfrac{3}{5}$ & $8$ & $\C[10,1]\A[3,1]\mathsf{L}_{\sfrac{7}{10}}$ & $7 $ \\
\rownumber & $\SC[\E[6,1]^{4}][\G[2,1]]$ & $(\sfrac{106}{5},\overline{\mathrm{Fib}})$ & $\sfrac{3}{5}$ & $8$ & $\E[6,1]^{3}\A[2,2]$ & $8 $ \\
\rownumber & $\SC[\A[11,1]\D[7,1]\E[6,1]][(\G[2,1]\hookrightarrow\D[7,1])]$ & $(\sfrac{106}{5},\overline{\mathrm{Fib}})$ & $\sfrac{3}{5}$ & $8$ & $\A[11,1]\B[3,1]\E[6,1]\mathsf{L}_{\sfrac{7}{10}}$ & $8 $ \\
\rownumber & $\SC[\A[11,1]\D[7,1]\E[6,1]][(\G[2,1]\hookrightarrow\E[6,1])]$ & $(\sfrac{106}{5},\overline{\mathrm{Fib}})$ & $\sfrac{3}{5}$ & $8$ & $\A[11,1]\D[7,1]\A[2,2]$ & $8 $ 
    \end{tabular}
    \end{center}
\end{footnotesize}
\end{table}

\clearpage

   \begin{table}[]
\begin{footnotesize}
    \begin{center}
    \begin{tabular}{r|c|c |c |c|c|c}
        No. & Theory & $(c,\mathscr{C})$ & $h$ & $\ell$ & Subalgebra & $d$ \\
        \midrule
       \rownumber & $\SC[\D[8,1]^{3}][\G[2,1]]$ & $(\sfrac{106}{5},\overline{\mathrm{Fib}})$ & $\sfrac{3}{5}$ & $8$ & $\D[8,1]^{2}\B[4,1]\mathsf{L}_{\sfrac{7}{10}}$ & $10 $ \\
\rownumber & $\SC[\E[8,2]\B[8,1]][\G[2,1]]$ & $(\sfrac{106}{5},\overline{\mathrm{Fib}})$ & $\sfrac{3}{5}$ & $8$ & $\E[8,2]\D[5,1]\mathsf{L}_{\sfrac{7}{10}}$ & $11 $ \\
\rownumber & $\SC[\A[15,1]\D[9,1]][(\G[2,1]\hookrightarrow\D[9,1])]$ & $(\sfrac{106}{5},\overline{\mathrm{Fib}})$ & $\sfrac{3}{5}$ & $8$ & $\A[15,1]\B[5,1]\mathsf{L}_{\sfrac{7}{10}}$ & $12 $ \\
\rownumber & $\SC[\D[10,1]\E[7,1]^{2}][(\G[2,1]\hookrightarrow\D[10,1])]$ & $(\sfrac{106}{5},\overline{\mathrm{Fib}})$ & $\sfrac{3}{5}$ & $8$ & $\B[6,1]\E[7,1]^{2}\mathsf{L}_{\sfrac{7}{10}}$ & $14 $ \\
\rownumber & $\SC[\D[10,1]\E[7,1]^{2}][(\G[2,1]\hookrightarrow\E[7,1])]$ & $(\sfrac{106}{5},\overline{\mathrm{Fib}})$ & $\sfrac{3}{5}$ & $8$ & $\D[10,1]\E[7,1]\C[3,1]$ & $14 $ \\
\rownumber & $\SC[\A[17,1]\E[7,1]][(\G[2,1]\hookrightarrow\E[7,1])]$ & $(\sfrac{106}{5},\overline{\mathrm{Fib}})$ & $\sfrac{3}{5}$ & $8$ & $\A[17,1]\C[3,1]$ & $14 $ \\
\rownumber & $\SC[\D[12,1]^{2}][\G[2,1]]$ & $(\sfrac{106}{5},\overline{\mathrm{Fib}})$ & $\sfrac{3}{5}$ & $8$ & $\D[12,1]\B[8,1]\mathsf{L}_{\sfrac{7}{10}}$ & $18 $ \\
\rownumber & $\E[8,1]^2\F[4,1]\cong\E[8,1]^{3}\big/\G[2,1]$ & $(\sfrac{106}{5},\overline{\mathrm{Fib}})$ & $\sfrac{3}{5}$ & $8$ & $\E[8,1]^{2}\F[4,1]$ & $26 $ \\
\rownumber & $\SC[\D[16,1]\E[8,1]][(\G[2,1]\hookrightarrow\D[16,1])]$ & $(\sfrac{106}{5},\overline{\mathrm{Fib}})$ & $\sfrac{3}{5}$ & $8$ & $\B[12,1]\E[8,1]\mathsf{L}_{\sfrac{7}{10}}$ & $26 $ \\
\rownumber & $\D[16,1]^+\F[4,1]\cong\SC[\D[16,1]\E[8,1]][(\G[2,1]\hookrightarrow\E[8,1])]$ & $(\sfrac{106}{5},\overline{\mathrm{Fib}})$ & $\sfrac{3}{5}$ & $8$ & $\D[16,1]\F[4,1]$ & $26 $ \\
\rownumber & $\SC[\D[24,1]][\G[2,1]]$ & $(\sfrac{106}{5},\overline{\mathrm{Fib}})$ & $\sfrac{3}{5}$ & $8$ & $\B[20,1]\mathsf{L}_{\sfrac{7}{10}}$ & $42 $ \\
$\star$ \rownumber & $\SC[\A[1,1]^{24}][\A[1,1]]$ & $(23,\overline{\mathrm{Sem}})$ & $\sfrac{7}{4}$ & $2$ & $\A[1,1]^{23}$ & $32384 $ \\
\rownumber & $\SC[\A[3,2]^{4}\A[1,1]^{4}][\A[1,1]]$ & $(23,\overline{\mathrm{Sem}})$ & $\sfrac{7}{4}$ & $2$ & $\A[3,2]^{4}\A[1,1]^{3}$ & $32384 $ \\
\rownumber & $\SC[\A[5,3]\D[4,3]\A[1,1]^{3}][\A[1,1]]$ & $(23,\overline{\mathrm{Sem}})$ & $\sfrac{7}{4}$ & $2$ & $\A[5,3]\D[4,3]\A[1,1]^{2}$ & $32384 $ \\
\rownumber & $\SC[\A[7,4]\A[1,1]'\A[1,1]^{2}][\A[1,1]]$ & $(23,\overline{\mathrm{Sem}})$ & $\sfrac{7}{4}$ & $2$ & $\A[7,4]\A[1,1]^{2}$ & $32384 $ \\
\rownumber & $\SC[\A[7,4]\A[1,1]'\A[1,1]^{2}][\A[1,1]']$ & $(23,\overline{\mathrm{Sem}})$ & $\sfrac{7}{4}$ & $2$ & $\A[7,4]\A[1,1]^{2}$ & $32384 $ \\
\rownumber & $\SC[\D[5,4]\C[3,2]\A[1,1]^{2}][\A[1,1]]$ & $(23,\overline{\mathrm{Sem}})$ & $\sfrac{7}{4}$ & $2$ & $\D[5,4]\C[3,2]\A[1,1]$ & $32384 $ \\
\rownumber & $\SC[\D[6,5]\A[1,1]\A[1,1]'][\A[1,1]]$ & $(23,\overline{\mathrm{Sem}})$ & $\sfrac{7}{4}$ & $2$ & $\D[6,5]\A[1,1]$ & $32384$ \\
\rownumber & $\SC[\D[6,5]\A[1,1]\A[1,1]'][\A[1,1]']$ & $(23,\overline{\mathrm{Sem}})$ & $\sfrac{7}{4}$ & $2$ & $\D[6,5]\A[1,1]$ & $32384$ \\
\rownumber & $\SC[\C[5,3]\G[2,2]\A[1,1]][\A[1,1]]$ & $(23,\overline{\mathrm{Sem}})$ & $\sfrac{7}{4}$ & $2$ & $\C[5,3]\G[2,2]$ & $32384 $ \\
$\star$ \rownumber & $\SC[\A[2,1]^{12}][\A[1,1]]$ & $(23,\overline{\mathrm{Sem}})$ & $\sfrac{3}{4}$ & $8$ & $\A[2,1]^{11}\mathsf{U}_1$ & $2 $ \\
\rownumber & $\SC[\D[4,2]^{2}\B[2,1]^{4}][\A[1,1]]$ & $(23,\overline{\mathrm{Sem}})$ & $\sfrac{3}{4}$ & $8$ & $\D[4,2]^{2}\B[2,1]^{3}\A[1,1]\mathsf{L}_{\sfrac12}$ & $2 $ \\
\rownumber & $\SC[\A[5,2]^{2}\B[2,1]\A[2,1]^{2}][(\A[1,1]\hookrightarrow\A[2,1])]$ & $(23,\overline{\mathrm{Sem}})$ & $\sfrac{3}{4}$ & $8$ & $\A[5,2]^{2}\C[2,1]\A[2,1]\mathsf{U}_1$ & $2 $ \\
\rownumber & $\SC[\A[5,2]^{2}\B[2,1]\A[2,1]^{2}][(\A[1,1]\hookrightarrow\B[2,1])]$ & $(23,\overline{\mathrm{Sem}})$ & $\sfrac{3}{4}$ & $8$ & $\A[5,2]^{2}\A[1,1]\A[2,1]^{2}\mathsf{L}_{\sfrac12}$ & $2 $ \\
\rownumber & $\SC[\A[8,3]\A[2,1]^{2}][\A[1,1]]$ & $(23,\overline{\mathrm{Sem}})$ & $\sfrac{3}{4}$ & $8$ & $\A[8,3]\A[2,1]\mathsf{U}_1$ & $2 $ \\
\rownumber & $\SC[\E[6,4]\B[2,1]\A[2,1]][(\A[1,1]\hookrightarrow\B[2,1])]$ & $(23,\overline{\mathrm{Sem}})$ & $\sfrac{3}{4}$ & $8$ & $\E[6,4]\A[1,1]\A[2,1]\mathsf{L}_{\sfrac12}$ & $2 $ \\
\rownumber & $\SC[\E[6,4]\B[2,1]\A[2,1]][(\A[1,1]\hookrightarrow\A[2,1])]$ & $(23,\overline{\mathrm{Sem}})$ & $\sfrac{3}{4}$ & $8$ & $\E[6,4]\C[2,1]\mathsf{U}_1$ & $2 $ \\
$\star$ \rownumber & $\SC[\A[3,1]^{8}][\A[1,1]]$ & $(23,\overline{\mathrm{Sem}})$ & $\sfrac{3}{4}$ & $8$ & $\A[3,1]^{7}\A[1,1]\mathsf{U}_1$ & $4 $ \\
\rownumber & $\SC[\D[5,2]^{2}\A[3,1]^{2}][\A[1,1]]$ & $(23,\overline{\mathrm{Sem}})$ & $\sfrac{3}{4}$ & $8$ & $\D[5,2]^2\A[3,1]\A[1,1]\mathsf{U}_1$ & $4 $ \\
\rownumber & $\SC[\E[6,3]\G[2,1]^{3}][\A[1,1]]$ & $(23,\overline{\mathrm{Sem}})$ & $\sfrac{3}{4}$ & $8$ & $\E[6,3]\G[2,1]^{2}\A[1,3]$ & $4 $ \\
\rownumber & $\SC[\A[7,2]\C[3,1]^{2}\A[3,1]][(\A[1,1]\hookrightarrow\A[3,1])]$ & $(23,\overline{\mathrm{Sem}})$ & $\sfrac{3}{4}$ & $8$ & $\A[7,2]\C[3,1]^{2}\A[1,1]\mathsf{U}_1$ & $4 $ \\
\rownumber & $\SC[\A[7,2]\C[3,1]^{2}\A[3,1]][(\A[1,1]\hookrightarrow\C[3,1])]$ & $(23,\overline{\mathrm{Sem}})$ & $\sfrac{3}{4}$ & $8$ & $\A[7,2]\C[3,1]\B[2,1]\A[3,1]\mathsf{L}_{\sfrac{7}{10}}$ & $4 $ \\
\rownumber & $\SC[\D[7,3]\A[3,1]\G[2,1]][(\A[1,1]\hookrightarrow\G[2,1])]$ & $(23,\overline{\mathrm{Sem}})$ & $\sfrac{3}{4}$ & $8$ & $\D[7,3]\A[3,1]\A[1,3]$ & $4 $ \\
\rownumber & $\SC[\D[7,3]\A[3,1]\G[2,1]][(\A[1,1]\hookrightarrow\A[3,1])]$ & $(23,\overline{\mathrm{Sem}})$ & $\sfrac{3}{4}$ & $8$ & $\D[7,3]\G[2,1]\A[1,1]\mathsf{U}_1$ & $4 $ \\
\rownumber & $\SC[\C[7,2]\A[3,1]][\A[1,1]]$ & $(23,\overline{\mathrm{Sem}})$ & $\sfrac{3}{4}$ & $8$ & $\C[7,2]\A[1,1]\mathsf{U}_1$ & $4 $ \\
$\star$ \rownumber & $\SC[\A[4,1]^{6}][\A[1,1]]$ & $(23,\overline{\mathrm{Sem}})$ & $\sfrac{3}{4}$ & $8$ & $\A[4,1]^{5}\A[2,1]\mathsf{U}_1$ & $6 $ \\
\rownumber & $\SC[\C[4,1]^{4}][\A[1,1]]$ & $(23,\overline{\mathrm{Sem}})$ & $\sfrac{3}{4}$ & $8$ & $\C[4,1]^{3}\C[3,1]\mathsf{L}_{\sfrac45}$ & $6 $ \\
\rownumber & $\SC[\D[6,2]\C[4,1]\B[3,1]^{2}][(\A[1,1]\hookrightarrow\C[4,1])]$ & $(23,\overline{\mathrm{Sem}})$ & $\sfrac{3}{4}$ & $8$ & $\D[6,2]\C[3,1]\B[3,1]^{2}\mathsf{L}_{\sfrac45}$ & $6 $ \\
\rownumber & $\SC[\D[6,2]\C[4,1]\B[3,1]^{2}][(\A[1,1]\hookrightarrow\B[3,1])]$ & $(23,\overline{\mathrm{Sem}})$ & $\sfrac{3}{4}$ & $8$ & $\D[6,2]\C[4,1]\B[3,1]\A[1,2]\A[1,1]$ & $6 $ \\
\rownumber & $\SC[\A[9,2]\A[4,1]\B[3,1]][(\A[1,1]\hookrightarrow\A[4,1])]$ & $(23,\overline{\mathrm{Sem}})$ & $\sfrac{3}{4}$ & $8$ & $\A[9,2]\A[2,1]\B[3,1]\mathsf{U}_1$ & $6 $ \\
\rownumber & $\SC[\A[9,2]\A[4,1]\B[3,1]][(\A[1,1]\hookrightarrow\B[3,1])]$ & $(23,\overline{\mathrm{Sem}})$ & $\sfrac{3}{4}$ & $8$ & $\A[9,2]\A[4,1]\A[1,2]\A[1,1]$ & $6 $ \\
$\star$ \rownumber & $\SC[\D[4,1]^{6}][\A[1,1]]$ & $(23,\overline{\mathrm{Sem}})$ & $\sfrac{3}{4}$ & $8$ & $\D[4,1]^{5}\A[1,1]\A[1,1]\A[1,1]$ & $8 $ 
    \end{tabular}
    \end{center}
\end{footnotesize}
\end{table}

\clearpage

   \begin{table}[]
\begin{footnotesize}
    \begin{center}
    \begin{tabular}{r|c|c |c |c|c|c}
        No. & Theory & $(c,\mathscr{C})$ & $h$ & $\ell$ & Subalgebra & $d$ \\
        \midrule
        $\star$ \rownumber & $\SC[\A[5,1]^{4}\D[4,1]][(\A[1,1]\hookrightarrow\A[5,1])]$ & $(23,\overline{\mathrm{Sem}})$ & $\sfrac{3}{4}$ & $8$ & $\A[5,1]^{3}\A[3,1]\D[4,1]\mathsf{U}_1$ & $8 $ \\
        $\star$ \rownumber & $\SC[\A[5,1]^{4}\D[4,1]][(\A[1,1]\hookrightarrow\D[4,1])]$ & $(23,\overline{\mathrm{Sem}})$ & $\sfrac{3}{4}$ & $8$ & $\A[5,1]^{4}\A[1,1]^3$ & $8 $ \\
\rownumber & $\SC[\E[6,2]\C[5,1]\A[5,1]][(\A[1,1]\hookrightarrow\C[5,1])]$ & $(23,\overline{\mathrm{Sem}})$ & $\sfrac{3}{4}$ & $8$ & $\E[6,2]\C[4,1]\A[5,1]\mathsf{L}_{\sfrac67}$ & $8 $ \\
\rownumber & $\SC[\E[6,2]\C[5,1]\A[5,1]][(\A[1,1]\hookrightarrow\A[5,1])]$ & $(23,\overline{\mathrm{Sem}})$ & $\sfrac{3}{4}$ & $8$ & $\E[6,2]\C[5,1]\A[3,1]\mathsf{U}_1$ & $8 $ \\
\rownumber & $\SC[\E[7,3]\A[5,1]][\A[1,1]]$ & $(23,\overline{\mathrm{Sem}})$ & $\sfrac{3}{4}$ & $8$ & $\E[7,3]\A[3,1]\mathsf{U}_1$ & $8 $ \\
$\star$ \rownumber & $\SC[\A[6,1]^{4}][\A[1,1]]$ & $(23,\overline{\mathrm{Sem}})$ & $\sfrac{3}{4}$ & $8$ & $\A[6,1]^{3}\A[4,1]\mathsf{U}_1$ & $10 $ \\
\rownumber & $\SC[\D[8,2]\B[4,1]^{2}][\A[1,1]]$ & $(23,\overline{\mathrm{Sem}})$ & $\sfrac{3}{4}$ & $8$ & $\D[8,2]\B[4,1]\B[2,1]\A[1,1]$ & $10 $ \\
\rownumber & $\SC[\C[6,1]^{2}\B[4,1]][(\A[1,1]\hookrightarrow\C[6,1])]$ & $(23,\overline{\mathrm{Sem}})$ & $\sfrac{3}{4}$ & $8$ & $\C[6,1]\C[5,1]\B[4,1]\mathsf{L}_{\sfrac{25}{28}}$ & $10 $ \\
\rownumber & $\SC[\C[6,1]^{2}\B[4,1]][(\A[1,1]\hookrightarrow\B[4,1])]$ & $(23,\overline{\mathrm{Sem}})$ & $\sfrac{3}{4}$ & $8$ & $\C[6,1]^{2}\B[2,1]\A[1,1]$ & $10 $ \\
$\star$ \rownumber & $\SC[\A[7,1]^{2}\D[5,1]^{2}][(\A[1,1]\hookrightarrow\A[7,1])]$ & $(23,\overline{\mathrm{Sem}})$ & $\sfrac{3}{4}$ & $8$ & $\A[7,1]\A[5,1]\D[5,1]^{2}\mathsf{U}_1$ & $12 $ \\
$\star$ \rownumber & $\SC[\A[7,1]^{2}\D[5,1]^{2}][(\A[1,1]\hookrightarrow\D[5,1])]$ & $(23,\overline{\mathrm{Sem}})$ & $\sfrac{3}{4}$ & $8$ & $\A[7,1]^{2}\A[3,1]\A[1,1]\D[5,1]$ & $12 $ \\
\rownumber & $\SC[\D[9,2]\A[7,1]][\A[1,1]]$ & $(23,\overline{\mathrm{Sem}})$ & $\sfrac{3}{4}$ & $8$ & $\D[9,2]\A[5,1]\mathsf{U}_1$ & $12 $ \\
$\star$ \rownumber & $\SC[\A[8,1]^{3}][\A[1,1]]$ & $(23,\overline{\mathrm{Sem}})$ & $\sfrac{3}{4}$ & $8$ & $\A[8,1]^{2}\A[6,1]\mathsf{U}_1$ & $14 $ \\
\rownumber & $\SC[\C[8,1]\F[4,1]^{2}][(\A[1,1]\hookrightarrow\C[8,1])]$ & $(23,\overline{\mathrm{Sem}})$ & $\sfrac{3}{4}$ & $8$ & $\C[7,1]\F[4,1]^{2}L_{\sfrac{14}{15}}$ & $14 $ \\
\rownumber & $\SC[\C[8,1]\F[4,1]^{2}][(\A[1,1]\hookrightarrow\F[4,1])]$ & $(23,\overline{\mathrm{Sem}})$ & $\sfrac{3}{4}$ & $8$ & $\C[8,1]\F[4,1]\C[3,1]$ & $14 $ \\
\rownumber & $\SC[\E[7,2]\B[5,1]\F[4,1]][(\A[1,1]\hookrightarrow\B[5,1])]$ & $(23,\overline{\mathrm{Sem}})$ & $\sfrac{3}{4}$ & $8$ & $\E[7,2]\B[3,1]\A[1,1]\F[4,1]$ & $14 $ \\
\rownumber & $\SC[\E[7,2]\B[5,1]\F[4,1]][(\A[1,1]\hookrightarrow\F[4,1])]$ & $(23,\overline{\mathrm{Sem}})$ & $\sfrac{3}{4}$ & $8$ & $\E[7,2]\B[5,1]\C[3,1]$ & $14 $ \\
$\star$ \rownumber & $\SC[\D[6,1]^{4}][\A[1,1]]$ & $(23,\overline{\mathrm{Sem}})$ & $\sfrac{3}{4}$ & $8$ & $\D[6,1]^{3}\D[4,1]\A[1,1]$ & $16 $ \\
$\star$ \rownumber & $\SC[\A[9,1]^{2}\D[6,1]][(\A[1,1]\hookrightarrow\A[9,1])]$ & $(23,\overline{\mathrm{Sem}})$ & $\sfrac{3}{4}$ & $8$ & $\A[9,1]\A[7,1]\D[6,1]\mathsf{U}_1$ & $16 $ \\
$\star$ \rownumber & $\SC[\A[9,1]^{2}\D[6,1]][(\A[1,1]\hookrightarrow\D[6,1])]$ & $(23,\overline{\mathrm{Sem}})$ & $\sfrac{3}{4}$ & $8$ & $\A[9,1]^{2}\D[4,1]\A[1,1]$ & $16 $ \\
\rownumber & $\SC[\C[10,1]\B[6,1]][(\A[1,1]\hookrightarrow\C[10,1])]$ & $(23,\overline{\mathrm{Sem}})$ & $\sfrac{3}{4}$ & $8$ & $\C[9,1]\B[6,1]L_{\sfrac{21}{22}}$ & $18 $ \\
\rownumber & $\SC[\C[10,1]\B[6,1]][(\A[1,1]\hookrightarrow\B[6,1])]$ & $(23,\overline{\mathrm{Sem}})$ & $\sfrac{3}{4}$ & $8$ & $\C[10,1]\B[4,1]\A[1,1]$ & $18 $ \\
$\star$ \rownumber & $\SC[\E[6,1]^{4}][\A[1,1]]$ & $(23,\overline{\mathrm{Sem}})$ & $\sfrac{3}{4}$ & $8$ & $\E[6,1]^{3}\A[5,1]$ & $20 $ \\
$\star$ \rownumber & $\SC[\A[11,1]\D[7,1]\E[6,1]][(\A[1,1]\hookrightarrow\A[11,1])]$ & $(23,\overline{\mathrm{Sem}})$ & $\sfrac{3}{4}$ & $8$ & $\A[9,1]\D[7,1]\E[6,1]\mathsf{U}_1$ & $20 $ \\
$\star$ \rownumber & $\SC[\A[11,1]\D[7,1]\E[6,1]][(\A[1,1]\hookrightarrow\D[7,1])]$ & $(23,\overline{\mathrm{Sem}})$ & $\sfrac{3}{4}$ & $8$ & $\A[11,1]\D[5,1]\A[1,1]\E[6,1]$ & $20 $ \\
$\star$ \rownumber & $\SC[\A[11,1]\D[7,1]\E[6,1]][(\A[1,1]\hookrightarrow\E[6,1])]$ & $(23,\overline{\mathrm{Sem}})$ & $\sfrac{3}{4}$ & $8$ & $\A[11,1]\D[7,1]\A[5,1]$ & $20 $ \\
$\star$ \rownumber & $\SC[\A[12,1]^{2}][\A[1,1]]$ & $(23,\overline{\mathrm{Sem}})$ & $\sfrac{3}{4}$ & $8$ & $\A[12,1]\A[10,1]\mathsf{U}_1$ & $22 $ \\
$\star$ \rownumber & $\SC[\D[8,1]^{3}][\A[1,1]]$ & $(23,\overline{\mathrm{Sem}})$ & $\sfrac{3}{4}$ & $8$ & $\D[8,1]^{2}\D[6,1]\A[1,1]$ & $24 $ \\
\rownumber & $\SC[\E[8,2]\B[8,1]][\A[1,1]]$ & $(23,\overline{\mathrm{Sem}})$ & $\sfrac{3}{4}$ & $8$ & $\E[8,2]\B[6,1]\A[1,1]$ & $26 $ \\
$\star$ \rownumber & $\SC[\A[15,1]\D[9,1]][(\A[1,1]\hookrightarrow\A[15,1])]$ & $(23,\overline{\mathrm{Sem}})$ & $\sfrac{3}{4}$ & $8$ & $\A[13,1]\D[9,1]\mathsf{U}_1$ & $28 $ \\
$\star$ \rownumber & $\SC[\A[15,1]\D[9,1]][(\A[1,1]\hookrightarrow\D[9,1])]$ & $(23,\overline{\mathrm{Sem}})$ & $\sfrac{3}{4}$ & $8$ & $\A[15,1]\D[7,1]\A[1,1]$ & $28 $ \\
$\star$ \rownumber & $\SC[\D[10,1]\E[7,1]^{2}][(\A[1,1]\hookrightarrow\D[10,1])]$ & $(23,\overline{\mathrm{Sem}})$ & $\sfrac{3}{4}$ & $8$ & $\D[8,1]\A[1,1]\E[7,1]^{2}$ & $32 $ \\
$\star$ \rownumber & $\SC[\D[10,1]\E[7,1]^{2}][(\A[1,1]\hookrightarrow\E[7,1])]$ & $(23,\overline{\mathrm{Sem}})$ & $\sfrac{3}{4}$ & $8$ & $\D[10,1]\E[7,1]\D[6,1]$ & $32 $ \\
$\star$ \rownumber & $\SC[\A[17,1]\E[7,1]][(\A[1,1]\hookrightarrow\A[17,1])]$ & $(23,\overline{\mathrm{Sem}})$ & $\sfrac{3}{4}$ & $8$ & $\A[15,1]\E[7,1]\mathsf{U}_1$ & $32 $ \\
$\star$ \rownumber & $\SC[\A[17,1]\E[7,1]][(\A[1,1]\hookrightarrow\E[7,1])]$ & $(23,\overline{\mathrm{Sem}})$ & $\sfrac{3}{4}$ & $8$ & $\A[17,1]\D[6,1]$ & $32 $ \\
$\star$ \rownumber & $\SC[\D[12,1]^{2}][\A[1,1]]$ & $(23,\overline{\mathrm{Sem}})$ & $\sfrac{3}{4}$ & $8$ & $\D[12,1]\D[10,1]\A[1,1]$ & $40 $ \\
$\star$ \rownumber & $\SC[\A[24,1]][\A[1,1]]$ & $(23,\overline{\mathrm{Sem}})$ & $\sfrac{3}{4}$ & $8$ & $\A[22,1]\mathsf{U}_1$ & $46 $ \\
$\star$ \rownumber & $\E[8,1]^2\E[7,1]\cong\E[8,1]^{3}\big/\A[1,1]$ & $(23,\overline{\mathrm{Sem}})$ & $\sfrac{3}{4}$ & $8$ & $\E[8,1]^{2}\E[7,1]$ & $56 $ \\
$\star$ \rownumber & $\SC[\D[16,1]\E[8,1]][(\A[1,1]\hookrightarrow\D[16,1])]$ & $(23,\overline{\mathrm{Sem}})$ & $\sfrac{3}{4}$ & $8$ & $\D[14,1]\A[1,1]\E[8,1]$ & $56 $ \\
$\star$ \rownumber & $\D[16,1]^+ \E[7,1]\cong\SC[\D[16,1]\E[8,1]][(\A[1,1]\hookrightarrow\E[8,1])]$ & $(23,\overline{\mathrm{Sem}})$ & $\sfrac{3}{4}$ & $8$ & $\D[16,1]\E[7,1]$ & $56 $ \\
$\star$ \rownumber & $\SC[\D[24,1]][\A[1,1]]$ & $(23,\overline{\mathrm{Sem}})$ & $\sfrac{3}{4}$ & $8$ & $\D[22,1]\A[1,1]$ & $88 $ 
    \end{tabular}
    \end{center}
\end{footnotesize}
\end{table}

\clearpage

\section{Two-Dimensional Representations of the Modular Group}\label{app:reps}

Table \ref{tab:oddweight} contains all equivalence classes of two-dimensional irreducible representations $\varrho$ with finite image and $\varrho(S)^2=-1$. The first column is the label/number of the representation, and the second column contains the pair $(m_0,m_1)$ which determines $\varrho(S)$ and $\varrho(T)$ through Eq.\ \eqref{eqn:SMatrix} and Eq.\ \eqref{eqn:TMatrix}.

Table \ref{tab:evenweight} contains all equivalence classes of two-dimensional irreducible representations $\varrho$ with finite image and $\varrho(S)^2=+1$. The first column is the number $I$ which labels the equivalence class of the representation. The second column provides the associated rational numbers $(m_0,m_1)$ which can be used to reconstruct $\varrho(S)$ and $\varrho(T)$ through Eq.\ \eqref{eqn:SMatrix} and Eq.\ \eqref{eqn:TMatrix}. The third column computes $\hat\ell$, which is the Wronskian index modulo 6 of any vector-valued modular form which transforms covariantly with respect to the representation $\varrho$ (or $\varrho_{\mathrm{U}}$, $\varrho_{\mathrm{V}}$, or $\varrho_{\mathrm{W}}$). The fourth column contains $(\hat{c},\hat{h})$ which are respectively the value of the central charge modulo 24 and the conformal dimension of $\Phi$ modulo 1 for any unitary RCFT (if any exists) with characters which transform covariantly under either $\varrho$ or $\varrho_{\mathrm{U}}$. Likewise, $(\hat{c}_{\mathrm{V}},\hat{h}_{\mathrm{V}})$ are respectively the value of the central charge modulo 24 and the conformal dimension of $\Phi$ modulo 1 for any unitary RCFT (if any exists) with characters which transform covariantly under either $\varrho_{\mathrm{V}}$ or $\varrho_{\mathrm{W}}$. An entry is colored green if the modular representation is admissible, in the sense of \S\ref{subsec:RCFTbasics}. Finally, the last three columns identify the label of $\varrho^\ast$, $\omega\varrho^\ast$, and $\omega^2\varrho^\ast$, where $\varrho^\ast$ is the representation obtained by complex conjugating the entries of $\varrho$, and $\omega:\SL_2(\IZ)\to\mathbb{C}^\ast$ is the character of the modular group which assigns $\omega(T) = e^{-2\pi i/3}$ and $\omega(S) = 1$.

\begin{table}[h]
    \centering
    \begin{tabular}{c c c}
    \begin{tabular}[t]{c|c}
    No. & $(m_0,m_1)$ \\\midrule
     1 & (\sfrac{3}{4},\sfrac{1}{4})\\
3 & (\sfrac{11}{12},\sfrac{5}{12})\\
5 & (\sfrac{7}{12},\sfrac{1}{12})\\
7 & (\sfrac{2}{3},\sfrac{1}{3})\\
9 & (\sfrac{5}{6},\sfrac{1}{2})\\
11 & (\sfrac{2}{3},0)\\
13 & (\sfrac{5}{6},\sfrac{1}{6})\\
15 & (\sfrac{1}{3},0)\\
17 & (\sfrac{1}{2},\sfrac{1}{6})
    \end{tabular}
    &
        \begin{tabular}[t]{c|c}
    No. & $(m_0,m_1)$ \\\midrule
   19 & (\sfrac{7}{8},\sfrac{1}{8})\\
21 & (\sfrac{7}{24},\sfrac{1}{24})\\
23 & (\sfrac{11}{24},\sfrac{5}{24})\\
25 & (\sfrac{5}{8},\sfrac{3}{8})\\
27 & (\sfrac{19}{24},\sfrac{13}{24})\\
29 & (\sfrac{23}{24},\sfrac{17}{24})\\
31 & (\sfrac{4}{5},\sfrac{1}{5})\\
33 & (\sfrac{29}{30},\sfrac{11}{30})\\
35 & (\sfrac{8}{15},\sfrac{2}{15})
    \end{tabular}
    &
        \begin{tabular}[t]{c|c}
    No. & $(m_0,m_1)$ \\\midrule
    37 & (\sfrac{7}{10},\sfrac{3}{10})\\
39 & (\sfrac{13}{15},\sfrac{7}{15})\\
41 & (\sfrac{19}{30},\sfrac{1}{30})\\
43 & (\sfrac{3}{5},\sfrac{2}{5})\\
45 & (\sfrac{23}{30},\sfrac{17}{30})\\
47 & (\sfrac{14}{15},\sfrac{11}{15})\\
49 & (\sfrac{9}{10},\sfrac{1}{10})\\
51 & (\sfrac{4}{15},\sfrac{1}{15})\\
53 & (\sfrac{13}{30},\sfrac{7}{30})
    \end{tabular}
\end{tabular}
    \caption{Equivalence classes of two-dimensional irreducible representations $\varrho$ of $\SL_2(\IZ)$ with finite image and  $\varrho(S)^2=-1$.}
    \label{tab:oddweight}
\end{table}

\begin{table}[]
\begin{footnotesize}
    \centering
    \begin{tabular}{r|c  | c | c c c | c c c | c c c}
        No.&$(m_0,m_1)$ & $\hat\ell$ & $(\hat c,\hat h)$ & $\varrho$ & $\varrho_{\mathrm{U}}$ & $(\hat{c}_{\mathrm{V}},\hat{h}_{\mathrm{V}})$ & $\varrho_{\mathrm{V}}$ & $\varrho_{\mathrm{W}}$ & $\varrho^\ast$ & $\omega\varrho^\ast$ & $\omega^2\varrho^\ast$ \\ \midrule
2&(\sfrac{5}{6},\sfrac{1}{3}) & $0$ & $(4,\sfrac{1}{2})$ & 2& 2U & $(16,\sfrac{1}{2})$ & 2V& 2W & 6W & 2 & 4\\
4&(\sfrac{1}{2},0) & $4$ & $(12,\sfrac{1}{2})$ & 4& 4U & $(24,\sfrac{1}{2})$ & 4V& 4W & 4 & 6W & 2\\
6&(\sfrac{2}{3},\sfrac{1}{6}) & $2$ & $(8,\sfrac{1}{2})$ & 6& 6U & $(20,\sfrac{1}{2})$ & 6V& 6W & 2W & 4W & 6\\
8&(\sfrac{3}{4},\sfrac{5}{12}) & $0$ & $(6,\sfrac{2}{3})$ & 8& 8U & $(14,\sfrac{1}{3})$ & 8V& 8W & 18W & 14 & 10W\\
10&(\sfrac{11}{12},\sfrac{7}{12}) & $4$ & $(2,\sfrac{2}{3})$ & 10& 10U & $(10,\sfrac{1}{3})$ & 10V& 10W & 16W & 12 & 8W\\
12&(\sfrac{3}{4},\sfrac{1}{12}) & $2$ & $(6,\sfrac{1}{3})$ & 12& 12U & $(22,\sfrac{2}{3})$ & 12V& 12W & 14W & 10 & 18\\
14&(\sfrac{11}{12},\sfrac{1}{4}) & $0$ & $(2,\sfrac{1}{3})$ & 14& 14U & $(18,\sfrac{2}{3})$ & 14V& 14W & 12W & 8 & 16\\
16&(\sfrac{5}{12},\sfrac{1}{12}) & $4$ & $(14,\sfrac{2}{3})$ & 16& 16U & $(22,\sfrac{1}{3})$ & 16V& 16W & 10W & 18W & 14\\
18&(\sfrac{7}{12},\sfrac{1}{4}) & $2$ & $(10,\sfrac{2}{3})$ & 18& 18U & $(18,\sfrac{1}{3})$ & 18V& 18W & 8W & 16W & 12\\
20&(\sfrac{23}{24},\sfrac{5}{24}) & $0$ & $(1,\sfrac{1}{4})$ & \cellcolor{green!60}20& 20U & $(19,\sfrac{3}{4})$ & 20V& 20W & 30W & 26 & 22\\
22&(\sfrac{3}{8},\sfrac{1}{8}) & $4$ & $(15,\sfrac{3}{4})$ & \cellcolor{green!60}22& 22U & $(21,\sfrac{1}{4})$ & 22V& 22W & 28W & 24W & 20\\
24&(\sfrac{13}{24},\sfrac{7}{24}) & $2$ & $(11,\sfrac{3}{4})$ & 24& 24U & $(17,\sfrac{1}{4})$ & 24V& \cellcolor{green!60}24W & 26W & 22W & 30\\
26&(\sfrac{17}{24},\sfrac{11}{24}) & $0$ & $(7,\sfrac{3}{4})$ & \cellcolor{green!60}26& 26U & $(13,\sfrac{1}{4})$ & 26V& 26W & 24W & 20 & 28W\\
28&(\sfrac{7}{8},\sfrac{5}{8}) & $4$ & $(3,\sfrac{3}{4})$ & 28& 28U & $(9,\sfrac{1}{4})$ & 28V& \cellcolor{green!60}28W & 22W & 30 & 26W\\
30&(\sfrac{19}{24},\sfrac{1}{24}) & $2$ & $(5,\sfrac{1}{4})$ & 30& 30U & $(23,\sfrac{3}{4})$ & 30V& \cellcolor{green!60}30W & 20W & 28 & 24\\
32&(\sfrac{53}{60},\sfrac{17}{60}) & $0$ & $(\sfrac{14}{5},\sfrac{2}{5})$ & \cellcolor{green!60}32& 32U & $(\sfrac{86}{5},\sfrac{3}{5})$ & 32V& 32W & 42W & 38 & 34\\
34&(\sfrac{9}{20},\sfrac{1}{20}) & $4$ & $(\sfrac{66}{5},\sfrac{3}{5})$ & \cellcolor{green!60}34& 34U & $(\sfrac{114}{5},\sfrac{2}{5})$ & 34V& 34W & 40W & 36W & 32\\
36&(\sfrac{37}{60},\sfrac{13}{60}) & $2$ & $(\sfrac{46}{5},\sfrac{3}{5})$ & 36& 36U & $(\sfrac{94}{5},\sfrac{2}{5})$ & 36V& \cellcolor{green!60}36W & 38W & 34W & 42\\
38&(\sfrac{47}{60},\sfrac{23}{60}) & $0$ & $(\sfrac{26}{5},\sfrac{3}{5})$ & \cellcolor{green!60}38& 38U & $(\sfrac{74}{5},\sfrac{2}{5})$ & 38V& 38W & 36W & 32 & 40W\\
40&(\sfrac{19}{20},\sfrac{11}{20}) & $4$ & $(\sfrac{6}{5},\sfrac{3}{5})$ & 40& 40U & $(\sfrac{54}{5},\sfrac{2}{5})$ & 40V& \cellcolor{green!60}40W & 34W & 42 & 38W\\
42&(\sfrac{43}{60},\sfrac{7}{60}) & $2$ & $(\sfrac{34}{5},\sfrac{2}{5})$ & 42& 42U & $(\sfrac{106}{5},\sfrac{3}{5})$ & 42V& \cellcolor{green!60}42W & 32W & 40 & 36\\
44&(\sfrac{41}{60},\sfrac{29}{60}) & $0$ & $(\sfrac{38}{5},\sfrac{4}{5})$ & 44& 44U & $(\sfrac{62}{5},\sfrac{1}{5})$ & 44V& 44W & 54W & 50 & 46W\\
46&(\sfrac{17}{20},\sfrac{13}{20}) & $4$ & $(\sfrac{18}{5},\sfrac{4}{5})$ & 46& 46U & $(\sfrac{42}{5},\sfrac{1}{5})$ & 46V& 46W & 52W & 48 & 44W\\
48&(\sfrac{49}{60},\sfrac{1}{60}) & $2$ & $(\sfrac{22}{5},\sfrac{1}{5})$ & 48& 48U & $(\sfrac{118}{5},\sfrac{4}{5})$ & 48V& 48W & 50W & 46 & 54\\
50&(\sfrac{59}{60},\sfrac{11}{60}) & $0$ & $(\sfrac{2}{5},\sfrac{1}{5})$ & 50& 50U & $(\sfrac{98}{5},\sfrac{4}{5})$ & 50V& 50W & 48W & 44 & 52\\
52&(\sfrac{7}{20},\sfrac{3}{20}) & $4$ & $(\sfrac{78}{5},\sfrac{4}{5})$ & 52& 52U & $(\sfrac{102}{5},\sfrac{1}{5})$ & 52V& 52W & 46W & 54W & 50\\
54&(\sfrac{31}{60},\sfrac{19}{60}) & $2$ & $(\sfrac{58}{5},\sfrac{4}{5})$ & 54& 54U & $(\sfrac{82}{5},\sfrac{1}{5})$ & 54V& 54W & 44W & 52W & 48\\\bottomrule
    \end{tabular}
    \caption{Two-dimensional irreducible representations $\varrho$ of $\SL_2(\IZ)$ with finite image and $\varrho(S)^2=+1$. The representations highlighted green are admissible.}
    \label{tab:evenweight}
\end{footnotesize}
\end{table}

\clearpage

\section{Lie Algebras and Current Algebras}\label{app:liecurrentalgebras}

In this section, we summarize a number of facts about Lie algebras and affine Kac--Moody algebras that are relevant for our analysis.

Our classification requires us to enumerate cosets of the form $\mathcal{A}\big/\mathfrak{h}_k$, where $\mathcal{A}$ is a chiral algebra and $\mathfrak{h}_k$ is the level $k$ affine Kac--Moody algebra based on a simple Lie algebra $\mathfrak{h}$. This in turn requires us to understand all subalgebras $\mathfrak{h}_k\subset \mathcal{A}$ up to equivalence. Recall (cf.\ \S\ref{subsec:coset}) that we consider two subalgebras of a chiral algebra $\mathcal{A}$ to be equivalent if there is an automorphism/symmetry of $\mathcal{A}$ which maps one to the other. 

To this end, let 
\begin{align}
   (\mathfrak{g}_{1})_{k_1}\otimes\cdots\otimes (\mathfrak{g}_n)_{k_n}\subset \mathcal{A}
\end{align}
be the Kac--Moody subalgebra of $\mathcal{A}$, excluding any Abelian factors. Here, $(\mathfrak{g}_i)_{k_i}$ is the level $k_i$ affine Kac--Moody algebra based on the simple Lie algebra $\mathfrak{g}_i$. Unitarity forces the levels $k_i$ of a Kac--Moody subalgebra inside of any chiral algebra to be non-negative integers. Because $\mathfrak{h}_k$ is generated by its dimension 1 currents, an embedding $\mathfrak{h}_k\hookrightarrow\mathcal{A}$ is completely determined by the induced embedding of ordinary Lie algebras 
\begin{align}\label{eqn:dimension1embedding}
    \mathfrak{h} \hookrightarrow \mathfrak{g}_1\oplus\cdots\oplus \mathfrak{g}_n
\end{align}
which describes how the space of dimension 1 operators of $\mathfrak{h}_k$ sits inside the space of dimension 1 operators of $\mathcal{A}$. 

In order for a Lie algebra embedding of the form  \eqref{eqn:dimension1embedding} to extend to an embedding 
\begin{align}\label{eqn:affineembedding}
    \mathfrak{h}_k\hookrightarrow (\mathfrak{g}_1)_{k_1}\otimes \cdots\otimes (\mathfrak{g}_{n})_{k_n} \hookrightarrow\mathcal{A}
\end{align}
of chiral algebras, it must satisfy an additional property: namely, its \emph{embedding indices} must be compatible with the levels $k,k_1,\dots,k_n$. There are two useful definitions of the index $x(\mathfrak{h}\hookrightarrow\mathfrak{g})$ associated to an embedding $\mathfrak{h}\hookrightarrow \mathfrak{g}$ of one simple Lie algebra into another. \\

\noindent\textbf{Projection matrix:} There is a projection matrix $\cal P$, unique up to Weyl reflections, which projects any weight $\lambda$ of $\mfg$ onto a weight $\mu$ of $\mfh$. Consider  $\theta_{\mfg},\theta_{\mfh}$, the highest roots of $\mfg,\mfh$ respectively. In terms of these, the {\em embedding index} $x(\mfh\hookrightarrow\mfg)$ is defined as
\be
x(\mfh\hookrightarrow\mfg)\equiv \frac{|{\cal P}\theta_{\mfg}|^2}{|\theta_{\mfh}|^2}.
\label{embex}
\ee
This quantity is a positive integer if the embedding is non-trivial, and 0 if it is trivial.\\

\noindent\textbf{Branching rules:} Any irreducible representation $\mathcal{R}_\lambda$ of $\mathfrak{g}$ can be decomposed into irreducible representations $\mathcal{R}_\mu$ of $\mathfrak{h}$ as 
\begin{align}
    \mathcal{R}_\lambda = \bigoplus_\mu b_{\lambda\mu} \mathcal{R}_\mu.
\end{align}
The coefficients $b_{\lambda\mu}$ are called {\em branching coefficients}, non-negative integers that label the multiplicity of $\mathcal{R}_\mu$ inside $\mathcal{R}_\lambda$. Given the coefficients $b_{\lambda\mu}$ for any single $\lambda$, the embedding index is determined as
\be
x(\mfh\hookrightarrow\mfg)=\sum_\mu b_{\lambda\mu}\frac{x_\mu}{x_\lambda}.
\ee
In this formula, $x_\lambda$ and $x_\mu$ are the {\em Dynkin indices} (or simply {\em indices}) associated to the irreducible representations $\mathcal{R}_\lambda$ and $\mathcal{R}_\mu$, respectively. For example, 
\begin{align}
    x_\lambda = \frac{\dim \mathcal{R}_\lambda}{2\dim \mathfrak{g}}(\lambda,\lambda+2\rho)
\end{align}
where $\rho$ is the Weyl vector, 
\begin{align}
    \rho = \frac{1}{2} \sum_{\alpha\in \Delta_+} \alpha ,
\end{align}
defined as half the sum of the positive roots;
 $x_\mu$ is defined similarly. \\

\noindent We can now describe when the Lie algebra embedding \eqref{eqn:dimension1embedding} extends to the affine embedding \eqref{eqn:affineembedding}. It is precisely when the levels $k,k_1,\dots,k_n$ are related as 
\begin{align}
    k = \sum_i x(\mathfrak{h}\hookrightarrow \mathfrak{g}_i) k_i
\end{align}
where $x(\mathfrak{h}\hookrightarrow\mathfrak{g}_i)$ is the index of the induced embedding $\mathfrak{h}\hookrightarrow\mathfrak{g}_i$. One consequence of this is that when $k=1$, the affine Kac--Moody algebra $\mathfrak{h}_1$ must sit entirely inside a single simple factor $(\mathfrak{g}_{i})_{k_i}$ which has $k_i=1$ and $x(\mathfrak{h}\hookrightarrow\mathfrak{g}_i)=1$, and must embed trivially into every $(\mathfrak{g}_j)_{k_j}$ with $j\neq i$.
 
Another useful fact is that the coset $\mathcal{A}\big/ \mathfrak{h}_k$ inherits a continuous global symmetry algebra from $\mathcal{A}$ (which is necessarily a reductive Lie algebra) given by the centralizer
\begin{align}\label{eqn:centralizer}
    C(\mathfrak{h})=\mathfrak{p}_1\oplus \cdots\oplus\mathfrak{p}_m\oplus \mathsf{U}_1^r
\end{align}
where the $\mathfrak{p}_j$ are simple Lie algebras, $\mathsf{U}_1^r$ is the Abelian part, and the centralizer is taken inside of the continuous global symmetry algebra of $\mathcal{A}$. In fact, one can say more: each $\mathfrak{p}_{j}$ is the dimension 1 subspace of an affine Kac--Moody algebra 
\begin{align}
 (\mathfrak{p}_j)_{k'_j}\hookrightarrow  \mathcal{A}\big/\mathfrak{h}_k   \hookrightarrow\mathcal{A}
\end{align} 
which embeds into the coset. The level $k'_j$ of this affine Kac--Moody algebra is determined as before through the formula 
\begin{align}
    k'_j = \sum_i   x(\mathfrak{p}_j\hookrightarrow \mathfrak{g}_i )k_i.
\end{align}
The currents associated to the Abelian factors in Eq.\ \eqref{eqn:centralizer} similarly sit inside their own Kac--Moody subalgebras.

Now, let us consider two isomorphic Kac--Moody subalgebras $\mathfrak{h}_1,\mathfrak{h}_1'\subset \mathcal{A}$ at level $k=1$. We are interested in whether these define distinct cosets $\mathcal{A}\big/\mathfrak{h}_1$, $\mathcal{A}\big/\mathfrak{h}'_1$, or whether the two cosets are isomorphic. As we explained in \S\ref{subsec:coset}, if there is an automorphism $X:\mathcal{A}\to\mathcal{A}$ which maps $X(\mathfrak{h}_1)=\mathfrak{h}'_1$, then the cosets will be isomorphic, so we are interested in determining when such an automorphism $X$ exists. 

First, note that because $k=1$, both $\mathfrak{h}_1$ and $\mathfrak{h}_1'$ must each be embedded into a single level 1 factor of the Kac--Moody subalgebra of $\mathcal{A}$, i.e.\ $\mathfrak{h}_1\hookrightarrow (\mathfrak{g}_i)_{1}$ and $\mathfrak{h}_1'\hookrightarrow (\mathfrak{g}_j)_{1}$, respectively. Let us start by considering the special case that $i=j$, i.e.\ the situation that they are embedded into the same factor $\mathfrak{g}_1:=(\mathfrak{g}_i)_{1}=(\mathfrak{g}_j)_{1}$. Exponentials of the zero-modes of the Noether currents in $\mathfrak{g}_{1}$ lift to Lie group symmetries of $\mathcal{A}$ which we could use to attempt to relate $\mathfrak{h}_1$ and $\mathfrak{h}_1'$. At the level of ordinary Lie algebras, this translates to the question of whether $\mathfrak{h},\mathfrak{h}'\subset \mathfrak{g}$ are equivalent subalgebras, in the sense of being taken into one another by the adjoint Lie group action of the exponential of $\mathfrak{g}$. If the answer is yes, then the cosets $\mathcal{A}\big/\mathfrak{h}_1$ and $\mathcal{A}\big/\mathfrak{h}_1'$ are isomorphic. To this end, we have the following proposition.

\begin{proposition}\label{prop:subalgebras}
Let $\mathfrak{g}$ be any simple Lie algebra such that $\mathfrak{g}_1\subset\mathcal{A}$ for some $c=24$ chiral algebra $\mathcal{A}$ with one primary operator (see Table \ref{tab:centralizers} for a superset of $\mathfrak{g}$ which arise in this way). Furthermore, let $\mathfrak{h},\mathfrak{h}'$ be Lie subalgebras of $\mathfrak{g}$ with embedding indices $x(\mathfrak{h}\hookrightarrow\mathfrak{g})=x(\mathfrak{h}'\hookrightarrow\mathfrak{g})=1$ such that $\mathfrak{h}$ and $\mathfrak{h}'$ are both isomorphic to one of $\A[1]$, $\G[2]$, $\F[4]$, or $\E[7]$. Then $\mathfrak{h}$ and $\mathfrak{h}'$ are equivalent subalgebras inside of $\mathfrak{g}$. 
\end{proposition}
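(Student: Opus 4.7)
The plan is to reduce the statement to a finite case-check by invoking Dynkin's classification of semisimple subalgebras of simple Lie algebras, supplemented by a general argument for the $\A[1]$ case. First, I would note that the hypothesis $\mathfrak{g}_1 \subset \mathcal{A}$ restricts $\mathfrak{g}$ to a finite, explicit list (those simple Lie algebras appearing as a level-one factor in the Kac--Moody subalgebra of some Schellekens theory), which can be read off from Table \ref{tab:centralizers}. This turns the proposition into the finite statement that, for each such $\mathfrak{g}$, each of the four candidate isomorphism types admits at most one conjugacy class of index-$1$ embeddings into $\mathfrak{g}$.

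For $\mathfrak{h}\cong\A[1]$, I would give a uniform argument. Normalizing the invariant form so that long roots have squared length $2$, the definition \eqref{embex} specializes to say that an index-$1$ embedding $\A[1]\hookrightarrow\mathfrak{g}$ is one for which the image of the coroot has the same length as a long-root coroot of $\mathfrak{g}$. Any such copy of $\A[1]$ is then generated by $\{E_\alpha, F_\alpha, H_\alpha\}$ for some long root $\alpha$ of $\mathfrak{g}$: this follows because an $\mathfrak{sl}_2$-triple in $\mathfrak{g}$ is determined up to conjugacy by its nilpotent orbit, and the minimal nilpotent orbit (corresponding to the long-root $\A[1]$) is the unique one giving index $1$. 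Since the Weyl group $W(\mathfrak{g})$ acts transitively on long roots and is realized by inner automorphisms (lifting to Lie group symmetries of $\mathcal{A}$), any two index-$1$ $\A[1]$ subalgebras are equivalent.

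For $\mathfrak{h}\cong \G[2], \F[4], \E[7]$, these exceptional subalgebras are extremely rigid: their embeddings are either regular (and then determined by a closed subsystem of the root system of $\mathfrak{g}$) or \emph{S}-subalgebras in Dynkin's sense, and in the latter case Dynkin's tables list them explicitly. I would walk through the short list of relevant $\mathfrak{g}$ from Table \ref{tab:centralizers} and verify in each case, by consulting Dynkin's classification, that the index-$1$ embedding of the given exceptional algebra is unique up to conjugacy. For instance, $\E[7]\hookrightarrow \E[8]$ with index $1$ is realized uniquely as the centralizer of a long-root $\A[1]$; $\F[4]\hookrightarrow \E[k]$ with index $1$ is uniquely realized for $k=6,7,8$; and $\G[2]$ embeds with index $1$ into each of the relevant $\mathfrak{g}$ in essentially one way, typically as the centralizer of an appropriate subalgebra or via a regular embedding dictated by the root system.

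The main obstacle is the tedium of the enumeration rather than any conceptual difficulty: one must be careful to distinguish index-$1$ embeddings from higher-index ones (e.g.\ the principal $\A[1]$ in $\mathfrak{g}$ has index equal to the dual Coxeter number and so is irrelevant here), and one must confirm that potentially distinct realizations (say as regular versus singular subalgebras) do not both occur at index $1$. In every case the conclusion follows from Dynkin's tables together with the long-root transitivity argument above, completing the verification.
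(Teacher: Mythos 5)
Your route differs from the paper's. The paper proceeds in two steps: it first uses the computer algorithms of de Graaf (in GAP) to verify that for each relevant $\mathfrak{g}$ there is at most one \emph{linear} equivalence class of index-$1$ subalgebras of each of the four isomorphism types, and then upgrades linear equivalence to genuine conjugacy by citing Minchenko's Theorems 3 and 4 (linear equivalence coincides with equivalence when $\mathfrak{g}\cong\A[n],\B[n],\C[n],\G[2],\F[4]$ or when $\mathfrak{h}\cong\A[1]$, and the leftover pairs $(\G[2],\D[n])$, $(\G[2],\E[n])$, $(\F[4],\E[n])$, $(\E[7],\E[8])$ are settled by checking, e.g., that the vector representation of $\D[n]$ restricted to $\G[2]$ contains the odd-dimensional orthogonal summand $\mathbf{7}$). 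Your uniform argument for $\mathfrak{h}\cong\A[1]$ --- index $1$ forces the minimal nilpotent orbit via Kostant's theorem, hence a long-root $\mathfrak{sl}_2$, and the Weyl group acts transitively on long roots by inner automorphisms --- is sound and arguably cleaner than the paper's treatment of that case, since it avoids both the computer search and the appeal to Minchenko's Theorem 3.

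For $\mathfrak{h}\cong\G[2],\F[4],\E[7]$, however, your plan has a genuine gap: ``consulting Dynkin's classification'' does not by itself deliver uniqueness up to conjugacy, because Dynkin's tables classify semisimple subalgebras up to \emph{linear} conjugacy, and his assertions about when linear conjugacy implies conjugacy are known to contain errors --- this is exactly the issue Minchenko's paper was written to repair. Your proposal acknowledges that one must ``confirm that potentially distinct realizations do not both occur at index $1$,'' but that phrasing conflates two different failure modes: even if there is a \emph{single} linear equivalence class of index-$1$ embeddings of, say, $\G[2]$ into $\D[n]$, it could a priori split into several genuine conjugacy classes, and no amount of table-reading of embeddings resolves this without a criterion like Minchenko's Theorem 4. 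To close the gap you would need to supply, for each of the pairs listed above, an argument that linear equivalence implies conjugacy (or exhibit the subalgebra as the fixed-point set or centralizer of something manifestly unique up to inner automorphism, as you sketch for $\E[7]\subset\E[8]$). As written, the exceptional cases of your proof rest on an unproved identification of linear conjugacy with conjugacy.
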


\noindent Before substantiating this proposition, we comment on its implications. Proposition \ref{prop:subalgebras} shows that if $\mathcal{A}$ is a $c=24$ theory with one primary, then any two cosets $\mathcal{A}\big/\mathfrak{h}_1$ and $\mathcal{A}\big/\mathfrak{h}'_1$ are isomorphic if $\mathfrak{h}_1, \mathfrak{h}_1'$ are both isomorphic to $\A[1,1]$, $\G[2,1]$, $\F[4,1]$, or $\E[7,1]$, and are both embedded into the same Kac--Moody factor $\mathfrak{g}_1\subset \mathcal{A}$. 

This leaves just the case that $\mathfrak{h}_1$ lies inside one factor $\mathfrak{g}_1\subset\mathcal{A}$, and $\mathfrak{h}_1'$ lies inside another factor $\mathfrak{g}'_1\subset\mathcal{A}$. In this case,  the cosets $\mathcal{A}\big/\mathfrak{h}_1$ and $\mathcal{A}\big/\mathfrak{h}_1'$ are isomorphic if there is an automorphism $X:\mathcal{A}\to\mathcal{A}$ which maps $\mathfrak{g}_1$ into $\mathfrak{g}_1'$. Indeed, $X$ maps $\mathfrak{h}_1$ to a subalgebra $X(\mathfrak{h}_1)\subset \mathfrak{g}_1'$, and then one could use the proposition above to produce an automorphism $Y$ which rotates $X(\mathfrak{h}_1)$ into $\mathfrak{h}_1'$. Whether such an $X$ exists can be determined using the (outer) automorphism groups computed in \cite{betsumiya2022automorphism,lamshimakura}. See Proposition \ref{prop:outerauts} in the main text.

\begin{proof}
First, we note that there is a different useful notion of equivalence of Lie subalgebras, called \emph{linear equivalence}. Two subalgebras $\mathfrak{h},\mathfrak{h}'\subset\mathfrak{g}$ are said to be linearly equivalent in $\mathfrak{g}$ if, for every representation $\rho:\mathfrak{g}\to \mathfrak{gl}(V)$, the subalgebras $\rho(\mathfrak{h}),\rho(\mathfrak{h}')\subset  \mathfrak{gl}(V)$ are conjugate under $\textsl{GL}(V)$. Now, let $\mathfrak{g}$ be a simple Lie algebra which appears in Table \ref{tab:centralizers}, and let $N$ be the number of linear equivalence classes of subalgebras $\mathfrak{h}$ such that $x(\mathfrak{h}\hookrightarrow\mathfrak{g})=1$ and $\mathfrak{h}$ is isomorphic to $\A[1], \G[2], \F[4]$, or $\E[7]$. Then, one can use the computer algorithms of  \cite{de2011constructing,GAP4} to conclude that $N=0$ or $N=1$.

Now, two subalgebras which are equivalent are necessarily linearly equivalent, but the converse is not necessarily true in general. According to Theorem 3 of \cite{minchenko2006semisimple}, if $\mathfrak{g}\cong \A[n],\B[n],\C[n],\G[2],\F[4]$ or if $\mathfrak{h}\cong \A[1]$, then the notions of linear equivalence and ordinary equivalence coincide, and so the considerations of the previous paragraph show that the proposition holds in these cases. The remaining cases which we must check are the following, 
\begin{align}\label{eqn:exceptioncases}
    (\mathfrak{h},\mathfrak{g}) \in  \{(\G[2],\D[n]),(\G[2],\E[n]), (\F[4],\E[n]), (\E[7],\E[8])\}.
\end{align}
Theorem 4 of \cite{minchenko2006semisimple} explains that if the fundamental representation of $\D[n]$, when decomposed into representations of $\mathfrak{h}\subset \D[n]$, contains an odd-dimensional orthogonal subrepresentation, then the notions of linear equivalence and equivalence coincide. This is true in the case that $\mathfrak{h}\cong \G[2]$, because the fundamental representation of $\D[n]$ decomposes as $\mathbf{7}\oplus \mathbf{1}\oplus\cdots\oplus\mathbf{1}$, and $\mathbf{7}$ is an orthogonal representation according to Table 1 of \cite{bourbaki2008lie}. One similarly concludes that there is a unique equivalence class of subalgebras with embedding index 1 in the remaining cases in Eq.\ \eqref{eqn:exceptioncases} by consulting the various tables in \cite{minchenko2006semisimple}.
\end{proof}

Table \ref{tab:centralizers} lists centralizers of each of the subalgebras appearing in Proposition \ref{prop:subalgebras}, as well as the embedding indices of each of their simple factors. These were computed using Gap \cite{GAP4}.

\begin{table}[]
    \centering
    \begin{tabular}{cc}
    \begin{tabular}[t]{c|c|c|c|c}
        $\mathfrak{g}\backslash\mathfrak{h}^{(1)}$ & $\A[1]^{(1)}$ & $\G[2]^{(1)}$ & $\F[4]^{(1)}$ & $\E[7]^{(1)}$ \\\midrule
        $\A[1]$ & 0 & -- & -- & --\\
$\A[2]$ & $\mathsf{U}_1$ & -- & -- & --\\
$\A[3]$ & $\A[1]^{(1)}\mathsf{U}_1$ & -- & -- & --\\
$\A[4]$ & $\A[2]^{(1)}\mathsf{U}_1$ & -- & -- & --\\
$\A[5]$ & $\A[3]^{(1)}\mathsf{U}_1$ & -- & -- & --\\
$\A[6]$ & $\A[4]^{(1)}\mathsf{U}_1$ & -- & -- & --\\
$\A[7]$ & $\A[5]^{(1)}\mathsf{U}_1$ & -- & -- & --\\
$\A[8]$ & $\A[6]^{(1)}\mathsf{U}_1$ & -- & -- & --\\
$\A[9]$ & $\A[7]^{(1)}\mathsf{U}_1$ & -- & -- & --\\
$\A[10]$ & $\A[8]^{(1)}\mathsf{U}_1$ & -- & -- & --\\
$\A[11]$ & $\A[9]^{(1)}\mathsf{U}_1$ & -- & -- & --\\
$\A[12]$ & $\A[10]^{(1)}\mathsf{U}_1$ & -- & -- & --\\
$\A[15]$ & $\A[13]^{(1)}\mathsf{U}_1$ & -- & -- & --\\
$\A[17]$ & $\A[15]^{(1)}\mathsf{U}_1$ & -- & -- & --\\
$\A[24]$ & $\A[22]^{(1)}\mathsf{U}_1$ & -- & -- & --\\
$\B[2]$ & $\A[1]^{(1)}$ & -- & -- & --\\
$\B[3]$ & $\A[1]^{(2)}\A[1]^{(1)}$ & 0 & -- & --\\
$\B[4]$ & $\B[2]^{(1)}\A[1]^{(1)}$ & $\mathsf{U}_1$ & -- & --\\
$\B[5]$ & $\B[3]^{(1)}\A[1]^{(1)}$ & $\A[1]^{(1)}\A[1]^{(1)}$ & -- & --\\
$\B[6]$ & $\B[4]^{(1)}\A[1]^{(1)}$ & $\A[3]^{(1)}$ & -- & --\\
$\B[7]$ & $\B[5]^{(1)}\A[1]^{(1)}$ & $\D[4]^{(1)}$ & -- & --\\
$\B[8]$ & $\B[6]^{(1)}\A[1]^{(1)}$ & $\D[5]^{(1)}$ & -- & --
    \end{tabular}
    &
    \begin{tabular}[t]{c|c|c|c|c}
        $\mathfrak{g}\backslash\mathfrak{h}^{(1)}$ & $\A[1]^{(1)}$ & $\G[2]^{(1)}$ & $\F[4]^{(1)}$ & $\E[7]^{(1)}$ \\\midrule
        $\C[3]$ & $\B[2]^{(1)}$ & -- & -- & --\\
$\C[4]$ & $\C[3]^{(1)}$ & -- & -- & --\\
$\C[5]$ & $\C[4]^{(1)}$ & -- & -- & --\\
$\C[6]$ & $\C[5]^{(1)}$ & -- & -- & --\\
$\C[7]$ & $\C[6]^{(1)}$ & -- & -- & --\\
$\C[8]$ & $\C[7]^{(1)}$ & -- & -- & --\\
$\C[10]$ & $\C[9]^{(1)}$ & -- & -- & --\\
$\D[4]$ & $\A[1]^{(1)}\A[1]^{(1)}\A[1]^{(1)}$ & 0 & -- & --\\
$\D[5]$ & $\A[3]^{(1)}\A[1]^{(1)}$ & $\A[1]^{(2)}$ & -- & --\\
$\D[6]$ & $\D[4]^{(1)}\A[1]^{(1)}$ & $\B[2]^{(1)}$ & -- & --\\
$\D[7]$ & $\D[5]^{(1)}\A[1]^{(1)}$ & $\B[3]^{(1)}$ & -- & --\\
$\D[8]$ & $\D[6]^{(1)}\A[1]^{(1)}$ & $\B[4]^{(1)}$ & -- & --\\
$\D[9]$ & $\D[7]^{(1)}\A[1]^{(1)}$ & $\B[5]^{(1)}$ & -- & --\\
$\D[10]$ & $\D[8]^{(1)}\A[1]^{(1)}$ & $\B[6]^{(1)}$ & -- & --\\
$\D[12]$ & $\D[10]^{(1)}\A[1]^{(1)}$ & $\B[8]^{(1)}$ & -- & --\\
$\D[16]$ & $\D[14]^{(1)}\A[1]^{(1)}$ & $\B[12]^{(1)}$ & -- & --\\
$\D[24]$ & $\D[22]^{(1)}\A[1]^{(1)}$ & $\B[20]^{(1)}$ & -- & --\\
$\E[6]$ & $\A[5]^{(1)}$ & $\A[2]^{(2)}$ & 0 & --\\
$\E[7]$ & $\D[6]^{(1)}$ & $\C[3]^{(1)}$ & $\A[1]^{(3)}$ & 0\\
$\E[8]$ & $\E[7]^{(1)}$ & $\F[4]^{(1)}$ & $\G[2]^{(1)}$ & $\A[1]^{(1)}$\\
$\F[4]$ & $\C[3]^{(1)}$ & $\A[1]^{(8)}$ & 0 & --\\
$\G[2]$ & $\A[1]^{(3)}$ & 0 & -- & --
    \end{tabular}
    \tabularnewline
    \end{tabular}
    \caption{Centralizers of $\A[1]^{(1)}$, $\G[2]^{(1)}$, $\F[4]^{(1)}$, and $\E[7]^{(1)}$ in simple Lie algebras with small rank. Each entry is $\textsl{C}_{\mathfrak{g}}(\mathfrak{h})$ where $\mathfrak{h}\subset\mathfrak{g}$ is the unique subalgebra (up to equivalence) with embedding index $x(\mathfrak{h}\hookrightarrow\mathfrak{g})=1$. The notation $\mathsf{X}_r^{(x)}$ indicates that the rank $r$ simple Lie algebra $\mathsf{X}_r$ sits inside $\mathfrak{g}$ with embedding index $x$. For example, the centralizer of $\mathsf{G}_2$ inside $\mathsf{F}_4$ (embedded with index 1) is $\mathsf{A}_1$, which sits inside of $\F[4]$ with index $8$. A dash -- indicates that there is no embedding $\mathfrak{h}\hookrightarrow\mathfrak{g}$ with index 1, and $0$ indicates that the embedding has a trivial centralizer.}
    \label{tab:centralizers}
\end{table}

\clearpage

\section{Holomorphic and Skew-Holomorphic Jacobi Forms}\label{app:penumbral}

In \S\ref{subsec:modulardata}, we classified two-dimensional admissible representations of $\SL_2(\IZ)$, and in \S\ref{subsec:admissiblecharacters}, we computed vector-valued modular forms for these representations using modular linear differential equations. In this appendix, we telegraphically describe an alternative method for computing vector-valued modular forms for the $\A$-type representations, i.e.\ the representations 
\begin{align}
    \varrho_{\A}, ~\omega \varrho_{\A}, ~\omega^2\varrho_{\A}, ~\varrho_{\A}^\ast, ~\omega\varrho_{\A}^\ast, ~\omega^2\varrho_{\A}^\ast
\end{align}
where $\varrho_{\A}$ is defined in Eq.\ \eqref{eqn:modularrepAG}. The key will be to relate vector-valued modular forms for the representations $\omega^n \varrho_{\A}^\ast$ to holomorphic Jacobi forms of index $1$ \cite{eichler1985theory}, and vector-valued modular forms for the representations $\omega^n\varrho_{\A}$ to skew-holomorphic Jacobi forms of index $1$ \cite{skoruppa1989developments}, both of which have been more or less classified. Part of our motivation for highlighting this connection is to facilitate future comparisons of theories with two primaries to Umbral Moonshine \cite{Cheng:2012tq,cheng2014umbral} and Penumbral Moonshine \cite{harvey2015traces,duncan2021overview,duncan2022modular,duncan2022two} (though it is ``mock'' \cite{zwegers2002mock} Jacobi forms which appear in the former setting, and so one should expect a structure more subtle than ordinary unitary RCFT to categorify the functions there, see e.g.\ \cite{duncan2017umbral,anagiannis2019k3,Cheng:2022npj}).

We follow the exposition of \cite{duncan2021overview,cheng2016optimal} closely. First, we say that a function $\varphi:\mathbb{H}\times \IC\to\IC$ is an \emph{elliptic form of index} $m$ if $\varphi$ is holomorphic, and if 
\begin{align}\label{eqn:elliptictransformation}
    \varphi(\tau,z+\lambda\tau+\mu) = e^{-2\pi i (m\lambda^2\tau +2m\lambda z)}\varphi(\tau,z)
\end{align}
for any integers $\lambda,\mu\in\mathbb{Z}$.
Any elliptic form admits a \emph{theta-decomposition} of the form 
\begin{align}\label{eqn:thetadecomposition}
    \varphi(\tau,z) = \sum_{r~\mathrm{mod}~2m} h_r(\tau) \theta_{m,r}(\tau,z)
\end{align}
where $h_r(\tau)$ are arbitrary functions, and 
\begin{align}
    \theta_{m,r}(\tau,z) = \sum_{s\equiv r~\mathrm{mod}~2m} q^{s^2/4m}e^{2\pi i z s}.
\end{align}
To see this, note that taking $\lambda=0$ and $\mu=1$ in Eq.\ \eqref{eqn:elliptictransformation} implies that $\varphi$ is periodic under $z\mapsto  z+1$, and hence admits a Fourier expansion of the form 
\begin{align}
    \varphi(\tau,z) = \sum_{l\in\mathbb{Z}} b_l(\tau) e^{2\pi i l z}.
\end{align}
Imposing the  $\lambda=1$, $\mu=0$ transformation on this Fourier expansion shows that
\begin{align}
   \sum_{l\in\mathbb{Z}} (e^{2\pi i l \tau} b_l(\tau))e^{2\pi i l z} = e^{-2\pi i m\tau}\sum_{l\in\mathbb{Z}} b_l(\tau)e^{2\pi i (l-2m) z}
\end{align}
which implies that $h_r(\tau)\equiv b_r(\tau)q^{-r^2/4m}$ only depends on the value of $r$ modulo $2m$.

We further say that $\varphi$ is a holomorphic Jacobi form of weight $k$ and index $m$ if it is an elliptic form of index $m$ with holomorphic theta coefficients $h_r(\tau)$ which remain bounded as $\tau\to i\infty$, and  if it satisfies
\begin{align}
    \varphi\left( \frac{a\tau+b}{c\tau+d},\frac{z}{c\tau+d}  \right) = \exp\left( 2\pi i \frac{cmz^2}{c\tau+d}\right)(c\tau+d)^k \varphi(\tau,z).
\end{align}
Similarly, we say that $\varphi$ is a skew-holomorphic Jacobi form of weight $k$ and index $m$ if it is an elliptic form of index $m$ with  anti-holomorphic theta coefficients $h_r(\tau) = \overline{g_r(\tau)}$ which remain bounded as $\tau\to i\infty$, and if it satisfies 
\begin{align}
    \varphi\left( \frac{a\tau+b}{c\tau+d}, \frac{z}{c\tau+d}  \right)= \exp\left( 2\pi i \frac{cmz^2}{c\tau+d}\right)|c\tau+d|(c\bar\tau+d)^{k-1} \varphi(\tau,z).
\end{align}
It will be useful for us to note that the holomorphic and skew-holomorphic Jacobi forms of index 1 have been classified. For example, it is known that the space $J_k$ of holomorphic Jacobi forms of weight $k$ and index $1$ takes the form
\begin{align}\label{eqn:holspace}
    J_k = M_{k-4}(\SL_2(\IZ)) \cdot E_{4,1}(\tau,z) \oplus M_{k-6}(\SL_2(\IZ)) \cdot E_{6,1}(\tau,z)
\end{align}
where $M_k(\SL_2(\IZ))$ is the space of ordinary holomorphic modular forms of weight $k$, and $E_{k,1}(\tau,z)$ is a Jacobi Eisenstein series, which in general is defined as
\begin{align}
\begin{split}
    &E_{k,m}(\tau,z)= \\
    & \ \ \ \ \ \ \frac12\sum_{\substack{c,d\in\mathbb{Z}\\ (c,d)=1}}\sum_{\lambda\in\mathbb{Z}}(c\tau+d)^{-k}\exp 2\pi i m\left( \lambda^2\frac{a\tau+b}{c\tau+d}+2\lambda\frac{z}{c\tau+d}-\frac{cz^2}{c\tau+d}\right).
\end{split}
\end{align}
(See \cite{eichler1985theory} for the Fourier expansion.) Meanwhile, the space $J_k^{\mathrm{sk}}$ of skew-holomorphic Jacobi forms of weight $k$ and index $1$ is 
\begin{align}\label{eqn:skewspace}
    J_k^{\mathrm{sk}} = M_{k-1}(\SL_2(\IZ))\cdot T(\tau,z) \oplus M_{k-3}(\SL_2(\IZ))\cdot U(\tau,z) 
\end{align}
where 
\begin{align}
\begin{split}
    T(\tau,z) &= \sum_{r~\mathrm{mod}~2m}\overline{\theta_{1,r}(\tau,0)}\theta_{1,r}(\tau,z) \\
    U(\tau,z)&=\frac{12}{\pi i}\frac{\partial T}{\partial\bar\tau}(\tau,z)+\overline{E_2(\tau)}T(\tau,z).
\end{split}
\end{align}

Because of the modular transformation properties of the theta functions $\theta_{m,r}(\tau,z)$, it follows that the theta-coefficients $h_r(\tau)$ of a holomorphic Jacobi form constitute a holomorphic vector-valued modular form of weight $k-\frac12$ which transforms as
\begin{align}
    h_r\left( \frac{a\tau+b}{c\tau+d}  \right)= (c\tau+d)^{k-\frac12}\sum_{s~\mathrm{mod}~2m} \varrho^\ast_{\sqrt{2m}\mathbb{Z}}\left(\begin{smallmatrix} a & b \\ c & d \end{smallmatrix}\right)_{r,s} h_s(\tau).
\end{align}
Similarly, the complex-conjugates $g_r(\tau) =\overline{h_r(\tau)}$ of the theta-coefficients of a skew-holomorphic Jacobi form constitute a holomorphic vector-valued modular form of weight $k-\frac12$ which transforms as 
\begin{align}
    g_r\left( \frac{a\tau+b}{c\tau+d}  \right)= (c\tau+d)^{k-\frac12}\sum_{s~\mathrm{mod}~2m} \varrho_{\sqrt{2m}\mathbb{Z}}\left(\begin{smallmatrix} a & b \\ c & d \end{smallmatrix}\right)_{r,s} g_s(\tau).
\end{align}
Here, $\varrho_{\sqrt{2m}\mathbb{Z}}$ is a \emph{projective} representation of $\SL_2(\IZ)$ defined on the generators as 
\begin{align}
    \varrho_{\sqrt{2m}\mathbb{Z}}(T)_{r,s} = e^{\pi i r^2/2m}\delta_{r,s}, \ \ \ \ \varrho_{\sqrt{2m}\mathbb{Z}}(S)_{r,s} = \frac{e^{-\pi i/4 }}{\sqrt{2m}}e^{-\pi i rs/m}.
\end{align}
It is a special case (corresponding to $L=\sqrt{2m}\mathbb{Z}$) of a family of representations $\varrho_L$ known as the Weil representations, which can be associated to any even lattice $L$.

Thus, any skew-holomorphic/holomorphic Jacobi form defines a vector-valued modular form transforming under the Weil representation or its conjugate, and conversely every such vector-valued modular form determines a holomorphic/skew-holomorphic Jacobi form through Eq.\ \eqref{eqn:thetadecomposition}. 

In the case that $m=1$, the Weil representation is a two-dimensional projective representation, and we can ask how it is related to the two-dimensional admissible representations that we classified. The key fact is that 
\begin{align}
    \varrho_{\A} = \varrho_{\sqrt{2m}\mathbb{Z}}/\epsilon
\end{align}
where $\epsilon:\SL_2(\IZ)\to\mathbb{C}^\ast$ is the projective representation describing how the Dedekind eta function transforms. Specifically, it is defined through 
\begin{align}
    \epsilon(T) = e^{\pi i/12}, \ \ \ \ \epsilon(S) = \sqrt{-i}.
\end{align}
(Note that $\epsilon^2 = \zeta$ which was defined in Eq.\ \eqref{eqn:generatoronedimensional}.) Thus, this shows that the most general quasi-character transforming under the representation $\omega^n\varrho_{\A}$ for $n=0,1,2$ with central charge $c=8n+24q+1$ can be expressed as
\begin{align}
   \chi(\tau)= \frac{g(\tau)}{\eta(\tau)^{8n+24q+1}}
\end{align}
where $g(\tau)$ is the holomorphic vector-valued modular form associated to a skew-holomorphic Jacobi form of weight $4n+12q+1$, the most general of which can be determined from Eq.\ \eqref{eqn:skewspace}. Likewise, the most general quasi-character transforming under the representation $\omega^n\varrho_{\A}^\ast$ for $n=0,1,2$ with central charge $c=8n+24q-1$ is 
\begin{align}
    \chi(\tau) = \frac{h(\tau)}{\eta(\tau)^{8n+24q-1}}
\end{align}
where $h(\tau)$ is the vector-valued modular form associated to a holomorphic Jacobi form of weight $4n+12q$, the most general of which can be determined from Eq.\ \eqref{eqn:holspace}. 

As an illustrative example, we have claimed that the characters of $\E[7,1]$ transform covariantly with respect to the representation $\omega\varrho_{\A}^\ast$, and so should be expressible as 
\begin{align}\label{eqn:e7jacobi}
    \chi_{\E[7,1]}(\tau)=h(\tau)/\eta(\tau)^7
\end{align}
where $h(\tau)$ are the theta-coefficients of a holomorphic Jacobi form of weight $4$. The unique weight 4 Jacobi form is $E_{4,1}(\tau,z)$, and one can check that the function one obtains from Eq.\ \eqref{eqn:e7jacobi} agrees with the one reported in Table \ref{tab:healthycharacters} to low orders in the $q$-expansion.

\bibliographystyle{JHEP}

\bibliography{cl24}

\end{document}